\newtheorem{definition}{Definition}
\newtheorem{theorem}{Theorem}
\newtheorem{lemma}[theorem]{Lemma}
\newtheorem{corollary}[theorem]{Corollary}
\newtheorem{proposition}{Proposition}
\newtheorem{example}{Example}
\begin{document}

\title{Feedback Capacity of the Compound Channel}


\author{Brooke~Shrader,~\IEEEmembership{Student Member, IEEE}
        and~Haim~Permuter,~\IEEEmembership{Student Member, IEEE}
}

\markboth{Submitted to IEEE Transactions on Information Theory,
2007.}{Shell \MakeLowercase{\textit{et al.}}: Feedback Capacity of
the Compound Finite State Channel}


\maketitle

\begin{abstract}
In this work we find the capacity of a compound finite-state channel
with time-invariant deterministic feedback. The model we consider
involves the use of fixed length block codes.
Our achievability result includes a proof of the existence of a
universal decoder for the family of finite-state channels with
feedback. As a consequence of our capacity result, we show that
feedback does not increase the capacity of the compound
Gilbert-Elliot channel. Additionally, we show that for a stationary
and uniformly ergodic Markovian channel, if the compound channel
capacity is zero without feedback then it is zero with feedback.
Finally, we use our result on the finite-state channel to show that
the feedback capacity of the memoryless compound channel is given by
$\inf_{\theta} \max_{Q_X} I(X;Y|\theta)$.
\end{abstract}

\begin{keywords}
compound channel, feedback capacity, finite state channel, directed
information, causal conditioning probability, Gilbert-Elliot
channel, universal decoder, code-trees, types of code-trees, Sanov's
theorem, Pinsker's inequality
\end{keywords}

\section{Introduction}

The compound channel consists of a set of channels indexed by
$\theta\in \Theta$ with the same input and output alphabets but
different conditional probabilities. In the setting of the compound
channel only one actual channel $\theta$ is used in all
transmissions. The transmitter and the receiver know the family of
channels but they have no prior knowledge of which channel is
actually used. There is no distribution law on the family of
channels and the communication has to be reliable for all channels
in the family.

Blackwell et al. \cite{Blackwell59Compound} and independently
Wolfowitz \cite{Wolfowitz59} showed that the capacity of a compound
channel consisting of memoryless channels only, and without
feedback, is given by
\begin{equation} \label{eqn:CapacityMemorylessCompoundNoFB}
\max_{Q_X} \inf_{\theta } \mathcal I(Q_X;P_{Y|X,\theta}),
\end{equation}
where $Q_X(\cdot)$ denotes the input distribution to the channel,
$P_{Y|X,\theta}(\cdot|\cdot,\theta)$ denotes the conditional
probability of a memoryless channel indexed by $\theta$, and the
notation $\mathcal I(Q_X;P_{Y|X,\theta})$ denotes the mutual
information of channel $P_{Y|X,\theta}$ for the input distribution
$Q_X$, i.e.,
\begin{equation}
\mathcal I(Q_X;P_{Y |X, \theta }) \triangleq
\sum_{x,y}Q_X(x)P_{Y|X,\theta}(y|x,\theta)\ln
\frac{P_{Y|X,\theta}(y|x,\theta)}{\sum_{x'}Q_X(x')P_{Y|X,\theta}(y|x',\theta)}.
\label{eqn:mutual_info}
\end{equation}
The capacity in (\ref{eqn:CapacityMemorylessCompoundNoFB}) is in
general less than the capacity of every channel in the family.
Wolfowitz, who coined the term ``compound channel,'' showed that if
the transmitter knows the channel $\theta$ in use, then the capacity
is given by \cite[chapter 4]{Wolfowitz64}
\begin{equation}
\inf_{\theta } \max_{Q_X} \mathcal I(Q_X;P_{Y |X, \theta }) =
\inf_{\theta} C_\theta, \label{eqn:CFBDMC}
\end{equation}
where $C_\theta$ is the capacity of the channel indexed by $\theta$.
This shows that knowledge at the transmitter of the channel $\theta$
in use helps in that the infimum of the capacities of the channels
in the family can now be achieved. In the case that $\Theta$ is a
finite set, then it follows from Wolfowitz's result that
$\min_{\theta} C_{\theta}$ is the feedback capacity of the
memoryless compound channel, since the transmitter can use a
training sequence together with the feedback to estimate $\theta$
with high probability. In this paper we show that when $\Theta$ is
not limited to finite cardinality, the feedback capacity of the
memoryless compound channel is given by $\inf_{\theta} C_{\theta}$.
One might be tempted to think that for a compound channel with
memory, feedback provides a means to achieve the infimum of the
capacities of the channels in the family. However this is not
necessarily true, as we show in Example \ref{example:compoundGE},
which is taken from \cite{LapidothTelatar98} and applied to the
compound Gilbert-Elliot channel with feedback. That example is found
in Section \ref{section:compoundGE}.

A comprehensive review of the compound channel and its role in
communication is given by Lapidoth and Narayan
\cite{Lapidoth98Narayan}. Of specific interest in this paper are
compound channels with memory which are often used to model wireless
communication in the presence of fading
\cite{Gallager68,goldsmith96capacity,Mushkin89}. Lapidoth and
Telatar \cite{LapidothTelatar98} derived the following formula for
the compound channel capacity of the class of finite state channels
(FSC) when there is no feedback available at the transmitter.
\begin{equation}\label{eqn::capacity_no_feedback}
\lim_{n\to \infty} \max_{Q_{X^n}} \inf_{s_o,\theta} \frac{1}{n}
\mathcal I(Q_{X^n};P_{Y^n|X^n,s_0,\theta}),
\end{equation}
where $s_0$ denotes the initial state of the FSC, and
$Q_{X^n}(\cdot)$ and
$P_{Y^n|X^n,s_0,\theta}(\cdot|\cdot,s_0,\theta)$ denote the input
distribution and channel conditional probability for block length
$n$. Lapidoth and Telatar's achievability result makes use of a
universal decoder for the family of finite-state channels. The
existence of the universal decoder is proved by Feder and Lapidoth
in \cite{FederLapidoth98} by merging a finite number of
maximum-likelihood decoders, each tuned to a channel in the family
$\Theta$.

Throughout this paper we use the concepts of causal conditioning and
directed information which were introduced by Massey in
\cite{Massey90}. Kramer extended those concepts and used them in
\cite{Kramer03} to characterize the capacity of discrete memoryless
networks. Subsequently, three different proofs  -- Tatikonda and
Mitter \cite{Tatikonda00,Tatikonda06}, Permuter, Weissman and
Goldsmith \cite{Permuter06_feedback_submit} and Kim
\cite{Kim07_feedback} -- have shown that directed information and
causal conditioning are useful in characterizing the feedback
capacity of a point-to-point channel with memory. In particular,
this work uses results from \cite{Permuter06_feedback_submit} that
show that Gallager's \cite[ch. 4,5]{Gallager68} upper and lower
bound on capacity of a FSC can be generalized to the case that there
is a time-invariant deterministic feedback, $z_{i-1}=f(y_{i-1})$,
available at the encoder at time $i$.


In this paper we extend Lapidoth and Telatar's work for the case
that there is deterministic time-invariant feedback available at the
encoder by replacing the regular conditioning with the causal
conditioning. Then we use the feedback capacity theorem to study the
compound Gilbert-Elliot channel and the memoryless compound channel
and to specify a class of compound channels for which the capacity
is zero if and only if the feedback capacity is zero. The proof of
the feedback capacity of the FSC is found in Section
\ref{section:converse}, which describes the converse result, and
Section \ref{section:achievability}, where we prove achievability.
As a consequence of the capacity result, we show in Section
\ref{section:compoundGE} that feedback does not increase the
capacity of the compound Gilbert-Elliot channel. We next show in
Section \ref{section:IFF} that for a family of stationary and
uniformly ergodic Markovian channels, the capacity of the compound
channel is positive if and only if the feedback capacity of the
compound channel is positive. Finally, we return to the memoryless
compound channel in Section \ref{section:memorylesscompound} and
make use of our capacity result to provide a proof of the feedback
capacity. \footnote{Although Wolfowitz mentions the feedback problem
in discussing the memoryless compound channel \cite[ch.
4]{Wolfowitz64}, to the best of our knowledge, this result has not
been proved in any previous work.}

The notation we use throughout is as follows. A capital letter $X$
denotes a random variable and a lower-case letter, $x$, denotes a
realization of the random variable. Vectors are denoted using
subscripts and superscripts, $x^n = (x_1,\hdots,x_n)$ and
$x_i^n=(x_i,\hdots,x_n)$. We deal with discrete random variables
where a probability mass function on the channel input is denoted
$Q_{X^n}(x^n)=\mbox{Pr}(X^n = x^n)$ and
$P_{Y^n|X^n,\theta}(y^n|x^n,\theta)=\mbox{Pr}(Y^n=y^n|X^n=x^n,\theta)$
denotes a mass function on the channel output. When no confusion can
result,  we will omit subscripts from the probability functions,
i.e., $Q(x_i|x^{i-1},y^{i-1})$ will denote
$Q_{X_i|X^{i-1},Y^{i-1}}(x_i|x^{i-1},y^{i-1})$.

\section{Problem statement and main result}

The problem we consider is depicted in Figure \ref{fig:1}. A message
$W$ from the set $\{1,2,\hdots,e^{nR}\}$ is to be transmitted over a
compound finite state channel with time-invariant deterministic
feedback. The family $\Theta$ of finite state channels has a common
state space ${\cal S}$ and common finite input and output alphabets
given by ${\cal X}$ and ${\cal Y}$. For a given channel $\theta \in
\Theta$ the channel output at time $i$ is characterized by the
conditional probability
\begin{equation}
P(y_i, s_i | x_i, s_{i-1}, \theta), \quad y_i \in {\cal Y}, x_i \in
{\cal X}, s_i, s_{i-1} \in {\cal S}.
\end{equation}
which satisfies the condition $P(y_i, s_i | x^i, s^{i-1}, y^{i-1},
\theta) = P(y_i, s_i | x_i, s_{i-1}, \theta)$. The channel $\theta$
is in use over the sequence of $n$ channel inputs. The family
$\Theta$ of channels is known to both the encoder and decoder,
however, they do not have knowledge of the channel $\theta$ in use
before transmission begins.

\begin{figure}{
\psfrag{v1\r}[][][0.8]{$m$}\psfrag{w1\r}[][][0.8]{Message}
 \psfrag{u1\r}[][][0.8]{Encoder}
\psfrag{d1\r}[][][0.7]{$x_i(w,z^{i{-}1})$}
\psfrag{v2\r}[][][0.8]{$x_i$} \psfrag{w2\r}[][][0.8]{$$}
 \psfrag{u2\r}[][][0.8]{}
\psfrag{d2\r}[][][0.7]{$P(y_i,s_i|x_i,s_{i{-}1},\theta)$}
\psfrag{v3\r}[][][0.8]{$y_i$}
\psfrag{w3\r}[][][0.8]{$$}\psfrag{v1\r}[][][0.8]{$w$}
\psfrag{u3\r}[][][0.8]{Decoder} \psfrag{d3\r}[][][0.8]{$\hat
w(y^n)$} \psfrag{u5\r}[][][0.8]{Feedback Generator}
\psfrag{d5\r}[][][0.8]{$z_i(y_{i})$} \psfrag{v4\r}[][][0.8]{$z_i$}
\psfrag{w4\r}[][][0.8]{$$} \psfrag{u4\r}[][][0.8]{Unit Delay}
\psfrag{v5\r}[][][0.8]{$z_{i-1}$}
 \psfrag{w5\r}[][][0.8]{$$}
\psfrag{v6\r}[][][0.8]{$\hat w$}\psfrag{w6a\r}[][][0.8]{Estimated}
\psfrag{w6b\r}[][][0.8]{Message}
\centerline{\includegraphics[width=9cm]{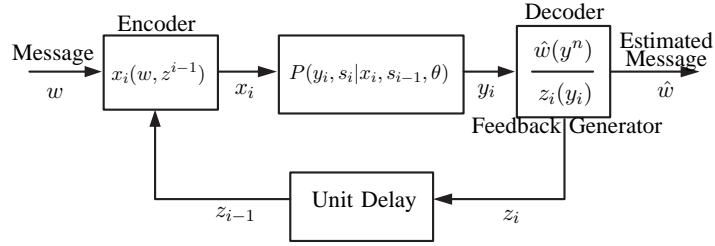}}
\caption{Compound finite state channel with feedback that is a
time-invariant deterministic function of the channel output.}
\label{fig:1} }\end{figure}

The message $W$ is encoded such that at time $i$ the codeword symbol
$X_i$ is a function of $W$ and the feedback sequence $Z^{i-1}$. For
notational convenience, we will refer to the input sequence
$X^i(W,Z^{i-1})$ as simply $X^i$. The feedback sequence is a
time-invariant deterministic function of the output $Y_i$ and is
available at the encoder with a single time unit delay. The function
performed on the channel output $Y_i$ to form the feedback $Z_i$ is
known to both the transmitter and receiver before communication
begins. The decoder operates over the sequence of channel outputs
$Y^n$ to form the message estimate $\hat{W}$.

For a given initial state $s_0 \in {\cal S}$ and channel $\theta \in
\Theta$, the channel causal conditioning distribution is given by
\begin{equation}
P(y^n || x^{n},s_0,\theta) \triangleq \prod_{i=1}^{n}
P(y_i|x^{i},y^{i-1},s_0,\theta).
\end{equation}
Additionally we will make use of Massey's directed information
\cite{Massey90}. When conditioned on the initial state and channel,
the directed information is given by
\begin{equation}
I(X^n \to Y^n | s_0,\theta) = \sum_{i=1}^{n}
I(Y_i;X^i|Y^{i-1},s_0,\theta).
\end{equation}
Our capacity result will involve a maximization of the directed
information over the input distribution $Q(x^n||z^{n-1})$ which is
defined as
\begin{equation}
Q(x^n||z^{n-1})\triangleq \prod_{i=1}^n Q(x_i|x^{i-1},z^{i-1}).
\end{equation}
We make use of some of the properties provided in
\cite{Massey90,Permuter06_feedback_submit} in our work, including
the following three which we restate for our problem setting.
\begin{enumerate}
\item $P(x^n,y^n|s_0,\theta)=Q(x^n||y^{n-1})P(y^{n}||x^n, s_0,\theta)$ \cite[eq. (3)]{Massey90} \cite[Lemma
1]{Permuter06_feedback_submit} \label{HaimChainRuleCausalCond}
\item $|I(X^n \to Y^n|\theta) - I(X^n \to Y^n|S,\theta)| \leq \log |{\cal S}|$, where random variable $S$ denotes the state of the finite-state channel \cite[Lemma 5]{Permuter06_feedback_submit}
\item From \cite[Lemma 6]{Permuter06_feedback_submit} \label{HaimDirectedInfoEquivI},
\begin{eqnarray*}
I(X^n \to Y^n | s_0,\theta ) & = & \mathcal
I(Q_{X^n||Y^{n-1}};P_{Y^n ||X^n,s_0,\theta})\\
 & = & \sum_{x^n,y^n}Q(x^n||y^{n-1}) P(y^n||x^n,s_0,\theta)\ln
\frac{P(y^n||x^n,s_0,\theta)}{\sum_{x'^{n}}Q(x'^{n})P(y^n||x'^{n},s_0,\theta)}
 \end{eqnarray*}
\end{enumerate}
Note that properties \ref{HaimChainRuleCausalCond}) and
\ref{HaimDirectedInfoEquivI}) hold since
$Q(x^n||y^{n-1},s_0,\theta)=Q(x^n||y^{n-1})$ for our feedback
setting, where it is assumed that the state $s_0$ is not available
at the encoder.

For a given initial state $s_0$ and channel $\theta$ the average
probability of error in decoding message $w$ is given by
\begin{equation*}
P_{e,w}(s_0,\theta) = \sum_{y^n \in {\cal Y}^n: \hat{w} \neq w}
P(y^n||x^n,s_0,\theta),
\end{equation*}
where $x^n$ is a function of the message $w$ and of the feedback
$z^{n-1}$. The average (over messages) error probability is denoted
$P_e(s_0,\theta)$, where $P_e(s_0,\theta)=1/e^{nR} \sum_{w}
P_{e,w}(s_0,\theta)$. We say that a rate $R$ is achievable for the
compound channel with feedback as shown in Figure \ref{fig:1}, if
for any $\epsilon >0$ there exists a code of fixed blocklength $n$
and rate $R$, i.e. $(n,e^{nR})$, such that $P_e(s_0,\theta) <
\epsilon$ for all $\theta \in \Theta$ and $s_0 \in {\cal S}$.
Equivalently, rate $R$ is achievable if there exists a sequence of
rate-$R$ codes such that
\begin{equation}
\lim_{n\to \infty} \sup_{s_0,\theta} P_e(s_0,\theta)=0.
\end{equation}
This definition of achievable rate is identical to that given in
previous work on the compound channel without feedback. A different
definition for the compound channel with feedback could also be
considered; for instance, in \cite{TchamkertenTelatar06}, the
authors consider codes of variable blocklength and define
achievability accordingly.

The capacity is defined as the supremum over all achievable rates
and is given in the following theorem.
\begin{theorem} \label{thrm:FBCapCompoundFSC}
The feedback capacity of the compound finite state channel is given
by
\begin{equation}
C = \lim_{n \rightarrow \infty} \max_{Q_{X^n||Z^{n-1}}} \inf_{s_0,
\theta} \frac{1}{n} I(X^n \to Y^n | s_0,\theta). \label{eqn:C}
\end{equation}
\end{theorem}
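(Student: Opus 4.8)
The plan is to prove the two bounds separately and then settle the existence of the limit in (\ref{eqn:C}); write $C_n \triangleq \max_{Q_{X^n||Z^{n-1}}}\inf_{s_0,\theta}\frac1n I(X^n\to Y^n|s_0,\theta)$. For the converse, fix an $(n,e^{nR})$ code with $\sup_{s_0,\theta}P_e(s_0,\theta)\le\epsilon$ and let $W$ be uniform on $\{1,\dots,e^{nR}\}$. For an arbitrary fixed pair $(s_0,\theta)$ we have $nR=H(W)=I(W;Y^n|s_0,\theta)+H(W|Y^n,s_0,\theta)$, and Fano's inequality bounds the last term by $1+\epsilon nR$. The key step is the identity $I(W;Y^n|s_0,\theta)=I(X^n\to Y^n|s_0,\theta)$: since $X_i=x_i(W,Z^{i-1})$ is a deterministic function of $(W,Y^{i-1})$ we have $H(Y_i|Y^{i-1},W,s_0,\theta)=H(Y_i|Y^{i-1},W,X^i,s_0,\theta)$, and since the channel output is conditionally independent of $W$ given $(X^i,Y^{i-1},s_0,\theta)$ this equals $H(Y_i|Y^{i-1},X^i,s_0,\theta)$; summing over $i$ gives the claim for the causal-conditioning law the code induces, which is of the admissible form $Q(x^n||z^{n-1})$ because $Z^{i-1}=f(Y^{i-1})$ and $W$ is independent of the channel. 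Hence $R(1-\epsilon)\le\frac1n I(X^n\to Y^n|s_0,\theta)+\frac1n$; taking the infimum over $(s_0,\theta)$ and then the supremum over admissible input laws yields $R(1-\epsilon)\le C_n+\frac1n$ for every blocklength $n$ of a good code, so $R\le\limsup_n C_n$, which equals $\lim_n C_n$ once the limit is shown to exist.

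For achievability I would run random coding over \emph{code-trees}, following the template of Lapidoth and Telatar \cite{LapidothTelatar98} but with causal conditioning replacing ordinary conditioning. A code-tree assigns to every feedback path $z^{i-1}$ an input symbol $x_i$; draw $e^{nR}$ code-trees independently, symbol-by-symbol along each path, from a near-optimal $Q(x^n||z^{n-1})$. For a \emph{known} $(\theta,s_0)$, the causal-conditioning maximum-likelihood decoder achieves, by the feedback version of Gallager's finite-state random-coding bound in \cite{Permuter06_feedback_submit} (whose $\log|{\cal S}|$ penalty for the unknown initial state is absorbed by a sub-block construction), an ensemble-average error probability that decays exponentially whenever $R<\frac1n I(X^n\to Y^n|s_0,\theta)$. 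The obstacle is that $\theta$ is unknown and $\Theta$ may be uncountable, so one decoder must serve the whole family. Adapting Feder and Lapidoth \cite{FederLapidoth98}, I would use that for a finite-state channel the likelihood $P(y^n||x^n,s_0,\theta)$ depends on $\theta$ only through empirical state-transition/output counts along the realized tree path, whose number of \emph{types} is polynomial in $n$; Sanov's theorem discards atypical types and Pinsker's inequality controls the log-likelihood gap between a channel and its type representative, producing a sub-exponential-size set of representative channels (and of initial states) that multiplicatively dominate every member of $\Theta$. Merging the corresponding maximum-likelihood decoders into one universal decoder $\psi$ gives $P_{e,\psi}(s_0,\theta)\le\alpha(n)\,P_{e,\mathrm{ML}_\theta}(s_0,\theta)$ with $\alpha(n)$ sub-exponential, uniformly in $(s_0,\theta)$. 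Since the $\theta$-dependence of $\psi$'s error is effectively through the sub-exponentially many type classes, $\mathbb{E}_{\mathrm{code}}\bigl[\sup_{s_0,\theta}P_{e,\psi}(s_0,\theta)\bigr]$ is at most a sub-exponential factor times $e^{-nE}$ for a positive exponent $E$ at any $R<\lim_n C_n$, hence vanishes, so some code-tree drives $\sup_{s_0,\theta}P_e(s_0,\theta)$ to $0$.

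Existence of the limit in (\ref{eqn:C}) then follows from near-superadditivity of $\{nC_n\}$: concatenating near-optimal length-$n$ and length-$k$ input laws gives a length-$(n+k)$ law whose directed information is at least the sum minus $\log|{\cal S}|$ — the cost of the unknown junction state, bounded exactly as in the property $|I(X^n\to Y^n|\theta)-I(X^n\to Y^n|S,\theta)|\le\log|{\cal S}|$ — so $\lim_n C_n=\sup_n\bigl(C_n-\tfrac{\log|{\cal S}|}{n}\bigr)$ exists, as in Gallager's treatment \cite{Gallager68} and its feedback extension \cite{Permuter06_feedback_submit}. The step I expect to be the real obstacle is the universal decoder: transplanting the method of types from fixed codewords to code-trees — defining a type of a (channel, code-tree, output) triple that is fine enough to pin down likelihoods yet coarse enough to take only polynomially many values — is where the substance lies, and is exactly what the ``types of code-trees,'' ``Sanov's theorem,'' and ``Pinsker's inequality'' keywords signal.
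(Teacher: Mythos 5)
Your converse (Fano plus the identity $I(W;Y^n|s_0,\theta)=I(X^n\to Y^n|s_0,\theta)$), your superadditivity argument for the existence of the limit with the $\log|{\cal S}|$ correction, and your overall achievability architecture --- random code-trees, ML decoding for known $(s_0,\theta)$ via the feedback version of Gallager's bound, concatenated sub-blocks of length $m$ to absorb the initial-state penalty, and a Feder--Lapidoth merged universal decoder built on a sub-exponential quantization of $\Theta$ --- all coincide with the paper's proof.

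One step is genuinely missing. The Feder--Lapidoth merging machinery you invoke requires the code-trees to be drawn \emph{uniformly from a fixed set} $B_{Nm}$: the ranking-function bound $M_{u_K}(a,y^{Nm})\le K\,M_{\theta_k}(a,y^{Nm})$ controls the merged decoder's error only when all candidate code-trees are equiprobable. Your ensemble, with sub-blocks generated i.i.d.\ from $Q_m^*$, is not of this form, so the universal-decoder guarantee does not directly apply to it, and your bound on $\mathbb{E}\bigl[\sup_{s_0,\theta}P_{e,\psi}\bigr]$ does not yet follow. The paper bridges this with a dedicated step (Lemma \ref{lemma:TypeOnCodetree}): define the type of a concatenated code-tree as the empirical distribution of its depth-$m$ sub-trees, pass to the sub-code of the most populous type class (costing only a sub-exponential factor $\exp(Nm\delta)$ in rate and in error probability), and note that the chosen type $\hat Q_{Nm}$ can be taken independent of the channel. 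This is in fact what the keyword ``types of code-trees'' refers to --- not, as you guessed, to pinning down likelihoods for the universal decoder. Relatedly, Sanov's theorem and Pinsker's inequality play no role in the separability argument: the FSC family is shown strongly separable by directly quantizing the parameters $P(y_i,s_i|x_i,s_{i-1},\theta)$ as in \cite{FederLapidoth98}; Sanov and Pinsker appear only in the channel-estimation lemma (Lemma \ref{l_estimate_channel}) used for the memoryless compound channel.
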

Theorem \ref{thrm:FBCapCompoundFSC} is proved in Section
\ref{section:converse}, which shows the existence of $C$ and proves
the converse, and Section \ref{section:achievability}, where
achievability is established.

\section{Existence of $C$ and the converse} \label{section:converse}

We first state the following proposition, which shows that the
capacity $C$ as defined in Theorem \ref{thrm:FBCapCompoundFSC}
exists. The proof is found in Appendix \ref{app:ExistenceProof}.

\begin{proposition} \label{prop:existenceC}
Let
\begin{equation}\label{eqn:C_nDefinition}
C_n = \max_{Q_{X^n||Z^{n-1}}} \inf_{s_0, \theta} \frac{1}{n} I(X^n
\to Y^n | s_0,\theta).
\end{equation}
Then $C_n$ is well defined and converges for $n \rightarrow \infty$.
In addition, let
\begin{equation}
\hat{C}_n = C_n - \frac{\log |{\cal S}|}{n}.
\end{equation}
Then
\begin{equation}
\lim_{n \rightarrow \infty} C_n = \sup_{n} \hat{C}_n
\end{equation}
\end{proposition}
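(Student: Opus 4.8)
The plan is to establish two things: first, that $C_n$ is well defined (i.e., the max over $Q_{X^n\|Z^{n-1}}$ of the infimum over $(s_0,\theta)$ is attained and finite), and second, that the shifted sequence $\hat C_n = C_n - \frac{\log|\mathcal S|}{n}$ is superadditive in $n$, whence by Fekete's lemma $\lim_n \hat C_n = \sup_n \hat C_n$, and since $\frac{\log|\mathcal S|}{n}\to 0$ this also gives convergence of $C_n$ to the same limit. For well-definedness, I would note that for fixed $(s_0,\theta)$ the functional $Q_{X^n\|Z^{n-1}}\mapsto \frac1n I(X^n\to Y^n\mid s_0,\theta)$ is continuous (it is a finite sum, via property~\ref{HaimDirectedInfoEquivI}), and the feasible set of causal-conditioning distributions $Q(x^n\|z^{n-1})=\prod_i Q(x_i\mid x^{i-1},z^{i-1})$ is a compact subset of a Euclidean simplex product; an infimum of continuous functions is upper semicontinuous, so it attains its max on the compact set, and it is trivially bounded by $\log|\mathcal X|$.

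The heart of the argument is the superadditivity bound, which I expect to be the main obstacle. I want to show $\hat C_{n+m} \ge \frac{n}{n+m}\hat C_n + \frac{m}{n+m}\hat C_m$, or equivalently $(n+m)C_{n+m} \ge nC_n + mC_m - \log|\mathcal S|$. The strategy is the standard one for FSCs (as in Gallager and in Lapidoth--Telatar): take near-optimal input distributions $Q^{(1)}$ on blocklength $n$ and $Q^{(2)}$ on blocklength $m$, and concatenate them to build an input distribution on blocklength $n+m$ where the first $n$ symbols use $Q^{(1)}$ (driven by the feedback $z^{i-1}$) and the last $m$ symbols use $Q^{(2)}$ (driven by the feedback generated during the second block). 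Then one expands $I(X^{n+m}\to Y^{n+m}\mid s_0,\theta)$ using the chain rule for directed information:
\begin{equation*}
I(X^{n+m}\to Y^{n+m}\mid s_0,\theta) = I(X^{n}\to Y^{n}\mid s_0,\theta) + \sum_{i=n+1}^{n+m} I(X^i; Y_i \mid Y^{i-1}, s_0,\theta).
\end{equation*}
For the first term, the choice of $Q^{(1)}$ on the first block gives at least $nC_n$ after taking the infimum over $(s_0,\theta)$. For the second term, I would condition further on the state $S_n$ at the end of the first block: introducing $S_n$ can only change the relevant directed information by at most $\log|\mathcal S|$ (property~\ref{HaimChainRuleCausalCond} of the listed properties, i.e. \cite[Lemma 5]{Permuter06_feedback_submit}), and once we condition on $S_n=s_n$ the second block is exactly a blocklength-$m$ FSC problem with initial state $s_n$, so the second term is at least $mC_m - \log|\mathcal S|$ after the infimum. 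Care is needed because the infimum over $(s_0,\theta)$ must be taken jointly for the concatenated code, not separately; the point is that for \emph{every} $(s_0,\theta)$ the first-block contribution is $\ge nC_n$ and the second-block contribution (which depends on $s_0,\theta$ only through the induced distribution of $S_n$ and through $\theta$ itself) is $\ge mC_m - \log|\mathcal S|$ because $C_m$ already has an infimum over all initial states built in. Summing and dividing gives the claim.

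Having established $(n+m)\hat C_{n+m}\ge n\hat C_n + m\hat C_m$ — note the $-\log|\mathcal S|$ terms cancel precisely because of how $\hat C_n$ is defined — Fekete's subadditive (here superadditive) lemma yields $\lim_n \hat C_n = \sup_n \hat C_n$. Finally, since $C_n = \hat C_n + \frac{\log|\mathcal S|}{n}$ and the correction term vanishes, $C_n$ converges to the same limit $\sup_n \hat C_n$, completing the proof. The subtle points to get right in the full write-up are: verifying that concatenation of causal-conditioning distributions again yields a valid causal-conditioning distribution with respect to the feedback $z^{n+m-1}$; carefully handling the feedback dependence so that the second block genuinely behaves as a fresh FSC with random initial state $S_n$; and making sure the $\log|\mathcal S|$ slack from inserting/removing the conditioning on $S_n$ is applied in the correct direction.
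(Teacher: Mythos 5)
Your proposal is correct and follows essentially the same route as the paper: well-definedness via continuity of the directed information in $Q_{X^n||Z^{n-1}}$ together with compactness of the set of causal-conditioning distributions, superadditivity of $n\hat{C}_n$ obtained by concatenating the optimal input distributions for the two sub-blocks and paying a $\log|\mathcal{S}|$ penalty for conditioning on the intermediate state (the paper packages this as an analogue of Lemma 5 of Lapidoth--Telatar combined with Lemma 5 of Permuter et al.), and then Fekete's lemma. The only blemish is a harmless label slip: the $\log|\mathcal{S}|$ bound is property 2 in the paper's list, not property 1, though you cite the correct underlying lemma.
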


To prove the converse in Theorem \ref{thrm:FBCapCompoundFSC}, we
assume a uniform distribution on the message set, for which
$H(W)=nR$. Since the message is independent of the channel
parameters $H(W)=H(W|s_0,\theta)$ and we apply Fano's inequality as
follows.
\begin{eqnarray*}
nR & = & H(W|s_0,\theta) \\
 & = & I(Y^n;W | s_0, \theta) + H(W|Y^n, s_0, \theta) \\
 & \leq & I(Y^n;W | s_0, \theta) + P_e(s_0,\theta)nR + 1 \\
 & = & H(Y^n | s_0, \theta) - H(Y^n|W, s_0, \theta) + P_e(s_0,\theta) nR + 1 \\
 & = & \sum_{i=1}^{n} H(Y_i|Y^{i-1}, s_0, \theta) - \sum_{i=1}^{n}
 H(Y_i|Y^{i-1}, W,s_0,\theta) + P_e(s_0,\theta) nR + 1 \\
& = & \sum_{i=1}^{n} H(Y_i|Y^{i-1}, s_0, \theta) - \sum_{i=1}^{n}
 H(Y_i|Y^{i-1}, W, X^i(W,Z^{i-1}(Y^{i-1})), s_0,\theta) + P_e(s_0,\theta) nR +
 1 \\
 & = & \sum_{i=1}^{n} H(Y_i|Y^{i-1}, s_0, \theta) - \sum_{i=1}^{n}
 H(Y_i|Y^{i-1}, X^i, s_0,\theta) + P_e(s_0,\theta) nR + 1 \\
 & = & \sum_{i=1}^{n} I(Y_i;X^i|Y^{i-1},s_0,\theta) + P_e(s_0,\theta) nR + 1 \\
 & = & I(X^n \to Y^n | s_0,\theta) + P_e(s_0,\theta) nR + 1
\end{eqnarray*}
For any code we have
\begin{equation}
I(X^n \to Y^n | s_0,\theta) \geq nR(1-P_e(s_0,\theta)) - 1
\end{equation}
and therefore
\begin{equation}
\inf_{s_0,\theta} I(X^n \to Y^n | s_0,\theta) \geq
nR(1-\sup_{s_0,\theta} P_e(s_0,\theta)) - 1.
\end{equation}
By combining the above statement with Proposition
\ref{prop:existenceC} we have
\begin{equation}
C \geq \hat{C}_n \geq R(1-\sup_{s_0,\theta} P_e(s_0,\theta)) -
\frac{1}{n} - \frac{\log|{\cal S}|}{n}.
\end{equation}
Then for a sequence of codes of rate $R$ with $\lim_{n\to \infty}
\sup_{s_0,\theta} P_e(s_0,\theta)=0$, this implies $R\leq C$.


\section{Achievability} \label{section:achievability}

Before proving achievability, we mention a simple case which follows
from previous results. If the set $\Theta$ has finite cardinality,
then achievability follows immediately from the results in
\cite[Theorem 14]{Permuter06_feedback_submit}, which are true for
any finite state channel with feedback. Hence, we can construct a
finite state channel where the augmented state is $(s,\theta)$ and
by assuming that the initial distribution is positive for all
$(s_0,\theta)$ then we get that for any $\theta \in \Theta,
|\Theta|<\infty$ and any $s_0\in \mathcal S $ the rate $R$ is
achievable if
\begin{equation}
R < \lim_{n \rightarrow \infty} \max_{Q_{X^n||Z^{n-1}}}  \min_{s_0,
\theta} \frac{1}{n} I(X^n \to Y^n | s_0,\theta).
\end{equation}

More work is needed in the achievability proof when the set $\Theta$
is not restricted to finite cardinality. This is outlined in the
following subsections in three steps. In the first step, we assume
that the decoder knows the channel $\theta$ in use and we show in
Theorem \ref{thrm:AchievabilityThetaKnown} that if $R < C$ and if
the decoder consists of a maximum-likelihood decoder, then there
exist codes for which the error probability decays uniformly over
the family $\Theta$ and exponentially in the blocklength. The codes
used in showing this result are codes of blocklength $Nm$ where each
sub-block of length $m$ is generated i.i.d. according to some
distribution. In the second step, we show in Lemma
\ref{lemma:TypeOnCodetree} that if instead the codes are chosen
uniformly and independently from a set of possible blocklength-$Nm$
codes, then the error probability still decays uniformly over
$\Theta$ and exponentially in the blocklength. In the third and
final step, we show in Theorem \ref{thrm:UnivDecoderForSeparable}
and Lemma \ref{lemma:CompoundFSCIsStronglySep} that for codes chosen
uniformly and independently from a set of blocklength-$Nm$ codes,
there exists a decoder that for every channel $\theta \in \Theta$
achieves the same error exponent as the maximum-likelihood decoder
tuned to $\theta$.

In the sections that follow, ${\cal P}({\cal X}^n || {\cal
Z}^{n-1})$ denotes the set of probability distributions on $X^n$
causally conditioned on $Z^{n-1}$.


\subsection{Achievability for a decoder tuned to $\theta$} We begin
by proving that if the decoder is tuned to the channel $\theta \in
\Theta$ in use, i.e., if the decoder knows the channel $\theta$ in
use, and if $R<C$ then the average error probability approaches
zero. This is proved through the use of random coding and maximum
likelihood (ML) decoding.

The encoding scheme consists of randomly generating a {\it
code-tree} for each message $w$, as shown in Figure
\ref{f_codetree}(b) for the case of binary feedback. A code-tree has
depth $n$ corresponding to the blocklength and level $i$ designates
a set of $|\mathcal Z|^{i-1}$ possible codeword symbols. One of the
$|\mathcal Z|^{i-1}$ symbols is chosen as the input $X_i$ according
to the feedback sequence $z^{i-1}$. The first codeword symbol is
generated as $X_1 \sim Q(x_1)$. The second codeword symbol is
generated by conditioning on the previous codeword symbol and on the
feedback, $X_2 \sim Q(x_2|x_1,z_1)$ for all possible values of
$z_1$. For instance, in the binary case, $|\mathcal Z|=2$, two
possible values (branches) of $X_2$ will be generated and the
transmitted codeword symbol will be selected from among these two
values according to the value of the feedback $Z_1$. Subsequent
codeword symbols are generated similarly, $X_i \sim Q(x_i|x^{i-1},
z^{i-1})$ for all possible $z^{i-1}$. For a given feedback sequence
$z^{n-1}$, the input distribution, corresponding to the distribution
on a path through the tree of depth $n$, is
\begin{equation}
Q(x^n ||z^{n-1}) = \prod_{i=1}^{n} Q(x_i|x^{i-1},z^{i-1})
\end{equation}

\begin{figure}[h]{
\psfrag{c1}[][][0.8]{$x_1=0$} \psfrag{a1}[][][0.8]{$$}
\psfrag{c2}[][][0.8]{$x_2=1$} \psfrag{a2}[][][0.8]{$i=1$}
\psfrag{c3}[][][0.8]{$x_3=1$} \psfrag{a3}[][][0.8]{$i=2$}
\psfrag{c4}[][][0.8]{$x_4=0$} \psfrag{a4}[][][0.8]{$i=3$}

\psfrag{d1}[][][0.8]{$x_1=0$} \psfrag{a1}[][][0.8]{$$}
\psfrag{d2}[][][0.8]{$x_2=1$} \psfrag{a1}[][][0.8]{$$}
\psfrag{d3}[][][0.8]{$x_2=1$} \psfrag{a1}[][][0.8]{$$}
\psfrag{d4}[][][0.8]{$x_3=0$} \psfrag{a1}[][][0.8]{$$}
\psfrag{d5}[][][0.8]{$x_3=1$} \psfrag{a1}[][][0.8]{$$}
\psfrag{d6}[][][0.8]{$x_3=1$} \psfrag{a1}[][][0.8]{$$}
\psfrag{d7}[][][0.8]{$x_3=1$} \psfrag{a1}[][][0.8]{$$}
\psfrag{da}[][][0.8]{$x_4=0$} \psfrag{db}[][][0.8]{$x_4=1$}
\psfrag{a1}[][][0.8]{$$} \psfrag{a5}[][][0.8]{$i=4$}

\psfrag{k0}[][][0.8]{$\;\;\;\;\;\;z_{i-1}=0$}
\psfrag{k1}[][][0.8]{$\;\;\;\;\;\;z_{i-1}=1$}
\psfrag{k01}[][][0.8]{$\:\:\:\:\:\:\:\:\:\:\:\:$ (no feedback)}

\psfrag{codeword}[][][0.8]{(a) codeword (no feedback)}
\psfrag{code-tree}[][][1]{(b) code-tree } \psfrag{con
code-tree}[][][1]{(c) concatenated code-tree}

\centerline{\includegraphics[width=16cm]{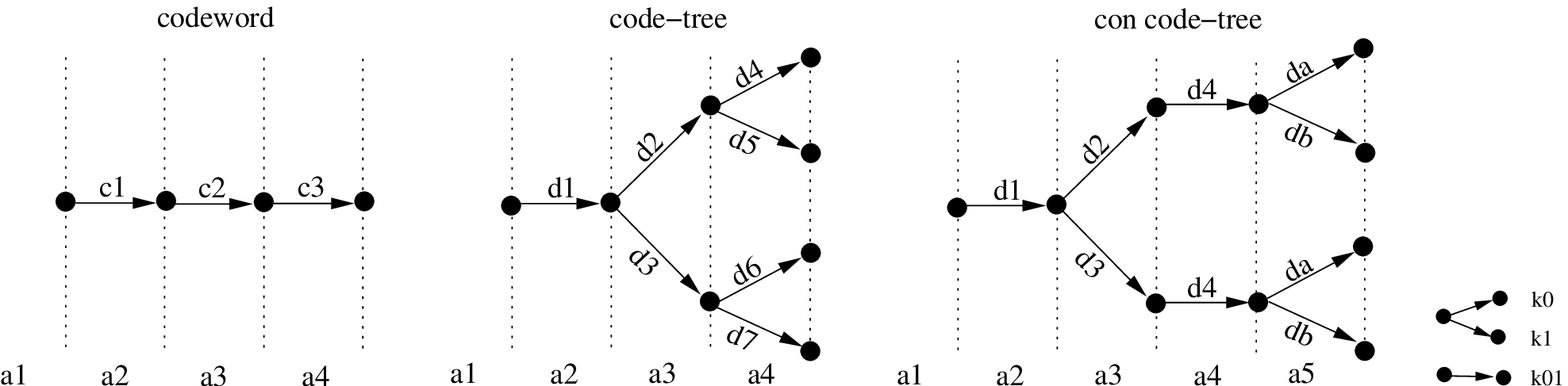}}
\caption{Illustration of coding scheme for (a) setting without
feedback, (b) setting with binary feedback as used in
\cite{Permuter06_feedback_submit} and (c) a code-tree that was
created by concatenating smaller code-trees. In the case of no
feedback each message is mapped to a codeword, and in the case of
feedback each message is mapped to a code-tree. The third scheme is
a code-tree of depth 4 created by concatenating two trees of depth
2. } \label{f_codetree} }
\end{figure}

A code-tree of depth $n$ is a vector of $D(n)$ symbols, where
\begin{equation}
D(n) \triangleq \sum_{i=1}^{n} |\mathcal Z|^{i-1} = \frac{|\mathcal
Z|^n -1}{|\mathcal Z|-1}, \label{eqn:CodeTreeNumSymbols}
\end{equation}
and each element in the vector takes value from the alphabet
$\mathcal X$. We denote a random code-tree by $A^{D(n)}$ and a
realization of the random code-tree by $a^{D(n)}$. The probability
of a tree $a^{D(n)} \in \mathcal X^{D(n)}$ is uniquely determined by
$Q_{X^n||Z^{n-1}}(\cdot||\cdot) \in {\cal P}({\cal X}^n || {\cal
Z}^{n-1})$. For instance, consider the case of binary feedback,
$\mathcal Z = \{0,1\}$, and a tree of depth $n=2$, for which
$D(n)=3$. A code-tree is a vector $a^{3}=(x_1,x_{21},x_{22})$ where
$x_1$ is the symbol sent at time $i=1$, $x_{21}$ is the symbol sent
at time $i=2$ for feedback $z_1=0$, and $x_{22}$ is the symbol sent
at time $i=2$ for feedback $z_1=1$. Then
\begin{equation}
\mbox{Pr}(A^3 = a^3) = Q(x_1)Q(x_{21}|x_1,z_1=0)Q(x_{22}|x_1,z_1=1)
\end{equation}
which is uniquely determined by $Q_{X^2||Z_1}(\cdot||\cdot)$. In
general, for a code-tree of depth $n$, the following holds.
\begin{equation}
\sum_{a^{D(n)} \in \mathcal X^{D(n)}} \mbox{Pr}(A^{D(n)} = a^{D(n)})
= 1
\end{equation}

A code-tree for each message $w$ is randomly generated, and for each
message $w$ and feedback sequence $z^{n-1}$ the codeword
$x^n(w,z^{n-1})$ is unique. The decoder is made aware of the
code-trees for all messages. Assuming that the ML decoder knows the
channel $\theta$ in use, it estimates the message as follows.
\begin{equation}
\hat{w} = \arg \max_{w} P(y^n|w,\theta)
\end{equation}
As shown in \cite{Permuter06_feedback_submit}, since $x^i$ is
uniquely determined by $w$ and $z^{i-1}$ and since $z^i$ is a
deterministic function of $y^i$, we have the equivalence
\begin{equation}
P(y^n|w,\theta) = P(y^n||x^n(w,z^{n-1}),\theta)
\end{equation}
so the ML decoder can be described as
\begin{equation}
\hat{w}=\arg \max_{w} P(y^n||x^n(w,z^{n-1}),\theta).
\end{equation}
Let $P_{e}^{n}(s_0,\theta)$ denote the average (over messages) error
probability incurred when a code of blocklength $n$ is used over
channel $\theta$ with initial state $s_0$. The following theorem
bounds the error probability uniformly in $(s_0,\theta)$ when the
decoder knows the channel $\theta \in \Theta$ in use. The theorem is
proved in Appendix \ref{app:AchievabilityThetaKnown}.

\begin{theorem} \label{thrm:AchievabilityThetaKnown}
For a compound FSC with initial state $s_0 \in {\cal S}$, input
alphabet ${\cal X}$, and output alphabet ${\cal Y}$, assuming that
the decoder knows the channel $\theta$ in use, then there exists a
code of rate $R$ and blocklength $Nm$, where $N \geq 1$ and $m$ is
chosen such that $\hat{C}_m \geq R + \epsilon$, for which the error
probability $P_{e}^{Nm}(s_0,\theta)$ of the ML decoder satisfies
\begin{equation}
P_{e}^{Nm}(s_0,\theta) \leq  |{\cal S}|
\exp(-Nm\beta(\epsilon,m,|{\cal Y}|))
\end{equation}
for any $\theta \in \Theta$, where
\begin{equation}
\beta(\epsilon,m,|{\cal Y}|) = \begin{cases} m \epsilon^2/(2
\log(e|{\cal Y}|^m)^2) & \epsilon < \frac{1}{m}  (\log(e|{\cal
Y}|^m) )^2
\\ \epsilon - \frac{1}{2m} \left( \log(e|{\cal Y}|^m) \right)^2 &
\mbox{otherwise}. \end{cases}
\end{equation}
\end{theorem}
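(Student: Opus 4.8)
The plan is to prove Theorem \ref{thrm:AchievabilityThetaKnown} by a random coding argument applied to the concatenated code-tree construction of Figure \ref{f_codetree}(c). First I would fix $m$ so that $\hat C_m \ge R + \epsilon$, which by Proposition \ref{prop:existenceC} is possible since $\hat C_m \to C > R$, and let $Q^*_{X^m\|Z^{m-1}}$ be a distribution achieving (or nearly achieving) $C_m$. The codebook for blocklength $Nm$ is built by drawing, for each message $w$, $N$ independent sub-trees of depth $m$ according to $Q^*_{X^m\|Z^{m-1}}$ and concatenating them; crucially, after each block of $m$ symbols the encoder resets its dependence on the feedback, so the $N$ sub-blocks are conditionally i.i.d.\ given the message. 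Because the channel is a FSC, once we condition on the (unknown to encoder/decoder) state $s_{km}$ at the start of block $k+1$, the blocks are ``almost'' independent through the channel; the standard trick is to further condition on the intermediate states $s_0, s_m, s_{2m}, \dots$, bound the error for each fixed state sequence, and then pay at most a factor $|{\cal S}|$ for the worst initial state of each block.

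The core estimate is a Gallager-style / Shannon-Feinstein bound on the ML decoding error for a single length-$m$ sub-block, averaged over the random sub-tree ensemble, using the directed-information identity in property \ref{HaimDirectedInfoEquivI}) to identify the relevant ``mutual information'' quantity as $\mathcal I(Q^*_{X^m\|Y^{m-1}};P_{Y^m\|X^m,s_0,\theta}) = I(X^m\to Y^m\mid s_0,\theta) \ge m\hat C_m + \log|{\cal S}| \ge m(R+\epsilon)+\log|{\cal S}|$ uniformly in $(s_0,\theta)$ — this uniform lower bound over $\theta$ is exactly what the $\inf_\theta$ in the definition of $C_m$ buys us. I would then invoke a single-block reliability bound of the form $\Pr(\text{block error}\mid s_0,\theta)\le |{\cal Y}|^m \exp(-(\,\text{information}\,-\,mR\,))$ type, or more precisely the ``type of code-tree'' counting argument: the number of distinct output sequences $y^m$ is $|{\cal Y}|^m$, the number of distinct path-types is polynomial, and for the true message the empirical directed information concentrates near its mean. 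Exponentiating over $N$ independent sub-blocks turns the per-block surplus $\epsilon$ into a blocklength-$Nm$ exponent, and the two-regime form of $\beta(\epsilon,m,|{\cal Y}|)$ — quadratic in $\epsilon$ for small $\epsilon$ and linear otherwise — comes out of optimizing a Pinsker-type bound (the small-$\epsilon$ regime) versus a cruder union bound (the large-$\epsilon$ regime), with the $\log(e|{\cal Y}|^m)$ factors accounting for the number of output sequences and the normalization of types.

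Concretely the steps are: (1) define the concatenated ensemble and the per-block achieving distribution $Q^*$, recording the uniform-in-$\theta$ lower bound $\tfrac1m I(X^m\to Y^m\mid s_0,\theta)\ge \hat C_m + \tfrac{\log|{\cal S}|}{m}\ge R+\epsilon+\tfrac{\log|{\cal S}|}{m}$ via property \ref{HaimChainRuleCausalCond}), property \ref{HaimDirectedInfoEquivI}) and the state-decoupling property 2); (2) prove a single-block ML error bound, uniform in the block's initial state and in $\theta$, of the form $|{\cal Y}|^m\exp(-m\beta)$ — this is the step that requires the type-counting / Sanov-type concentration machinery and I expect it to be the main obstacle; (3) show the $N$ concatenated sub-blocks can be analyzed by conditioning on the $N{+}1$ intermediate states, giving an overall bound $|{\cal S}|^{?}\prod_{k}|{\cal Y}|^m\exp(-m\beta)$, and argue that only a single factor $|{\cal S}|$ survives (because a decoding error requires at least one sub-block to be in error, and we can union-bound over which block fails while the FSC structure lets the per-block bounds compose); (4) collect constants to reach the stated form $P_e^{Nm}(s_0,\theta)\le |{\cal S}|\exp(-Nm\beta(\epsilon,m,|{\cal Y}|))$, and verify the two cases in the definition of $\beta$ by splitting on whether $\epsilon$ is below or above $\tfrac1m(\log(e|{\cal Y}|^m))^2$.

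The hard part will be step (2): making the single-block bound genuinely uniform over the \emph{infinite} family $\Theta$, since a naive ML analysis tuned to each $\theta$ would not give a common exponent. The resolution is that the decoder here \emph{knows} $\theta$, so ML is tuned correctly, and the only thing that must be uniform is the exponent — which follows because the surplus $mR+\epsilon+\log|{\cal S}|$ below the directed information holds for every $\theta$ simultaneously by the $\inf_\theta$ in $C_m$, while the combinatorial factors ($|{\cal Y}|^m$ output sequences, polynomially many types of length-$m$ code-trees) do not depend on $\theta$ at all. Thus the per-$\theta$ analysis collapses to a single Chernoff/Pinsker estimate with $\theta$-free constants; the remaining care is bookkeeping the $\log(e|{\cal Y}|^m)$ terms so the exponent matches the displayed $\beta$ exactly. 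I would defer the detailed type-of-code-tree and Sanov computations to the appendix as the theorem statement indicates, and here only lay out this reduction.
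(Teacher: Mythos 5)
Your high-level architecture matches the paper's: concatenate $N$ i.i.d.\ depth-$m$ sub-trees drawn from the $\hat C_m$-achieving distribution $Q_m^*$, get uniformity over $\theta$ from the $\inf_\theta$ inside $C_m$ (your bound $\tfrac1m I(X^m\to Y^m\mid s_0,\theta)\ge R+\epsilon+\tfrac{\log|{\cal S}|}{m}$ is exactly the paper's Eq.~(\ref{eqn:Qm*})), and obtain the two-regime $\beta$ by optimizing a free parameter. But the technical engine you propose for steps (2) and (3) is not the one that works, and step (3) as written is internally inconsistent. The paper's proof runs entirely through Gallager's random-coding exponent adapted to causal conditioning: Corollary \ref{cor:boundEProbError} gives $E[P^n_{e,w}(\theta)]\le (e^{nR}-1)^\rho\sum_{y^n}[\sum_{x^n}Q(x^n\|z^{n-1})P(y^n\|x^n,\theta)^{1/(1+\rho)}]^{1+\rho}$, Corollary \ref{cor:boundProbErrorFE} converts this to $P_e^n(s_0,\theta)\le|{\cal S}|\exp(-n(F^n(\rho,Q_n,\theta)-\rho R))$, and the quadratic/linear split in $\beta$ comes from Lemma \ref{lemma:boundE} (the expansion $E_0(\rho,Q,s_0,\theta)\ge \tfrac{\rho}{n}\mathcal I(Q;P)-\tfrac{\rho^2}{2n}(\log(e|{\cal Y}|))^2$, i.e.\ \cite[Lemma 2]{LapidothTelatar98}) followed by the choice $\rho=\min(1,\,m\epsilon/(\log(e|{\cal Y}|^m))^2)$ --- the linear regime is where $\rho$ is clipped at $1$, not a ``cruder union bound,'' and no Sanov/type-concentration argument appears in this theorem at all (types of code-trees enter only later, in Lemma \ref{lemma:TypeOnCodetree}).

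The more serious gap is your composition of the $N$ sub-blocks. You simultaneously propose (i) that the per-block error bounds multiply, giving $\prod_k|{\cal Y}|^m\exp(-m\beta)$, and (ii) that one should ``union-bound over which block fails''; these give opposite scalings, and neither is correct, because the ML decoder compares full length-$Nm$ likelihoods $P(y^{Nm}\|x^{Nm},\theta)$ --- a message error is not the union, nor the intersection, of per-block error events. Conditioning on the intermediate states $s_0,s_m,\dots$ also does not decouple the blocks from the decoder's point of view. The missing ingredient is the super-additivity of the Gallager function under product input distributions, Lemma \ref{lemma:Fsuperadditive}: $F^{Nm}(\rho,Q_{Nm},\theta)\ge F^m(\rho,Q_m^*,\theta)$ when $Q_{Nm}$ is the $N$-fold product of $Q_m^*$. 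This single algebraic fact is what converts the per-block surplus into the exponent $Nm\,\beta$ while absorbing the state coupling between blocks into the $-\rho\log|{\cal S}|/n$ term of $F^n$ and leaving exactly one factor of $|{\cal S}|$ in front. Without it (or an equivalent substitute) your step (3) does not go through.
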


The result in Theorem \ref{thrm:AchievabilityThetaKnown} is shown by
the use of a randomly-generated code-tree of depth $Nm$ for each
message $w$. For every feedback sequence $z^{Nm-1}$, the
corresponding path in the code-tree is generated by the input
distribution $Q_{X^{Nm}||Z^{Nm-1}}(\cdot||\cdot) \in {\cal P}({\cal
X}^{Nm}||{\cal Z}^{Nm-1})$ given by
\begin{multline}
Q(x^{Nm}||z^{Nm-1})  =  Q_m^*(x_1^{m}||z_1^{m-1}) \times
Q_m^*(x_{m+1}^{2m}||z_{m+1}^{2m-1}) \times \hdots \times
Q_m^*(x_{(N-1)m+1}^{Nm}||z_{(N-1)m+1}^{Nm-1}) \\
\forall x^{Nm} \in \mathcal X^{Nm}, z^{Nm-1} \in \mathcal Z^{Nm-1}
\end{multline}
where $Q_m^*$ is the distribution that achieves the supremum in
$\hat{C}_m$. The random codebook ${\cal C}$ used in proving Theorem
\ref{thrm:AchievabilityThetaKnown} consists of $e^{NR}$ code-trees.
Each code-tree in the codebook is a concatenated code-tree with
depth $Nm$ consisting of $N$ code-trees, each of depth $m$. For a
given feedback sequence $z^{Nm-1}$ (corresponding to a certain path
in the concatenated code-tree) the codeword is generated by
$Q_{X^{Nm}||Z^{Nm-1}}(\cdot||\cdot)$. An example of a concatenated
code-tree is found in Figure \ref{f_codetree}(c).

\subsection{Achievability for codewords chosen uniformly over a set}

In this subsection we show that the result in Theorem
\ref{thrm:AchievabilityThetaKnown} implies that the error
probability can be similarly bounded when codewords are chosen
uniformly over a set. In other words, we convert the random coding
exponent given in Theorem \ref{thrm:AchievabilityThetaKnown}, where
it is assumed that the codebook consists of concatenated code-trees
of depth $Nm$ in which each sub-tree of depth $m$ is generated
i.i.d. according to $Q_m^*$, to a new random coding exponent for
which the concatenated code-trees in the codebook are chosen
uniformly from a set of concatenated code-trees. This alternate type
of random coding, where the concatenated code-trees are chosen
uniformly from a set, is the coding approach subsequently used to
prove the existence of a universal decoder.

We first introduce the notion of types on code-trees. Let
$a^{ND(m)}$ denote the concatenation of $N$ depth-$m$ code-trees
$a^{D(m)}$, where $D(m)$ is defined in
(\ref{eqn:CodeTreeNumSymbols}) and $a^{ND(m)} \in \mathcal
X^{ND(m)}$. The type (or empirical probability distribution) of a
concatenated code-tree $a^{ND(m)}$ is the relative proportion of
occurrences of each code-tree $a^{D(m)} \in \mathcal X^{D(m)}$.
Equivalently, $N$ multiplied by the type of $a^{ND(m)}$ indicates
the number of times each depth-$m$ code-tree from the set ${\cal
X}^{D(m)}$ occurs in the concatenated code-tree $a^{ND(m)}$. Let
${\cal P}_N({\cal X}^{D(m)})$ denote the set of types of
concatenated code-trees of depth $Nm$.


Let $P_e(n,R,Q,P)$ denote the average probability of error incurred
when a code-tree of depth $n$ and rate $R$ drawn according to a
distribution $Q \in {\cal P}({\cal X}^{n}||{\cal Z}^{n-1})$ is used
over the channel $P$. We now prove the following result.

\begin{lemma} \label{lemma:TypeOnCodetree}
Given $Q_m \in {\cal P}({\cal X}^m||{\cal Z}^{m-1})$,
let $Q_{Nm} \in {\cal P}({\cal X}^{Nm}||{\cal Z}^{Nm-1})$ denote the
distribution given by the N-fold product of $Q_m$, i.e.,
\begin{equation}
Q_{Nm}(x^{Nm}||z^{Nm-1}) = \prod_{i=1}^{N}
Q_m(x_{(i-1)m+1}^{im}||z_{(i-1)m+1}^{im-1}), \quad \forall x^{Nm}
\in \mathcal X^{Nm}, z^{Nm-1} \in \mathcal Z^{Nm-1}
\end{equation}
For a given type $\hat{Q}_{Nm} \in {\cal P}_N({\cal X}^{D(m)})$, let
$\overline{Q}_{Nm} \in {\cal P}({\cal X}^{Nm}||{\cal Z}^{Nm-1})$
denote the distribution that is uniform over the set of concatenated
code-trees of type $\hat{Q}_{Nm}$. For every distribution $Q_m \in
{\cal P}({\cal X}^m||{\cal Z}^{m-1})$ there exists a type
$\hat{Q}_{Nm} \in {\cal P}_N({\cal X}^{D(m)})$ whose choice depends
on $Q_m$ and $N$ but not on $P$ such that
\begin{equation}
P_e(Nm,R,\overline{Q}_{Nm},P) \leq \exp(2Nm\delta(N,m,|{\cal Z}|))
P_e(Nm,R+m\delta(N,m,|{\cal Z}|),Q_{Nm},P)
\end{equation}
for all $P$, where $\delta(N,m,|{\cal Z}|)=|{\cal X}|^{D(m)}
\log(N+1)/Nm$ tends to 0 as $N \rightarrow \infty$.
\end{lemma}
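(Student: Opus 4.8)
The plan is to relate the uniform-over-a-type ensemble $\overline{Q}_{Nm}$ to the i.i.d.-product ensemble $Q_{Nm}$ by a standard ``method of types'' comparison, transplanted from sequences in $\mathcal X^n$ to concatenated code-trees viewed as sequences of $N$ ``super-symbols'' drawn from the alphabet $\mathcal X^{D(m)}$. First I would fix the natural correspondence: a depth-$Nm$ concatenated code-tree $a^{ND(m)}$ is a string of length $N$ over the alphabet $\mathcal X^{D(m)}$, so a distribution $Q_m$ on $\mathcal X^{D(m)}$ induces the product measure $Q_{Nm}$ on such strings, and for each type $\hat Q_{Nm}\in\mathcal P_N(\mathcal X^{D(m)})$ the distribution $\overline Q_{Nm}$ is just the uniform measure on the type class $T(\hat Q_{Nm})$. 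I would choose $\hat Q_{Nm}$ to be the type in $\mathcal P_N(\mathcal X^{D(m)})$ closest to $Q_m$ (the empirical-distribution rounding of $Q_m$ to denominator $N$); this choice depends only on $Q_m$ and $N$, as required, and not on the channel $P$.

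The key inequality is the pointwise bound $\overline Q_{Nm}(a^{ND(m)}) \le |T(\hat Q_{Nm})|^{-1}\le (N+1)^{|\mathcal X|^{D(m)}} Q_{Nm}(a^{ND(m)})$ for every code-tree $a^{ND(m)}$ in the type class, together with the fact that $\overline Q_{Nm}$ is supported on that class. The first step uses the standard type-class size bound $|T(\hat Q_{Nm})|\ge (N+1)^{-|\mathcal X|^{D(m)}}\exp(N H(\hat Q_{Nm}))$ and $Q_{Nm}(a^{ND(m)}) = \exp(-N(H(\hat Q_{Nm}) + D(\hat Q_{Nm}\|Q_m)))\ge \exp(-N H(\hat Q_{Nm})-Nm\delta)$, where the extra $Nm\delta$ term absorbs the divergence $D(\hat Q_{Nm}\|Q_m)$ incurred by rounding $Q_m$ to $\hat Q_{Nm}$; one checks that rounding to denominator $N$ costs at most $O(|\mathcal X|^{D(m)}\log(N+1)/N) = O(m\delta)$ in divergence, which is exactly the role of $\delta(N,m,|\mathcal Z|) = |\mathcal X|^{D(m)}\log(N+1)/Nm$. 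Combining, $\overline Q_{Nm}(a^{ND(m)})\le \exp(2Nm\delta)\, Q_{Nm}(a^{ND(m)})$ pointwise on the support of $\overline Q_{Nm}$.

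From here I would push the pointwise likelihood-ratio bound through the random-coding error probability. Writing $P_e(Nm,R,Q,P)$ as an expectation over the codebook of an error event, I would expand it over the messages' code-trees; since each code-tree is drawn independently and the ML (or any fixed) decoder's error event for a given received $y^{Nm}$ is a monotone event in the codebook, replacing the law $\overline Q_{Nm}$ of a code-tree by $Q_{Nm}$ costs a factor $\exp(2Nm\delta)$ per code-tree through the pointwise bound — but because the rate only controls the \emph{number} of competing code-trees, a cleaner route is the usual one: bound $P_e(Nm,R,\overline Q_{Nm},P)$ by first conditioning on the type class, use $\overline Q_{Nm}\le \exp(2Nm\delta) Q_{Nm}$ to switch to the product ensemble at the cost of the stated prefactor, and absorb the resulting over-counting of the roughly $e^{NR}$ competing code-trees into a rate penalty $m\delta$, giving $P_e(Nm,R,\overline Q_{Nm},P)\le \exp(2Nm\delta)P_e(Nm, R+m\delta, Q_{Nm},P)$. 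I expect the main obstacle to be the bookkeeping in this last step — making the substitution of ensembles rigorous inside the union-bound / Gallager-style estimate of $P_e$ while tracking exactly where the $R+m\delta$ versus the multiplicative $\exp(2Nm\delta)$ arises, and verifying that the decoder used does not depend on which ensemble generated the codebook so that the same $P_e$ functional applies to both. The type-counting estimates themselves are routine once the super-symbol picture is set up; everything is uniform in $P$ because the construction of $\hat Q_{Nm}$ never refers to the channel.
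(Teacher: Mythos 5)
There is a genuine gap in the last step, and it is exactly the step you flag as ``the main obstacle.'' Your pointwise change of measure $\overline{Q}_{Nm}(a^{ND(m)})\le \exp(2Nm\delta)\,Q_{Nm}(a^{ND(m)})$ cannot be pushed through the error probability: $P_e(Nm,R,Q,P)$ is an expectation over a codebook of $e^{NR}$ \emph{independent} code-trees, so replacing the law of each code-tree incurs the factor $\exp(2Nm\delta)$ once per code-tree, i.e.\ $\exp(2Nm\delta\, e^{NR})$ in total. A rate penalty of $m\delta$ only accounts for a multiplicative factor $e^{Nm\delta}$ in the \emph{number} of competing code-trees; it cannot absorb a doubly-exponential likelihood-ratio factor. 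The substitution you describe would be legitimate only for a functional of the codebook in which the input law appears a bounded number of times (e.g.\ the Gallager or RCU \emph{upper bound} on $P_e$, where $Q$ appears with total power $1+\rho$), but that proves a bound on the bound, not the stated inequality between the actual error probabilities of the two ensembles. The paper's proof (following Lapidoth--Telatar) runs in the opposite direction and avoids any per-codeword change of measure: draw $e^{N(R+m\delta)}$ code-trees i.i.d.\ from $Q_{Nm}$, extract the $e^{NR}$ code-trees of the plurality type $Q'$ (possible since there are at most $\exp(Nm\delta)$ types), note that the average ML error of this subcode is at most $|\mathcal{C}|/|\mathcal{C}'|=\exp(Nm\delta)$ times that of the full code and that, conditioned on $Q'$, the subcode's code-trees are i.i.d.\ uniform on the type class; finally choose a realization $\hat{Q}_{Nm}$ with $\pi(\hat{Q}_{Nm})\ge\exp(-Nm\delta)$, which yields the second factor of $\exp(Nm\delta)$.

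Two secondary points. First, your choice of $\hat{Q}_{Nm}$ as the type ``closest'' to $Q_m$ does not automatically control $D(\hat{Q}_{Nm}\|Q_m)$: divergence is not continuous in the first argument when the reference measure has tiny atoms, so a careless rounding (one that assigns mass $1/N$ to a depth-$m$ tree of $Q_m$-probability $e^{-N^2}$, or that dumps the normalization excess on a small atom) makes $N\,D(\hat{Q}_{Nm}\|Q_m)$ grow polynomially in $N$ and destroys the bound; a nearest-rounding that never rounds up from (near) zero and corrects the normalization on a large atom does give $N\,D = O(|\mathcal{X}|^{D(m)}\log|\mathcal{X}|^{D(m)})$, but this needs to be argued, not just ``checked.'' The paper's $\hat{Q}_{Nm}$ (a high-probability realization of the plurality type) sidesteps the issue and is equally independent of $P$. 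Second, your super-symbol identification of concatenated code-trees with length-$N$ strings over $\mathcal{X}^{D(m)}$ is exactly the paper's setup and is fine.
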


\begin{proof}
The proof follows the approach of \cite[Lemma 3]{LapidothTelatar98}
except that our codebook consists of code-trees rather than
codewords; we include this proof for completeness in describing the
notion of types on code-trees. Given a codebook ${\cal C}$ of rate
$R+m\delta(N,m,|{\cal Z}|)$ chosen according to $Q_{Nm}$, we can
construct a sub-code ${\cal C}'$ of rate $R$ in the following way.
Let $Q'$ denote the type with the highest occurrence in ${\cal C}$.
The number of types in ${\cal C}$ is upper bounded by $(N+1)^{|{\cal
X}|^{D(m)}} = \exp(Nm\delta(N,m,|{\cal Z}|))$, so the number of
concatenated code-trees of type $Q'$ is lower bounded by
$\exp(N(R+m\delta(N,m,|{\cal Z}|)))/\exp(Nm\delta(N,m,|{\cal Z}|)) =
\exp(NR)$. We construct the code ${\cal C}'$ by picking the first
$e^{NR}$ concatenated code-trees of type $Q'$. Since ${\cal C}'$ is
a sub-code of ${\cal C}$, its average probability of error is upper
bounded by the average probability of error of ${\cal C}$ times
$|{\cal C}|/|{\cal C}'| = \exp(Nm\delta(N,m,|{\cal Z}|))$.

Conditioned on $Q'$, the codewords in ${\cal C}'$ are mutually
independent and uniformly distributed over a set of concatenated
code-trees of type $Q'$. Since ${\cal C}$ is a random code, the type
$Q'$ is also random, and let $\pi$ denote the distribution of $Q'$.
Pick a realization of the type $Q'$, denoted $\hat{Q}_{Nm}$, that
satisfies $\pi(\hat{Q}_{Nm}) \geq \exp(-Nm\delta(N,m,|{\cal Z}|))$.
(This is possible since the number of types is upper bounded by
$\exp(Nm\delta(N,m,|{\cal Z}|))$.) Then
\begin{eqnarray}
\pi(\hat{Q}_{Nm}) P_e(Nm,R,\overline{Q}_{Nm},P) \!&\! \leq &\!\!
\sum_{Q'}
\pi(Q') P_e(Nm,R,Q',P) \\
 \!&\! \leq &\!\! \exp(Nm\delta(N,m,|{\cal Z}|)) P_e(Nm, R+m\delta(N,m,|{\cal Z}|), Q_{Nm},P)
\end{eqnarray}
and
\begin{eqnarray}
P_e(Nm,R,\overline{Q}_{Nm},P) & \leq &
\frac{\exp(Nm\delta(N,m,|{\cal Z}|))}{\pi(\hat{Q}_{Nm})} P_e(Nm,
R+m\delta(N,m,|{\cal Z}|), Q_{Nm},P) \\
 & \leq & \exp(2Nm\delta(N,m,|{\cal Z}|)) P_e(Nm,
R+m\delta(N,m,|{\cal Z}|), Q_{Nm},P)
\end{eqnarray}
\end{proof}

Combining this result with Theorem
\ref{thrm:AchievabilityThetaKnown}, we have that there exists a type
$\hat{Q}_{Nm} \in {\cal P}_N({\cal X}^{D(m)})$ such that when the
codewords are chosen uniformly from the type class of
$\hat{Q}_{Nm}$, given by the distribution $\overline{Q}_{Nm}$, the
average probability of error is bounded as
\begin{eqnarray}
P_e(Nm,R,\overline{Q}_{Nm},P) & \!\!\leq\! &
\exp(2Nm\delta(N,m,|{\cal Z}|)) |{\cal
S}| \exp(-Nm\beta(\epsilon {-} m\delta(N,m,|{\cal Z}|)/2, m, |{\cal Y}|)) \\
 & \!\! = \! & |{\cal S}| \exp\left\{-Nm \left[ \beta\left( \epsilon {-} \frac{1}{2} m\delta(N,m,|{\cal Z}|), m, |{\cal Y}|\right) -2\delta(N,m,|{\cal Z}|)
 \right]
 \right\}
\end{eqnarray}


It is then possible to choose $N_0$ such that for all $N > N_0$,
\begin{equation}
\frac{1}{2}|{\cal X}|^{D(m)} \frac{\log(N+1)}{N} <
\frac{\epsilon}{2}
\end{equation}
and
\begin{equation}
2|{\cal X}|^{D(m)} \frac{\log(N+1)}{Nm} < \frac{1}{2}
\beta\left(\frac{\epsilon}{2},m,|{\cal Y}| \right)
\end{equation}
which implies that the probability of error is bounded as
\begin{equation}
P_e(Nm,R,\overline{Q}_{Nm},P) \leq |{\cal S}| \exp\left(-Nm
\frac{1}{2} \beta\left(\frac{\epsilon}{2},m,|{\cal Y}| \right)
\right)
\end{equation}

\subsection{Existence of a universal decoder}

We next show that when a codebook is constructed by choosing
code-trees uniformly from a set, there exists a universal decoder
for the family of finite-state channels with feedback. This result
is shown in the following four steps.
\begin{itemize}
\item We define the notion of a strongly separable family $\Theta$ of
channels given by the causal conditioning distribution. The notion
of strong separability means that the family is well-approximated by
a finite subset of the channels in $\Theta$.
\item We prove that for strongly separable $\Theta$ and code-trees
chosen uniformly from a set, there exists a universal decoder.
\item We describe the universal decoder which ``merges'' the ML
decoders tuned to a finite subset of the channels in $\Theta$.
\item We show that the family of finite-state channels given by the
causal conditioning distribution is a strongly separable family.
\end{itemize}
Our approach follows precisely the approach of Feder and Lapidoth
\cite{FederLapidoth98} except that our codebook consists of
concatenated code-trees (rather than codewords) and our channel is
given by the causal conditioning distribution.

Let $a^{ND(m)}$ denote a concatenated code-tree of depth $Nm$,
$a^{ND(m)} \in {\cal X}^{ND(m)}$ where $D(m)=(|{\cal Z}|^m -
1)/(|{\cal Z}| - 1)$, and let $B_{Nm}$ denote a set of such
code-trees, $B_{Nm} \subseteq {\cal X}^{ND(m)}$. As described in
Lemma \ref{lemma:TypeOnCodetree}, $B_{Nm}$ will be the set of
code-trees of type $\hat{Q}_{Nm} \in {\cal P}_N({\cal X}^{D(m)})$
and the code-tree for each message will be chosen uniformly from
this set, i.e. $\overline{Q}_{Nm}(a^{ND(m)}) = 1/|B_{Nm}|$ for any
$a^{ND(m)} \in B_{Nm}$. As described below, for a given output
sequence $y^{Nm}$, ML decoding will correspond to comparing the
functions $P_{\theta}(y^{Nm}|a^{ND(m)})$, $a^{ND(m)} \in B_{Nm}$.
Note that comparing the functions $P_{\theta}(y^{Nm}|a^{ND(m)})$ is
equivalent to comparing the channel causal conditioning
distributions since $P_{\theta}(y^{Nm}|a^{ND(m)}) =
P_{\theta}(y^{Nm}||x^{Nm})$ as shown below.
\begin{eqnarray}
P_{\theta}(y^{Nm}|a^{ND(m)}) & = & \prod_{i=1}^{Nm}
P_{\theta}(y_i|y^{i-1},a^{ND(m)}) \\
 & \stackrel{(a)} = & \prod_{i=1}^{Nm} P_{\theta}(y_i|y^{i-1},a^{ND(m)},z^{i-1}) \\
 & \stackrel{(b)} = & \prod_{i=1}^{Nm} P_{\theta}(y_i|y^{i-1},x^{i}) \\
 & = & P_{\theta}(y^{Nm}||x^{Nm}) \label{eqn:PYCondA=PYCausCondX}
\end{eqnarray}
In the above, $(a)$ holds since $z^{i-1}$ is a known, deterministic
function of $y^{i-1}$ and $(b)$ holds since the code-tree
$a^{ND(m)}$ together with the feedback sequence $z^{i-1}$ uniquely
determines the channel input $x^i$.

For notational convenience, the results below on the universal
decoder are stated for blocklength $n$, where $A^{D(n)}$ denotes a
code-tree of depth $n$ and $B_n$ denotes a set of such code-trees.
These results extend to the set of concatenated code-trees $B_{Nm}$
and any exceptions are described in the text. Furthermore, we
introduce the following notation: $\phi_{\theta}$ denotes the ML
decoder tuned to channel $\theta$; $P_e(\theta,\phi)$ denotes the
average (over messages and codebooks chosen uniformly from a set)
error probability when decoder $\phi$ is used over channel $\theta$;
and $P_e(\theta,\phi | {\cal C})$ denotes the average (over
messages) error probability when codebook ${\cal C}$ and decoder
$\phi$ is used over channel $\theta$.

\begin{definition}A family of channels $\{P_{Y^n||X^n,\theta}(\cdot||\cdot,\theta), \theta \in
\Theta\}$ defined over common input and output alphabets ${\cal X},
{\cal Y}$ is said to be {\it strongly separable} for the input
code-tree sets $\{B_n\}$, $B_n \subseteq {\cal X}^{(|{\cal Z}|^n
-1)/(|{\cal Z}|-1)}$, if there exists some $\mu
> 0$ that upper bounds the error exponents in the family, i.e., that
satisfies
\begin{equation}
\limsup_{n \rightarrow \infty} \sup_{\theta} - \frac{1}{n} \log
P_e(\theta, \phi_{\theta}) < \mu \label{eqn:DefStrSepMu}
\end{equation}
such that for every $\epsilon >0$ and blocklength $n$, there exists
a subexponential number $K(n)$ (that may depend on $\mu$ and on
$\epsilon$) of channels $\{\theta_{k}^{(n)}\}_{k=1}^{K(n)} \subseteq
\Theta$
\begin{equation}
\lim_{n \rightarrow \infty} \frac{1}{n} \log K(n) = 0
\label{eqn:DefStrSepK}
\end{equation}
that well approximate any $\theta \in \Theta$ in the following
sense: For any $\theta \in \Theta$ there exists $\theta_{k^*}^{(n)}
\in \Theta$, $1 \leq k^* \leq K(n)$, so that
\begin{equation}
P(y^n||x^n,\theta) \leq 2^{n \epsilon}
P(y^n||x^n,\theta_{k^*}^{(n)}), \quad \forall (x^n,y^n):
P(y^n||x^n,\theta) > 2^{-n(\mu+\log|{\cal Y}|)}
\label{eqn:DefStrSepChannels1}
\end{equation}
and
\begin{equation}
P(y^n||x^n,\theta)  \geq 2^{-n \epsilon}
P(y^n||x^n,\theta_{k^*}^{(n)}), \quad \forall (x^n,y^n):
P(y^n||x^n,\theta_{k^*}^{(n)})
> 2^{-n(\mu+\log|{\cal Y}|)} \label{eqn:DefStrSepChannels2}
\end{equation} \label{def:StrongSep}
\end{definition}

The notion of strong separability means that the family $\Theta$ is
well-approximated by a finite subset
$\{\theta_{k}^{(n)}\}_{k=1}^{K(n)} \subseteq \Theta$ of the channels
in the family. In order to prove that the family of finite-state
channels with feedback is separable, we will need a value $\mu$ that
satisfies (\ref{eqn:DefStrSepMu}). The error probability
$P_e(\theta, \phi_{\theta})$ is lower bounded by the probability
that the output sequence $Y^{Nm}$ corresponding to two different
messages is the same for a given realization of the channel and
code-tree. For a random code-tree this is lower bounded by a uniform
memoryless distribution on the channel output. Then $P_e(\theta,
\phi_{\theta}) \geq |\mathcal Y|^{-Nm}$ and a suitable candidate for
$\mu$ is $1 + \log |\mathcal Y|$. The following theorem shows the
existence of a universal decoder for a strongly separable family and
input code-tree sets $B_n$. The proof follows from the proof of
Theorem 2 in \cite{FederLapidoth98} except that we replace the
channel conditional distribution $P(y^n|x^n,\theta)$ with the causal
conditioning distribution $P(y^n||x^n,\theta)$.

\begin{theorem} \label{thrm:UnivDecoderForSeparable}
If a family of channels defined over common finite input and output
alphabets ${\cal X},{\cal Y}$ is strongly separable for the input
code-tree sets $\{B_n\}$, then there exists a sequence of rate-$R$
blocklength-$n$ codes ${\cal C}_n$ and a sequence of decoders
$\{u_n\}$ such that
\begin{equation}
\lim_{n \rightarrow \infty} \sup_{\theta} \frac{1}{n} \log \left(
\frac{P_e(\theta, u_n|{\cal C}_n)}{P_e(\theta,\phi_{\theta})}
\right) = 0
\end{equation}
\end{theorem}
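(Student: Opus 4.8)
The plan is to build a universal decoder by merging the $K(n)$ maximum-likelihood decoders $\phi_{\theta_1^{(n)}}, \dots, \phi_{\theta_{K(n)}^{(n)}}$ associated with the approximating subset guaranteed by strong separability. First I would fix $\epsilon > 0$ and, for each blocklength $n$, invoke Definition \ref{def:StrongSep} to obtain the channels $\{\theta_k^{(n)}\}_{k=1}^{K(n)}$ with $\frac{1}{n}\log K(n) \to 0$, together with the value $\mu$ satisfying (\ref{eqn:DefStrSepMu}). The merged decoder $u_n$ is defined as follows: given $y^n$, compute for each $k$ the ML estimate $\hat{w}_k = \phi_{\theta_k^{(n)}}(y^n)$, and among these $K(n)$ candidate messages choose the one that minimizes the ``loss'' incurred when the associated ML decoder is wrong — concretely, following \cite{FederLapidoth98}, one decodes to the message $\hat{w}_{k^*}$ whose index $k^*$ minimizes a suitably normalized version of $P_{\theta_k^{(n)}}(y^n\|x^n(\hat{w}_k,z^{n-1}))$-type quantities that control the per-channel error. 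The key point is that this merging costs only a factor polynomial (in fact subexponential, via $K(n)$) in the error probability, so no error exponent is lost in the limit.

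The core estimate I would carry out is: for every $\theta \in \Theta$,
\begin{equation*}
P_e(\theta, u_n \mid {\cal C}_n) \;\leq\; c(n)\, P_e(\theta, \phi_{\theta_{k^*}^{(n)}} \mid {\cal C}_n) \;+\; (\text{negligible term}),
\end{equation*}
where $k^*$ is the index approximating $\theta$ in the sense of (\ref{eqn:DefStrSepChannels1})--(\ref{eqn:DefStrSepChannels2}), and $c(n)$ grows subexponentially. The negligible term accounts for output sequences $y^n$ with $P(y^n\|x^n,\theta)$ or $P(y^n\|x^n,\theta_{k^*}^{(n)})$ below the threshold $2^{-n(\mu+\log|{\cal Y}|)}$; since there are at most $|{\cal Y}|^n$ such sequences and each contributes at most $2^{-n(\mu+\log|{\cal Y}|)}$, this total is at most $2^{-n\mu}$, which is below the error-exponent floor $\mu$ by (\ref{eqn:DefStrSepMu}) and hence does not affect the limit. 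On the high-probability set, inequalities (\ref{eqn:DefStrSepChannels1})--(\ref{eqn:DefStrSepChannels2}) let me trade $P(y^n\|x^n,\theta)$ for $P(y^n\|x^n,\theta_{k^*}^{(n)})$ up to the factor $2^{\pm n\epsilon}$; this is precisely where (\ref{eqn:PYCondA=PYCausCondX}) is used so that ML decoding on code-trees is the same as comparing causal conditioning distributions. The chain then gives
\begin{equation*}
\frac{1}{n}\log\frac{P_e(\theta, u_n\mid{\cal C}_n)}{P_e(\theta,\phi_\theta\mid{\cal C}_n)} \;\leq\; \frac{1}{n}\log c(n) \;+\; O(\epsilon),
\end{equation*}
and since $\frac{1}{n}\log c(n)\to 0$ and $\epsilon$ is arbitrary, the $\limsup$ is $\leq 0$; the reverse inequality ($\geq 0$) is immediate because $\phi_\theta$ is the optimal (ML) decoder for channel $\theta$, so $P_e(\theta,u_n\mid{\cal C}_n) \geq P_e(\theta,\phi_\theta\mid{\cal C}_n)$. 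Averaging over the random choice of codebook ${\cal C}_n$ (uniform over code-trees of type $\hat{Q}_{Nm}$, as in Lemma \ref{lemma:TypeOnCodetree}) and then extracting a good codebook ${\cal C}_n$ by the standard argument yields the claimed sequence of codes and decoders.

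The main obstacle — and the step requiring the most care — is the bookkeeping around the threshold $2^{-n(\mu+\log|{\cal Y}|)}$: I must verify that the ``discarded'' low-probability output sequences genuinely contribute a term that decays faster than the best error exponent in the family, and that the merging rule is defined so that it is never fooled into decoding according to a channel $\theta_k^{(n)}$ that badly mismatches the true $\theta$ on the relevant high-probability set. This is exactly the delicate part of Feder and Lapidoth's Theorem 2, and the only genuinely new wrinkle here is that all the likelihood comparisons are over depth-$n$ code-trees rather than length-$n$ codewords; the identity (\ref{eqn:PYCondA=PYCausCondX}) shows this wrinkle is cosmetic, so the argument of \cite{FederLapidoth98} transfers essentially verbatim with $P(y^n|x^n,\theta)$ replaced by $P(y^n\|x^n,\theta)$ throughout.
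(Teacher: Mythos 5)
Your proposal follows essentially the same route as the paper, which proves the theorem by transplanting Feder and Lapidoth's Theorem 2 wholesale, using the identity (\ref{eqn:PYCondA=PYCausCondX}) to reduce likelihood comparisons over code-trees to comparisons of causal conditioning distributions, and invoking strong separability to pass between $\theta$ and its approximant $\theta_{k^*}^{(n)}$ on the high-probability set while absorbing the sub-threshold outputs into a term decaying faster than $2^{-n\mu}$. The only discrepancy is cosmetic: the paper's merged decoder is defined by interleaving the ranking functions $M_{\theta_k}$ (giving $M_{u_K}(a^{ND(m)},y^{Nm})\leq K\,M_{\theta_k}(a^{ND(m)},y^{Nm})$, which is what makes the sub-exponential factor $K(n)$ appear in the error bound for codebooks drawn uniformly from $B_{Nm}$) rather than by comparing normalized likelihoods among the $K$ candidate ML estimates as you sketch, but since you defer to the Feder--Lapidoth construction for the precise rule, this does not change the substance.
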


The universal decoder $u_n$ in Theorem
\ref{thrm:UnivDecoderForSeparable} is given by ``merging'' the ML
decoders tuned to channels $\theta_k$, $1 \leq k \leq K(n)$, that
are used to approximate the family $\Theta$. In order to describe
the merging of the ML decoders, we first present the ranking
function $M_{\theta}$. A ML decoder tuned to the channel $\theta$
can be described by a ranking function $M_{\theta}$ defined as the
mapping
\begin{equation}
M_{\theta} : B_{Nm} \times {\cal Y}^{Nm} \rightarrow \{1,2,\hdots,
|B_{Nm}|\}
\end{equation}
where a rank of 1 denotes the code-tree $a^{ND(m)}$ that is most
likely given output $y^{Nm}$, rank 2 denotes the second most likely
code-tree, and so on. For a given received sequence $y^{Nm}$, every
code-tree in the set $B_{Nm}$ is assigned a rank. For code-trees
$a_i^{ND(m)}, a_j^{ND(m)} \in B_{Nm}$,
\begin{equation}
P_{\theta}(y^{Nm}|a_i^{ND(m)}) > P_{\theta}(y^{Nm}|a_j^{ND(m)})
\implies M_{\theta}(a_i^{ND(m)},y^{Nm}) <
M_{\theta}(a_j^{ND(m)},y^{Nm})
\end{equation}
By (\ref{eqn:PYCondA=PYCausCondX}), comparing the function
$P_{\theta}(y^{Nm}|a^{ND(m)})$ is equivalent to comparing the
channel causal conditioning distribution
$P_{\theta}(y^{Nm}||x^{Nm})$. Letting $\phi_{\theta}$ denote the ML
decoder tuned to $\theta$, we can describe the decoder as
\begin{equation}
\phi_{\theta}(y^{Nm})=w \mbox{ iff } M_{\theta}(a^{ND(m)}(w),y^{Nm})
< M_{\theta}(a^{ND(m)}(w'),y^{Nm}), \forall w' \neq w
\end{equation}
where $a^{ND(m)}(w)$ represents the code-tree chosen for message
$w$, $1 \leq w \leq e^{NR}$. In the case that multiple code-trees
maximize the likelihood $P_{\theta}(y^{Nm}|a^{ND(m)})$ for a given
$y^{Nm}$, the ranking function $M_{\theta}$ determines which
code-tree (and correspondingly message) is chosen by the decoder. In
the case that the same code-tree from $B_{Nm}$ is chosen for more
than one message, the ranks will be identical and a decoding error
will occur. Note that for a given output sequence $y^{Nm}$, the
decoder $\phi_{\theta}(y^{Nm})$ will not always return the code-tree
$a^{ND(m)} \in B_{Nm}$ for which $M_{\theta}(a^{ND(m)},y^{Nm})=1$,
since the code-tree $a^{ND(m)}$ may or may not be in the codebook.

Now consider a set of $K$ channels from the family $\Theta$, given
by $\theta_k \in \Theta, 1 \leq k \leq K$. The codebooks for these
$K$ channels will be drawn randomly from the set $B_{Nm}$. (Note
that the same set $B_{Nm}$ is used for all channels $\theta_k$
since, as shown in Lemma \ref{lemma:TypeOnCodetree}, the type
$\hat{Q}_{Nm} \in {\cal P}_N({\cal X}^{D(m)})$ is chosen independent
of the channel $P$.) The $K$ ML decoders matched to these channels,
denoted $\phi_{\theta_1}, \phi_{\theta_2}, \hdots, \phi_{\theta_K}$,
can be merged as shown in \cite{FederLapidoth98}. The merged decoder
$u_K$ is described by its ranking function $M_{u_K}$ which is a
mapping
\begin{equation}
M_{u_K} : B_{Nm} \times {\cal Y}^{Nm} \rightarrow \{1,2,\hdots,
|B_{Nm}|\}
\end{equation}
that ranks all of the code-trees in $B_{Nm}$ for each output
sequence $y^{Nm}$. The ranking $M_{u_K}$ is established for a given
$y^{Nm}$ by assigning rank 1 to the code-tree for which
$M_{\theta_1} = 1$, rank 2 to the code-tree for which
$M_{\theta_2}=1$, rank 3 to the code-tree for which
$M_{\theta_3}=1$, and so on. After considering the code-trees with
rank 1 for all $M_{\theta_k}$, the code-trees with rank 2 in
$M_{\theta_k}$, $1 \leq k \leq K$ are considered in order and added
into the ranking $M_{u_K}$. The process continues until the
code-trees with rank $|B_{Nm}|$ for all $M_{\theta_k}$ have been
assigned a rank in $M_{u_K}$. Throughout this process, if a
code-tree has already been ranked, it is simply skipped over, and
its original (higher) ranking is maintained. The rank of a code-tree
in $M_{u_K}$ can be upper bounded according to its rank in
$M_{\theta_k}$ as shown in \cite{FederLapidoth98} and stated as
follows.
\begin{equation}
M_{\theta_k}(a^{ND(m)},y^{Nm})=j \implies
M_{u_K}(a^{ND(m)},y^{Nm})\leq (j-1)K + k, \quad \forall a^{ND(m)}
\in B_{Nm}, \forall k, 1 \leq k \leq K
\end{equation}
This bound on the rank in $M_{u_K}$ implies another (looser) upper
bound.
\begin{equation} \label{eqn:UpperBoundRankMerged}
M_{u_K}(a^{ND(m)},y^{Nm})\leq K M_{\theta_k}(a^{ND(m)},y^{Nm}),
\quad \forall (a^{ND(m)}, y^{Nm}) \in B_{Nm} \times {\cal Y}^{Nm},
\forall k, 1 \leq k \leq K
\end{equation}
Equation (\ref{eqn:UpperBoundRankMerged}) can be used to upper bound
the error probability when sequences output from the channel $\theta
\in \Theta$ are decoded by the merged decoder $u_K$. This is a key
element of the proof of Theorem \ref{thrm:UnivDecoderForSeparable}.
Finally, we state the lemma below, which shows that the family of
finite-state channels defined by the causal conditioning
distribution is strongly separable. Together with Theorem
\ref{thrm:UnivDecoderForSeparable}, this establishes existence of a
universal decoder for the problem we consider, and completes our
proof of achievability.

\begin{lemma}The family of all causal-conditioning finite-state channels defined
over common finite input, output, and state alphabets ${\cal
X},{\cal Y}, {\cal S}$ is strongly separable in the sense of
Definition \ref{def:StrongSep} for any input code-tree sets
$\{B_n\}$. \label{lemma:CompoundFSCIsStronglySep}
\end{lemma}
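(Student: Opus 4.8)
The plan is to exhibit, for every blocklength $n$ and every $\epsilon>0$, an explicit finite subset $\{\theta_k^{(n)}\}_{k=1}^{K(n)}\subseteq\Theta$ that approximates the whole family in the sense of Definition \ref{def:StrongSep}, and to verify the two required ingredients: (i) the cardinality $K(n)$ is subexponential, i.e. $\frac1n\log K(n)\to 0$; and (ii) the approximation inequalities (\ref{eqn:DefStrSepChannels1})--(\ref{eqn:DefStrSepChannels2}) hold with the chosen $\mu=1+\log|{\cal Y}|$ (which is a legitimate choice since, as noted in the text, $P_e(\theta,\phi_\theta)\ge|{\cal Y}|^{-Nm}$, so (\ref{eqn:DefStrSepMu}) is satisfied). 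The key observation, exactly as in Feder--Lapidoth \cite{FederLapidoth98}, is that a finite-state channel is parametrized by the finitely many transition probabilities $P(y,s\,|\,x,s',\theta)$, one number for each $(y,s,x,s')\in{\cal Y}\times{\cal S}\times{\cal X}\times{\cal S}$; so $\Theta$ is (a subset of) a bounded-dimensional simplex, and we may quantize it.

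The steps, in order. First, fix $n$ and $\epsilon$. Choose a quantization parameter $\gamma=\gamma(\epsilon,\mu,n)$ of the form $\gamma = c/n$ for a suitable constant $c$ depending on $\epsilon,\mu,|{\cal X}|,|{\cal Y}|,|{\cal S}|$, and build a $\gamma$-grid on the probability simplices governing the per-symbol transition kernels; let $\{\theta_k^{(n)}\}$ be the resulting finite set of "grid" finite-state channels. Its cardinality is at most $(1/\gamma)^{|{\cal X}||{\cal Y}||{\cal S}|^2} = (n/c)^{|{\cal X}||{\cal Y}||{\cal S}|^2}$, which is polynomial in $n$, hence $\frac1n\log K(n)\to 0$, giving (\ref{eqn:DefStrSepK}). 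Second, given an arbitrary $\theta\in\Theta$, pick $\theta_{k^*}^{(n)}$ at grid-distance $\le\gamma$ from $\theta$ in every transition coordinate. The one subtlety is that multiplicative control on the causal-conditioning product $P(y^n\|x^n,\theta)=\prod_{i=1}^nP(y_i\,|\,x^i,y^{i-1},\theta)$ requires controlling each factor $P(y_i\,|\,x^i,y^{i-1},\theta)$, and that factor is a marginal over the (unobserved) state obtained by summing products of the transition kernels over state trajectories. Here I would invoke the standard bound: if every transition coordinate is perturbed by at most $\gamma$, and if the product $P(y^n\|x^n,\theta)$ is not too small — precisely, $P(y^n\|x^n,\theta)>2^{-n(\mu+\log|{\cal Y}|)}$, which forces each factor to be bounded below by roughly $2^{-(\mu+\log|{\cal Y}|)}$ on average and lets one pass from additive to multiplicative error — then $|\log P(y^n\|x^n,\theta)-\log P(y^n\|x^n,\theta_{k^*}^{(n)})|\le n\epsilon$ for $\gamma$ small enough (of order $\epsilon/n$ times the lower-bound constant), which is exactly (\ref{eqn:DefStrSepChannels1}); the symmetric statement (\ref{eqn:DefStrSepChannels2}), under the hypothesis that $P(y^n\|x^n,\theta_{k^*}^{(n)})$ (rather than $P(y^n\|x^n,\theta)$) is large, is handled identically by symmetry of the grid. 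Third, collect these facts to conclude that the family meets all three clauses of Definition \ref{def:StrongSep}, for \emph{any} choice of code-tree sets $\{B_n\}$ (the definition's inequalities are required only over $(x^n,y^n)$ pairs and do not reference $B_n$ beyond its alphabet), which is precisely the assertion of the lemma.

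The main obstacle — and the only place where the code-tree / causal-conditioning setting differs from \cite{FederLapidoth98} — is the passage from per-coordinate perturbation of the transition kernels to multiplicative control of the causal-conditioning distribution $P(y^n\|x^n,\theta)$, because that distribution is a sum over hidden state trajectories of products of kernels and so is not itself a product of independently perturbable factors. The way around it is the lower-bound hypothesis built into (\ref{eqn:DefStrSepChannels1})--(\ref{eqn:DefStrSepChannels2}): once $P(y^n\|x^n,\theta)$ (or $P(y^n\|x^n,\theta_{k^*}^{(n)})$) exceeds $2^{-n(\mu+\log|{\cal Y}|)}$, a Lipschitz estimate on $\log$ around that value converts the $O(n\gamma)$ additive error into an $O(n\gamma\cdot 2^{\mu+\log|{\cal Y}|})$ multiplicative-log error, and choosing $\gamma\le\epsilon\,2^{-(\mu+\log|{\cal Y}|)}/C$ for the appropriate combinatorial constant $C$ makes this at most $n\epsilon$. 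Since the causal-conditioning distribution for a finite-state channel is a particular finite-state channel in the sense of \cite{FederLapidoth98} (one whose "letters" are drawn from the code-tree alphabet ${\cal X}^{D(m)}$ rather than ${\cal X}$), the remainder of the argument is a direct transcription of their Lemma establishing separability of the finite-state family, with $P(y^n|x^n,\theta)$ replaced throughout by $P(y^n\|x^n,\theta)$.
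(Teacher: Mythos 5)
Your overall strategy is the paper's: reduce to the separability lemma of Feder and Lapidoth by observing that the causal-conditioning law of a finite-state channel is determined by the finitely many per-symbol kernel entries $P(y,s|x,s',\theta)$ through the identity $P(y^n\|x^n,s_0,\theta)=\sum_{s^n}\prod_{i=1}^n P(y_i,s_i|x_i,s_{i-1},\theta)$, quantize those entries into polynomially (hence subexponentially) many cells, and take $\mu=1+\log|\mathcal Y|$. The paper does exactly this: it establishes the displayed identity (its Eqns.~(\ref{eqn:EquivFederLapidoth63})--(\ref{eqn:EquivFederLapidoth64})) and then invokes \cite[Lemma 12]{FederLapidoth98} essentially verbatim, noting only the changed value of $\mu$.

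However, the explicit approximation argument you interpolate in place of that citation has a gap at its central step. A uniform additive $\gamma$-grid on the kernel simplices does not yield multiplicative control of $P(y^n\|x^n,\theta)$: the hypothesis $P(y^n\|x^n,\theta)>2^{-n(\mu+\log|\mathcal Y|)}$ lower-bounds only the geometric mean of the factors $P(y_i|x^i,y^{i-1},\theta)$, not the individual factors, and certainly not the individual kernel entries appearing inside the sum over state trajectories; an entry much smaller than $\gamma$ can be perturbed by an unbounded multiplicative factor. Moreover, the Lipschitz constant of $\log$ at the threshold value is of order $2^{n(\mu+\log|\mathcal Y|)}$ --- exponential in $n$, not the constant $2^{\mu+\log|\mathcal Y|}$ you write --- so converting an additive error of order $n\gamma$ (or, after summing over the $|\mathcal S|^n$ state trajectories, of order $|\mathcal S|^n\gamma$) into a log-error of $n\epsilon$ would force $\gamma$ to be exponentially small and hence $K(n)$ exponentially large, violating (\ref{eqn:DefStrSepK}). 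The construction in \cite[Lemma 12]{FederLapidoth98} is built precisely to avoid this: it gives multiplicative rather than additive control on the kernel entries and treats the very small entries separately, using the fact that trajectories containing them contribute negligibly to the sum relative to the threshold $2^{-n(\mu+\log|\mathcal Y|)}$. If you replace your grid-and-Lipschitz step by that construction --- which is what your closing sentence in effect does --- your proof coincides with the paper's.
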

\begin{proof}See Appendix \ref{app:CompoundFSCISStronglySep}.
\end{proof}

\section{Compound Gilbert-Elliot channel} \label{section:compoundGE}

The Gilbert-Elliot channel is a widely used example of a finite
state channel. It has a state space consisting of `good' and `bad'
states, ${\cal S}= \{G,B\}$ and in either of these two states, the
channel is a binary symmetric channel (BSC). The Gilbert-Elliot
channel is a stationary and ergodic Markovian channel, i.e.,
$P(y_i,s_i|x_i,s_{i-1},\theta)=P(s_i|s_{i-1},\theta)P(y_i|x_i,s_{i-1},\theta)$
is satisfied and the Markov process described by
$P(s_i|s_{i-1},\theta)$ is a stationary and ergodic process. For a
given channel $\theta$, the BSC crossover probability is given by
$P_B(\theta)$ for $s_i=B$ and $P_G(\theta)$ for $s_i=G$. The channel
state $S_i$ forms a stationary Markov process with transition
probabilities
\begin{eqnarray}
g(\theta) & = & P(S_i=G|S_{i{-}1}=B) =
1{-}P(S_i=B|S_{i{-}1}=B) \\
b(\theta) & = & P(S_i=B|S_{i{-}1}=G) = 1{-}P(S_i=G|S_{i{-}1}=G)
\end{eqnarray}
For a given $\theta$, the Gilbert-Elliot channel is equivalent to
the following additive noise channel
\begin{equation}
Y_i = X_i \oplus V_i
\end{equation}
where $\oplus$ denotes modulo-2 addition and $V_i \in \{0,1\}$.
Conditioned on the state process $\{S_i\}_{-\infty}^{+\infty}$, the
noise $V_i$ forms a Bernoulli process given by
\begin{equation}
\mbox{P}(V_i=1|\{S_i\}_{-\infty}^{+\infty},\theta) = \begin{cases}
P_B(\theta), & S_i=B \\ P_G(\theta), & S_i=G.
\end{cases}
\end{equation}
For a given channel $\theta$, the capacity of the Gilbert-Elliot
channel is found in \cite{Mushkin89} and is achieved by a uniform
Bernoulli input distribution.

The following example illustrates that the feedback capacity of a
channel with memory is in general {\it not} given by
\begin{equation}
C_{FB}=\inf_{\theta } C_\theta,
\end{equation}
as in the memoryless case.

\begin{example}\cite{LapidothTelatar98} \label{example:compoundGE}
Consider the example of a Gilbert-Elliot channel where
$P_G(\theta)=0,P_B(\theta)=0.5, b(\theta)=g(\theta)=2^{-\theta}$ for
$\theta= 1,2,3....$ with feedback. The compound feedback capacity of
this channel is zero because assuming that we start in the bad
state, for any blocklength $n$, the channel that corresponds to
$\theta=n$, will remain in the bad state for the duration of the
transmission with probability $(1-2^{-n})^n>1-n2^{-n}\geq
\frac{1}{2}$. While the channel is in the bad state the probability
of error for decoding the message is positive with or without
feedback, hence no reliable communication is possible.

However if we fix $\theta$, then the capacity $C_{\theta}$ is at
least $1-h_b(\frac{1}{4})$, because we can use a deep enough
interleaver to make the channel look like memoryless BSC with
crossover probability $\frac{1}{4}$.
\end{example}

A Gilbert-Elliot channel is described by the four parameters
$g(\theta),b(\theta),P_G(\theta),$ and $P_B(\theta)$ that lie
between 0 and 1 and for any fixed $n$, $P(y^n||x^n,s_0)$ is
continuous in those parameters. The continuity of $P(y^n||x^n,s_0)$
follows from the fact that $P(y_i,s_i|x_{i},s_{i-1})$ is continuous
in the four parameters for any $i\geq 1$, and also because (as shown
in Appendix \ref{app:CompoundFSCISStronglySep} in Eqns.
(\ref{eqn:EquivFederLapidoth63}) and
(\ref{eqn:EquivFederLapidoth64})) we can express $P(y^n||x^n,s_0)$
as
\begin{eqnarray}
P(y^n||x^n,s_0)&=&\sum_{s^n} P(y^n,s^n||x^n,s_0) \nonumber \\
&=&\sum_{s^n} \prod_{i=1}^n P(y_i,s_i|x_i,s_{i-1}).
\end{eqnarray}

Let us denote by $\overline {\Theta}$ the closure of the family of
channels. Hence instead of $\inf_{\theta\in \Theta}$ we can write
$\min_ {\theta\in {\overline\Theta}}$ since $\overline {\Theta}$ is
compact and since $\mathcal I(Q;P)$ is continuous in $P$. Now, let
$Q_u(x^n)$ denote the uniform distribution over $\mathcal X^n$. We
have
\begin{eqnarray}
\max_{Q}\min_{s_0,\theta} \mathcal I(Q;P)&\stackrel{(a)}{\leq}&
\min_{s_0,\overline{\theta}} \max_{Q}
\mathcal I(Q;P)\nonumber \\
&\stackrel{(b)}{=}& \min_{s_0,\overline{ \theta}}
I(Q_u;P)\nonumber \\
\end{eqnarray}
where $(a)$ follows from the fact that $\max \min\leq \min \max $
and $(b)$ follows from the fact that for any channel a uniform
distribution maximizes its capacity. Therefore we can restrict the
maximization to the uniform distribution $Q_u$ instead of
$Q(x^n||y^{n-1})$. Hence feedback does not increase the capacity of
the compound Gilbert-Elliot channel. This result holds for any
family of FSCs for which the uniform input distribution achieves the
capacity of each channel in the family and is closely related to
Alajaji's result \cite{Alajaji95} that feedback does not increase
the capacity of discrete additive noise channels.

%
%

\section{Feedback capacity is positive if and only if capacity without feedback is
positive} \label{section:IFF}

In this section we show that the capacity of a compound channel that
consists of stationary and uniformly ergodic Markovian channels is
positive if and only if it is positive for the case that feedback is
allowed. The intuition of this result comes mainly from Lemma
\ref{lemma:directed0iff} that states that
\begin{equation}
\max_{Q_{X^n||Y^{n-1}}} I(X^n \to Y^n) =0 \iff \max_{Q_{X^n}} I(X^n
\to Y^n) =0.
\end{equation}
The reason our proof is restricted to the family of channels that
are stationary and uniformly ergodic Markovian is because for this
family of channels we can show that the capacity is zero only if for
every finite $n$,
\begin{equation}
\max_{Q_{X^n||Y^{n-1}}} \inf_{\theta} I(X^n \to Y^n | \theta) = 0.
\end{equation}

A stationary and ergodic Markovian channel is a FSC where the state
of the channel is a stationary and ergodic Markov process that is
not influenced by the channel input and output. In other words, the
conditional probability of the channel output and state given the
input and previous state is given by
\begin{equation}
P(y_i,s_i|x_i,s_{i-1},\theta)=P(s_i|s_{i-1},\theta)P(y_i|x_i,s_{i-1},\theta)
\label{eqn:MarkovChannel}
\end{equation}
where the Markov process, described by the transition probability
$P(s_i|s_{i-1},\theta)$, is stationary and ergodic. We say that the
family of channels is {\it uniformly ergodic} if all channels in the
family are ergodic and for all $\epsilon>0$ there exists an
$M(\epsilon)$ such that for all $n>M$
\begin{equation}
|\Pr(S_n=s|s_0,\theta)-P(s|\theta)|\leq \epsilon, \; \; \forall
s_0\in \mathcal S,s\in \mathcal S, \theta\in \Theta
\end{equation}
where $P(s|\theta)$ is the stationary (equilibrium) distribution of
the state for channel $\theta$. We define the sequence
$C_n^{Markovian}$ as
\begin{equation}\label{eqn:C_nMarkov_def}
C_n^{Markovian}= \max_{Q_{X^n||Z^{n-1}}} \inf_{ \theta} \frac{1}{n}
I(X^n\to Y^n| \theta).
\end{equation}

\begin{theorem} \label{thrm:ZeroFBCIffZeroNFBC}
The channel capacity of a family of stationary and uniformly ergodic
Markovian channels is positive if and only if the feedback capacity
of the same family is positive.
\end{theorem}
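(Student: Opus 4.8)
The plan is to prove the two directions separately. One direction is trivial: if the capacity without feedback is positive, then since any code without feedback is a special case of a code with feedback (simply ignore the feedback), the feedback capacity is positive as well. So the content is entirely in the other direction: if the feedback capacity is positive then the capacity without feedback is positive; equivalently, the contrapositive, if the capacity without feedback is zero then the feedback capacity is zero. I would organize the argument around the quantities $C_n^{Markovian}$ defined in (\ref{eqn:C_nMarkov_def}) and $C_n$ from (\ref{eqn:C_nDefinition}), using property 2) from the list in Section II, namely $|I(X^n\to Y^n|\theta)-I(X^n\to Y^n|S,\theta)|\le \log|{\cal S}|$, to pass between conditioning on the initial state and not.

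First I would establish that for a family of stationary and uniformly ergodic Markovian channels, the feedback capacity $C = \lim_n C_n$ equals $\lim_n C_n^{Markovian}$; intuitively, conditioning on the initial state $s_0$ versus leaving it to be distributed by its equilibrium law costs at most $\log|{\cal S}|$ in directed information (by property 2), applied with the state $S$ instead of $S_0$ — here uniform ergodicity is what lets the process be taken stationary so that $S_0$ behaves like a generic state), which is an $O(1/n)$ correction after normalization. Second — and this is the crux — I would show that for this family, if the feedback capacity is positive then in fact there is some finite $n$ with $C_n^{Markovian} = \max_{Q_{X^n||Z^{n-1}}}\inf_\theta \frac1n I(X^n\to Y^n|\theta) > 0$. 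The subadditivity/superadditivity structure used to prove convergence of $C_n$ (Proposition \ref{prop:existenceC}) should give $\lim_n C_n^{Markovian} = \sup_n \hat C_n^{Markovian}$ with $\hat C_n^{Markovian} = C_n^{Markovian} - \frac{\log|{\cal S}|}{n}$, so if the limit is positive some term must be positive; one must be a little careful because the subtracted $\frac{\log|{\cal S}|}{n}$ could in principle keep every $\hat C_n^{Markovian}$ nonpositive while the limit is positive, but $C_n^{Markovian}\ge \hat C_n^{Markovian}$ and $C_n^{Markovian}$ is, up to the vanishing correction, monotone in the relevant sense, so positivity of the limit forces $C_n^{Markovian}>0$ for all large $n$.

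Third, having fixed such an $n$, I would invoke Lemma \ref{lemma:directed0iff} (stated in the excerpt), which asserts $\max_{Q_{X^n||Y^{n-1}}} I(X^n\to Y^n) = 0 \iff \max_{Q_{X^n}} I(X^n\to Y^n) = 0$ — applied channel-by-channel, or better, in a form uniform over $\theta$ — to conclude that $\max_{Q_{X^n||Z^{n-1}}}\inf_\theta \frac1n I(X^n\to Y^n|\theta) > 0$ implies $\max_{Q_{X^n}}\inf_\theta \frac1n I(X^n\to Y^n|\theta) > 0$. The point of Lemma \ref{lemma:directed0iff} is precisely that the set of input causal-conditioning distributions making the directed information vanish coincides with the set of feedback-free input distributions making it vanish, so an $\inf_\theta$ over a compact family (use the closure $\overline\Theta$ as in Section \ref{section:compoundGE}, exploiting continuity of $I$ in the channel) that is bounded away from zero under causal conditioning is bounded away from zero under plain product inputs too. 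Finally, this positive $n$-letter mutual information without feedback yields, by the (feedback-free) compound-channel coding theorem of Lapidoth and Telatar, equation (\ref{eqn::capacity_no_feedback}), a positive compound capacity without feedback, completing the contrapositive.

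The main obstacle I anticipate is the second step: showing that positivity of the asymptotic feedback capacity for this family actually descends to positivity at some finite blocklength $n$, uniformly over $\theta$. This is where the hypothesis of stationarity and \emph{uniform} ergodicity is essential — it is what allows the $\inf_\theta$ and the limit in $n$ to be interchanged well enough that a positive limit cannot be "hidden" in the tail, and it is what makes the $\log|{\cal S}|$ state-uncertainty term a genuinely negligible correction uniformly in $\theta$. The rest (the two applications of the cited lemmas and the trivial direction) is essentially bookkeeping.
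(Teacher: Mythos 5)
Your proposal is correct and follows essentially the same route as the paper: the paper proves the contrapositive form ($C_{NFB}=0\Rightarrow C_{FB}=0$) by using Lemma \ref{lemma:capacity_stationary_ergodic_Markovian} to reduce to a fixed blocklength $n$, then passing to the closure $\overline\Theta$ by continuity and compactness and applying Lemma \ref{lemma:directed0iff} to the uniform input to show the zero-capacity channel $P(y^n||x^n)=P(y^n)$ lies in $\overline\Theta$, which kills the feedback $n$-letter quantity as well. Your steps are the same ingredients in the reverse (logically equivalent) order, and your worry in the second step resolves exactly as you suspect, since $nC_n^{Markovian}$ is superadditive so the limit equals $\sup_n C_n^{Markovian}$.
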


Since a memoryless channel is a FSC with only one state, the theorem
implies that the feedback capacity of a memoryless compound channel
is positive if and only if it is positive without feedback. The
theorem also implies that for a stationary and ergodic
point-to-point channel (not compound), feedback does not increase
the capacity for cases that the capacity without feedback is zero.
The stationarity of the channels in Theorem
\ref{thrm:ZeroFBCIffZeroNFBC} is not necessary since according to
our achievability definition, if a rate is less than the capacity,
it is achievable regardless of the initial state. We assume
stationarity here in order to simplify the proofs. The uniform
ergodicity is essential to the proof that is provided here but there
are also other family of channels that have this property. For
instance, for the regular point-to-point Gaussian channel this
result can be concluded from factor two result that claims that
feedback at most doubles capacity (c.f.,
\cite{Pinsker_feedback_double,Ebert,Cover89}). The proof of Theorem
\ref{thrm:ZeroFBCIffZeroNFBC} is based on the following lemmas. We
refer the reader to Appendix \ref{app:LemmasForIFF} for the proofs
of these lemmas.

\begin{lemma}\label{lemma:QkQm_inequality}
For any channel with feedback, if the input to the channel is
distributed according to
\begin{equation*} Q(x^n||z^{n-1}) =
Q(x_1^k||z_1^{k-1})Q(x_{k+1}^{n}||z_{k+1}^{n-1}),
\end{equation*}
then
\begin{equation}\label{e_in_sup} I(X^n\to Y^n) \geq
I(X^k\to Y^k) + I(X_{k+1}^n\to Y_{k+1}^n).
\end{equation}
\end{lemma}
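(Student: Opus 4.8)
The goal is to show that splitting the causal-conditioning input distribution into two independent blocks (a prefix on $x_1^k$ and a suffix on $x_{k+1}^n$) makes the directed information superadditive. The natural strategy is to expand $I(X^n \to Y^n) = \sum_{i=1}^n I(Y_i; X^i | Y^{i-1})$ and split the sum at $i=k$. The first $k$ terms, $\sum_{i=1}^k I(Y_i; X^i | Y^{i-1})$, are already exactly $I(X^k \to Y^k)$, so that piece is free. The work is all in the tail: I must show $\sum_{i=k+1}^n I(Y_i; X^i | Y^{i-1}) \geq \sum_{i=k+1}^n I(Y_i; X_{k+1}^i | Y_{k+1}^{i-1}) = I(X_{k+1}^n \to Y_{k+1}^n)$.

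\textbf{Key steps.} First I would write, for each $i > k$,
\begin{equation*}
I(Y_i; X^i | Y^{i-1}) = I(Y_i; X_1^k, X_{k+1}^i | Y^{i-1}) \geq I(Y_i; X_{k+1}^i | Y^{i-1}),
\end{equation*}
using the chain rule for mutual information and nonnegativity of the discarded term $I(Y_i; X_1^k | X_{k+1}^i, Y^{i-1})$. This reduces the claim to showing $I(Y_i; X_{k+1}^i | Y^{i-1}) \geq I(Y_i; X_{k+1}^i | Y_{k+1}^{i-1})$, i.e. that conditioning on the extra past outputs $Y_1^k$ does not decrease this mutual information. That inequality is \emph{not} true for a general joint distribution, so here is where the hypothesis on $Q$ must be used: because the input is generated by $Q(x_1^k\|z_1^{k-1})\,Q(x_{k+1}^n\|z_{k+1}^{n-1})$ with the second block's randomization not depending on the first block's feedback, the tail inputs $X_{k+1}^i$ are conditionally independent of the past outputs $Y_1^k$ given $Y_{k+1}^{i-1}$ — intuitively, $Y_1^k$ influences $Y_i$ only through the channel state, and once we are in the "renewal"-free setting the relevant Markov structure (combined with property \ref{HaimChainRuleCausalCond}) of the excerpt, the chain rule $P(x^n,y^n) = Q(x^n\|y^{n-1})P(y^n\|x^n)$) gives $Y_1^k \to Y_{k+1}^{i-1} \to X_{k+1}^i$. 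Given that Markov chain, conditioning on $Y_1^k$ can only increase $I(Y_i; X_{k+1}^i | Y_{k+1}^{i-1})$ (this is the standard fact that $I(A;B|C) \le I(A;B|C,D)$ when $D \to C \to B$, applied with $A = Y_i$, $B = X_{k+1}^i$, $C = Y_{k+1}^{i-1}$, $D = Y_1^k$). Summing over $i = k+1, \dots, n$ then yields exactly $I(X_{k+1}^n \to Y_{k+1}^n)$, and adding the prefix piece completes the proof.

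\textbf{Main obstacle.} The delicate point — and the one I would spend the most care on — is justifying the Markov relation $Y_1^k \to Y_{k+1}^{i-1} \to X_{k+1}^i$ (equivalently, the conditional independence of $X_{k+1}^i$ and $Y_1^k$ given $Y_{k+1}^{i-1}$). It relies crucially on the product form of $Q$: the encoder's choice of $x_{k+1}^i$ is a function of the message and the feedback $z_{k+1}^{i-1}$ only, and its randomization is independent of everything in the first block. One has to write out the joint law using property \ref{HaimChainRuleCausalCond}) of the excerpt, factor $Q(x^n\|y^{n-1}) = Q(x_1^k\|y_1^{k-1})Q(x_{k+1}^n\|y_{k+1}^{n-1})$ (the feedback $z$ being a deterministic function of $y$), and verify that after marginalizing appropriately the dependence of $X_{k+1}^i$ on $Y_1^k$ factors through $Y_{k+1}^{i-1}$. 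Once that conditional independence is on the table, the rest is the routine chain-rule and data-processing manipulation sketched above.
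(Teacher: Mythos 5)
Your first reduction is fine: splitting $\sum_{i=1}^n I(Y_i;X^i|Y^{i-1})$ at $i=k$ gives $I(X^k\to Y^k)$ exactly, and dropping $X_1^k$ via the chain rule gives $I(Y_i;X^i|Y^{i-1})\ge I(Y_i;X_{k+1}^i|Y^{i-1})$ for $i>k$. The gap is precisely the step you flag as delicate: the Markov chain $Y_1^k\to Y_{k+1}^{i-1}\to X_{k+1}^i$ is \emph{false} for a channel with memory, even under the product form of $Q$. The product form controls only the encoder: it says that $X_j$, $j>k$, is conditionally independent of the first block given the second block's past \emph{inputs and} outputs $(X_{k+1}^{j-1},Y_{k+1}^{j-1})$. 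Once you condition on the outputs $Y_{k+1}^{i-1}$ alone, recovering information about $X_{k+1}^{i-1}$ from them requires inverting the channel, and the channel's memory couples the two blocks. Concretely, take $k=1$, $i=3$: then $P(x_2\mid y_1,y_2)\propto Q(x_2)\sum_{x_1}P(x_1\mid y_1)P(y_2\mid x_1,x_2,y_1)$, which depends on $y_1$ for a generic channel with memory, so $X_2\not\perp Y_1\mid Y_2$. Hence the invocation of ``conditioning increases mutual information when $D\to C\to B$'' is not justified, and the term-by-term inequality $I(Y_i;X_{k+1}^i|Y^{i-1})\ge I(Y_i;X_{k+1}^i|Y_{k+1}^{i-1})$ does not follow from your argument; only the inequality for the whole sum is guaranteed (by the lemma itself).

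The paper avoids this trap by expanding the directed information with Kim's identity, $I(X^n\to Y^n)=\sum_{i=1}^n I(X_i;Y_i^n|X^{i-1},Y^{i-1})$. In that decomposition each tail term is $H(X_i|X^{i-1},Y^{i-1})-H(X_i|X^{i-1},Y^{i-1},Y_i^n)$; the first entropy equals $H(X_i|X_{k+1}^{i-1},Y_{k+1}^{i-1})$ exactly, because $P(x_i|x^{i-1},y^{i-1})=Q(x_i|x_{k+1}^{i-1},z_{k+1}^{i-1})$ under the product form --- a statement purely about the encoder, valid for any channel --- and the second entropy is upper-bounded by $H(X_i|X_{k+1}^{i-1},Y_{k+1}^{i-1},Y_i^n)$ simply because conditioning reduces entropy; the prefix terms are handled by discarding $Y_{k+1}^n$. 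To repair your route you would need a correct argument at the level of the sum rather than your (false) per-term Markov relation; the cleanest fix is to switch to Kim's decomposition.
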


\begin{lemma}\label{lemma:capacity_stationary_ergodic_Markovian}
The feedback capacity of a family of stationary and uniformly
ergodic Markovian channels is
\begin{equation}
\lim_{n \rightarrow \infty} C_n^{Markovian}.
\end{equation}
The limit of $C_n^{Markovian}$ exists and is equal to $\sup_{n}
C_n^{Markovian}$.
\end{lemma}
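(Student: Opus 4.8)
The feedback capacity equals $\lim_n C_n$ by Theorem~\ref{thrm:FBCapCompoundFSC} and Proposition~\ref{prop:existenceC}, so the plan is to show that $\lim_n C_n^{Markovian}$ exists and equals $\lim_n C_n$. I would prove the two inequalities $\liminf_n C_n^{Markovian}\ge\lim_n C_n$ and $\lim_n C_n\ge\sup_m C_m^{Markovian}$; since $\sup_m C_m^{Markovian}\ge\limsup_n C_n^{Markovian}$ trivially, these sandwich all of the quantities together and simultaneously give existence of the limit and its equality with $\sup_n C_n^{Markovian}$.

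For the first inequality, fix $Q$ and $\theta$ and let $\pi_\theta$ denote the equilibrium state distribution $P(\cdot|\theta)$. Because the Markovian state process is stationary, $I(X^n\to Y^n|\theta)$ is computed with $S_0\sim\pi_\theta$. The bound $|I(X^n\to Y^n|\theta)-I(X^n\to Y^n|S,\theta)|\le\log|{\cal S}|$ recorded above, applied with $S=S_0$, together with the fact that directed information conditioned on a random variable is the average of the conditional directed informations, gives $I(X^n\to Y^n|\theta)\ge\sum_{s_0}\pi_\theta(s_0)I(X^n\to Y^n|s_0,\theta)-\log|{\cal S}|\ge\inf_{s_0}I(X^n\to Y^n|s_0,\theta)-\log|{\cal S}|$. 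Taking $\inf_\theta$ and then $\max_Q$ yields $C_n^{Markovian}\ge C_n-\frac{\log|{\cal S}|}{n}$, hence $\liminf_n C_n^{Markovian}\ge\lim_n C_n$.

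The crux is the second inequality. Fix $m$ and $\epsilon>0$ and let $Q_m^*$ be a maximizer in $C_m^{Markovian}$ (one exists: the causal-conditioning simplex is compact and $Q\mapsto\inf_\theta\mathcal I(Q;P_{\cdot,\theta})$ is upper semicontinuous). For $n=Lm$ I would use the $L$-fold product input $Q_n=(Q_m^*)^{\otimes L}\in{\cal P}({\cal X}^n||{\cal Z}^{n-1})$, which factors at every multiple of $m$, and apply Lemma~\ref{lemma:QkQm_inequality} $L-1$ times in succession to the finite-state channel $\theta$ with initial state $s_0$, obtaining, with $X^{(j)}\triangleq X_{(j-1)m+1}^{jm}$ and $Y^{(j)}\triangleq Y_{(j-1)m+1}^{jm}$,
\begin{equation*}
I(X^n\to Y^n|s_0,\theta)\ \ge\ \sum_{j=1}^{L}I(X^{(j)}\to Y^{(j)}\,|\,s_0,\theta).
\end{equation*}
The point is that the $j$-th summand, by the form of Lemma~\ref{lemma:QkQm_inequality} (it does not condition on earlier inputs or outputs), marginalizes over the preceding $j-1$ blocks, so $(X^{(j)},Y^{(j)})$ given $(s_0,\theta)$ has the law obtained by driving $Q_m^*$ through channel $\theta$ whose state at the start of the block is \emph{unrevealed} and distributed as $\rho_j\triangleq\Pr(S_{(j-1)m}=\cdot\mid s_0,\theta)$; by property~\ref{HaimDirectedInfoEquivI}) this summand equals $\mathcal I\big(Q_m^*;P_\theta^{(\rho_j)}\big)$ with $P_\theta^{(\rho)}(y^m||x^m)\triangleq\sum_s\rho(s)P(y^m||x^m,s,\theta)$. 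Uniform ergodicity supplies $M(\epsilon)$ with $\|\rho_j-\pi_\theta\|_\infty\le\epsilon$ once $(j-1)m>M(\epsilon)$, uniformly in $s_0$ and $\theta$; since $\rho\mapsto P_\theta^{(\rho)}$ is linear with stochastic coefficients, $P_\theta^{(\rho_j)}$ is then within $|{\cal S}|\epsilon$ of $P_\theta^{(\pi_\theta)}$ in $\ell_1$ for each $x^m$, uniformly in $\theta$, and by uniform continuity of $\mathcal I(Q_m^*;\cdot)$ in the channel there is $\eta(\epsilon,m)$, independent of $\theta$ with $\eta(\epsilon,m)\to0$ as $\epsilon\to0$, so that every block with $(j-1)m>M(\epsilon)$ contributes at least $\mathcal I\big(Q_m^*;P_\theta^{(\pi_\theta)}\big)-\eta(\epsilon,m)\ge mC_m^{Markovian}-\eta(\epsilon,m)$ for every $\theta$, while the fewer than $1+M(\epsilon)/m$ remaining blocks contribute $\ge0$. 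Dividing by $n=Lm$, taking $\inf_{s_0,\theta}$ and the maximum over input distributions, and letting $L\to\infty$ gives $\liminf_n C_n\ge C_m^{Markovian}-\eta(\epsilon,m)/m$; sending $\epsilon\to0$ and then taking $\sup_m$ yields $\lim_n C_n\ge\sup_m C_m^{Markovian}$.

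Combining the two inequalities with $C_m^{Markovian}\ge C_m-\log|{\cal S}|/m$ gives $\lim_n C_n\le\liminf_n C_n^{Markovian}\le\limsup_n C_n^{Markovian}\le\sup_m C_m^{Markovian}\le\lim_n C_n$, so all of these coincide: $\lim_n C_n^{Markovian}$ exists, equals $\sup_n C_n^{Markovian}$, and equals the feedback capacity $C=\lim_n C_n$. The hard part is the second inequality — specifically, checking that Lemma~\ref{lemma:QkQm_inequality} really delivers each per-block directed information with its \emph{marginalized}, and therefore (after $M(\epsilon)$ steps) nearly stationary, block-initial state, and making the continuity estimate $\eta(\epsilon,m)$ uniform over the possibly infinite family $\Theta$; this is exactly where uniform ergodicity is indispensable.
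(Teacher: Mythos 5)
Your proof is correct, and it reaches the conclusion by a genuinely different decomposition than the paper's. The paper argues in two separate stages: first it establishes existence of $\lim_n C_n^{Markovian}$ by proving that $nC_n^{Markovian}$ is super-additive (product inputs plus Lemma~\ref{lemma:QkQm_inequality}); then, for fixed $n$, it compares $\max_Q\inf_\theta I(X^n\to Y^n|S_0,\theta)$ with $\max_Q\inf_{\theta,s_0}I(X^n\to Y^n|s_0,\theta)$ directly, by using Lemma~\ref{lemma:QkQm_inequality} \emph{once} to discard a prefix of length $k$, shifting time by stationarity, and invoking uniform ergodicity to show the surviving block of length $n-k$ has a nearly stationary initial state, at a cost of $k\log|\mathcal Y|+\epsilon(n-k)\log|\mathcal Y|+\log|\mathcal S|$. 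You instead sandwich everything with two inequalities: the easy one ($C_n^{Markovian}\ge C_n-\log|\mathcal S|/n$, which mirrors the paper's ``difference is always positive'' observation), and a multi-block achievability argument for $\lim_nC_n\ge\sup_mC_m^{Markovian}$ in which Lemma~\ref{lemma:QkQm_inequality} is iterated $L-1$ times and every block past the mixing time contributes nearly $mC_m^{Markovian}$. The same two ingredients (the super-additivity lemma and uniform ergodicity) do the work in both arguments, but your organization buys existence of the limit for free from the sandwich, with no separate super-additivity step, at the price of a more involved per-block bookkeeping. One point you correctly flag but should make explicit if writing this up: the identification of the $j$-th block term as $\mathcal I(Q_m^*;P_\theta^{(\rho_j)})$ requires the causal-conditioning factorization of the block marginal (the analogue of Eqns.~(\ref{eqn:EquivFederLapidoth63})--(\ref{eqn:EquivFederLapidoth64})), and it uses crucially that the Markovian state evolves independently of inputs and outputs, so that $\rho_j$ is just the $(j-1)m$-step transition from $s_0$; likewise, the uniform continuity of $\mathcal I(Q;\cdot)$ in the \emph{channel} argument is not the statement of Lemma~\ref{lemma:continuityDirectedInfo} (which is continuity in $Q$), though it follows by the same entropy-continuity argument and is implicitly used elsewhere in the paper.
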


\begin{lemma}\label{lemma:directed0iff}
Let the input distribution to an arbitrary channel be uniform over
the input $\mathcal X^n$, i.e., $Q(x^n)=\frac{1}{|\mathcal X|^n}$.
If under this input distribution $I(X^n\to Y^n) {=}0$, then the
channel has the property that $P(y^n||x^n)=P(y^n)$ for all $x^n\in
\mathcal X^n,y^n\in \mathcal Y^n$ and this implies that
\begin{equation}\label{eqn:0iff}
\max_{Q_{X^n||Y^{n-1}}} I(X^n\to Y^n) =0.
\end{equation}
\end{lemma}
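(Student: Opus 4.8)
The plan is to prove the two implications in the statement in turn: that $I(X^n\to Y^n)=0$ under the uniform input makes the channel input-independent, i.e.\ $P(y^n||x^n)=P(y^n)$ for every $(x^n,y^n)$, and that this input-independence makes the directed information vanish for \emph{every} causally-conditioned input law, so that the maximum in (\ref{eqn:0iff}) equals $0$.

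For the first implication I would specialize property \ref{HaimDirectedInfoEquivI}). Since the uniform input uses no feedback, $Q(x^n||y^{n-1})=Q(x^n)=|{\cal X}|^{-n}$, and the denominator in the formula of property \ref{HaimDirectedInfoEquivI}) is exactly the induced output marginal $P(y^n)=|{\cal X}|^{-n}\sum_{x^n}P(y^n||x^n)$; hence
\[
I(X^n\to Y^n)=\sum_{x^n}\frac{1}{|{\cal X}|^n}\sum_{y^n}P(y^n||x^n)\ln\frac{P(y^n||x^n)}{P(y^n)}=\sum_{x^n}\frac{1}{|{\cal X}|^n}\,D\!\left(P(\cdot||x^n)\,\big\|\,P(\cdot)\right).
\]
Every summand is a relative entropy, hence nonnegative, and every weight $|{\cal X}|^{-n}$ is strictly positive, so $I(X^n\to Y^n)=0$ forces $D(P(\cdot||x^n)\|P(\cdot))=0$ for each $x^n$, and therefore $P(y^n||x^n)=P(y^n)$ for all $x^n,y^n$ (with the convention that both sides vanish when $P(y^n)=0$). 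The essential observation is that this is now a statement about the channel alone: $P(y^n||x^n)$ does not depend on $x^n$.

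For the second implication I would fix an arbitrary $Q(x^n||y^{n-1})\in{\cal P}({\cal X}^n||{\cal Y}^{n-1})$. By the causal-conditioning chain rule (property \ref{HaimChainRuleCausalCond})) the induced joint law is $Q(x^n||y^{n-1})P(y^n||x^n)$, and property \ref{HaimDirectedInfoEquivI}) then gives
\[
I(X^n\to Y^n)=\sum_{x^n,y^n}Q(x^n||y^{n-1})P(y^n||x^n)\ln\frac{P(y^n||x^n)}{\sum_{x'^n}Q(x'^n)P(y^n||x'^n)}.
\]
Substituting $P(y^n||x^n)=P(y^n)$, which is independent of $x^n$, the denominator collapses to $P(y^n)\sum_{x'^n}Q(x'^n)=P(y^n)$, so the log-ratio is $\ln 1=0$ on every $y^n$ with $P(y^n)>0$ while the remaining terms vanish; hence $I(X^n\to Y^n)=0$ for this, and therefore for every, $Q$. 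Since directed information is always nonnegative, $\max_{Q_{X^n||Y^{n-1}}}I(X^n\to Y^n)=0$, which is (\ref{eqn:0iff}).

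I do not expect a serious obstacle. The only points that need a little care are the bookkeeping for output sequences of zero probability and the observation --- used in the second step --- that once $P(y^n||x^n)$ is known to be independent of $x^n$ it factors out of the sum defining the output marginal, regardless of whether feedback is present. One could equivalently argue the second step by first noting that $P(y_i|x^i,y^{i-1})$ is then independent of $x^i$ for every $i$, so that each term $I(X^i;Y_i|Y^{i-1})$ in the expansion $I(X^n\to Y^n)=\sum_{i=1}^n I(X^i;Y_i|Y^{i-1})$ is zero, but the divergence form above is shorter.
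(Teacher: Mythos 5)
Your proposal is correct and takes essentially the same route as the paper: the paper likewise identifies $I(X^n\to Y^n)$ under the uniform input with a Kullback--Leibler divergence whose vanishing (together with the strict positivity of $Q(x^n)=|\mathcal X|^{-n}$) forces $P(y^n||x^n)=P(y^n)$, and then observes that this makes the log-ratio in the directed-information formula identically zero for every causally conditioned input law. Your decomposition into per-$x^n$ divergences rather than one joint divergence, and your explicit handling of zero-probability outputs and of the normalization $\sum_{x^n}Q(x^n||y^{n-1})=1$, are only minor presentational refinements.
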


{\it Proof of Theorem \ref{thrm:ZeroFBCIffZeroNFBC}}: Let $C_{NFB}$
denote the capacity without feedback and $C_{FB}$ denote the
capacity with feedback. $C_{NFB}=0 \Leftarrow C_{FB}=0$ is trivial.
To show that $C_{NFB}=0 \implies C_{FB}=0$, we use Lemma
\ref{lemma:capacity_stationary_ergodic_Markovian} to conclude that
since $C_{NFB}=0$ then $\sup_{n} C_n^{Markovian}=0$ and therefore
for any $n\geq1$,
\begin{equation}\label{eqn:Compound_no_feedback2}
\max_{Q_{X^n}} \inf_{\theta}  I(X^n\to Y^n|\theta)=0.
\end{equation}
In order to conclude the proof, we show that if
(\ref{eqn:Compound_no_feedback2}) holds, then it also holds when we
replace $Q_{X^n}$ by $Q_{X^n||Y^{n-1}}$. Since $ I(X^n\to Y^n)$ is
continuous in $P(y^n||x^n)$ and since the set $\Theta$ is a subset
of the unit simplex which is bounded, then the infimum over the set
$\Theta$ can be replaced by the minimum over the closure of the set
$\Theta$. Since (\ref{eqn:Compound_no_feedback2}) holds also for the
case that $Q_{X^n}$ is restricted to be the uniform distribution,
then Lemma \ref{lemma:directed0iff} implies that the channel that
satisfies $P(y^n||x^n)=P(y^n)$ for all $x^n\in \mathcal X^n,y^n\in
\mathcal Y^n$ is in the closure of $\Theta$ and therefore
\begin{equation}
\max_{Q_{X^n||Y^{n-1}}} \inf_{\theta} I(X^n\to Y^n |\theta)=0.
\end{equation}
\hfill \QED

\section{Feedback capacity of the memoryless compound channel}
\label{section:memorylesscompound}

Recall that the capacity of the memoryless compound channel (without
feedback) is \cite{Blackwell59Compound,Wolfowitz59}
\begin{equation}
\max_{Q_X}\inf_{\theta} \mathcal I(Q_X;P_{Y|X,\theta}).
\end{equation}
Wolfowitz also showed \cite{Wolfowitz64} that when $\theta$ is known
to the encoder, the capacity of the memoryless compound channel is
given by switching the $\inf$ and the $\max$, i.e.,
\begin{equation}
\inf_{\theta} \max_{Q_X} \mathcal I(Q_X;P_{Y|X,\theta}).
\label{eqn:CapacityEncoderKnowsTheta}
\end{equation}
In this section we make use of Theorem \ref{thrm:FBCapCompoundFSC}
to show that (\ref{eqn:CapacityEncoderKnowsTheta}) is equal to the
feedback capacity of the memoryless compound channel.

\subsection{Finite family of memoryless channels}
Based on Wolfowitz's result it is straightforward to show that if
the family of memoryless channels is finite, $| \Theta| < \infty$,
then the feedback capacity of the compound channel is given by
switching the $\max$ and the $\min$,
\begin{equation}
\min_{\theta} \max_{Q_X} \mathcal I(Q_X;P_{Y|X,\theta}).
\end{equation}
This result can be achieved in two steps. Given a probability of
error $P_e>0$, first, the encoder will use $M$ uses of the channels
in order to estimate the channel with probability of error less than
$\frac{P_e}{2}$. Since the number of channels is finite such an $M$
exists. In the second step the encoder will use a coding scheme with
blocklength $N$ adapted for the estimated channel to obtain an error
probability that is smaller than $\frac{P_e}{2}$. Hence we get that
the total error of the code of length $M+N$ is smaller than $P_e$.

\subsection{Arbitrary family of memoryless channels}
For the case that the number of channels is infinite, the argument
above does not hold, since there is no guarantee that for any
$P_e>0$ there exists a blocklength $n(P_e)$ such that a $(e^{nR},n)$
code achieves an error less than $P_e$ for all channels in the
family. \footnote{In a private communication with A. Tchamkerten
\cite{Tchamkerten07}, it was suggested that the feedback capacity of
the memoryless compound channel with an infinite family can also be
established using the results in \cite{FederLapidoth98} (which show
that the family of all discrete memoryless channels is strongly
separable). The family is finitely quantized, a training scheme is
used to estimate the appropriate quantization cell, the coding is
performed according to the representative channel of that cell and
the decoding is done universally as in \cite{FederLapidoth98}.}
However, we are able to establish the feedback capacity using our
capacity theorem for the compound FSC, and the result is stated in
the following theorem.
\begin{theorem} \label{thrm:CapacityMemorylessCompound}
The feedback capacity of the memoryless compound channel is
\begin{equation}
\inf_{\theta} \max_{Q_X} \mathcal I(Q_X;P_{Y|X,\theta}).
\end{equation}
\end{theorem}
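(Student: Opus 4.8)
The strategy is to obtain Theorem \ref{thrm:CapacityMemorylessCompound} as a corollary of the general FSC feedback capacity formula in Theorem \ref{thrm:FBCapCompoundFSC}, by viewing a memoryless channel as a degenerate finite-state channel (one state). For such a channel, for every $n$ the causal conditioning distribution factorizes, $P(y^n||x^n,\theta) = \prod_{i=1}^n P_{Y|X,\theta}(y_i|x_i,\theta)$, independent of any initial state, and the directed information becomes $I(X^n\to Y^n|\theta) = \sum_{i=1}^n I(X_i;Y_i|Y^{i-1},\theta)$. So Theorem \ref{thrm:FBCapCompoundFSC} gives
\begin{equation}
C = \lim_{n\to\infty} \max_{Q_{X^n||Z^{n-1}}} \inf_{\theta} \frac{1}{n} I(X^n\to Y^n|\theta).
\end{equation}
The task then reduces to showing that this single-letterizes to $\inf_\theta \max_{Q_X} \mathcal I(Q_X;P_{Y|X,\theta})$.

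First I would prove the ``$\leq$'' direction, i.e. that the right-hand side is an upper bound on $C$. Fix any input distribution $Q_{X^n||Z^{n-1}}$ and any $n$. Using $\max\inf \leq \inf\max$ together with the fact that, for a memoryless channel, $\frac1n I(X^n\to Y^n|\theta) \leq \frac1n\sum_{i=1}^n I(X_i;Y_i|\theta) \leq \frac1n \sum_{i=1}^n \max_{Q_X}\mathcal I(Q_X;P_{Y|X,\theta}) = \max_{Q_X}\mathcal I(Q_X;P_{Y|X,\theta})$ — here the first inequality holds because conditioning $Y_i$ on $Y^{i-1}$ as well cannot increase the per-letter term beyond the single-letter mutual information for a memoryless channel (concavity of mutual information in the input marginal, applied as in the standard feedback-doesn't-help-memoryless argument) — we get $\frac1n I(X^n\to Y^n|\theta) \leq \max_{Q_X}\mathcal I(Q_X;P_{Y|X,\theta})$ for each $\theta$, hence $\inf_\theta \frac1n I(X^n\to Y^n|\theta) \leq \inf_\theta \max_{Q_X}\mathcal I(Q_X;P_{Y|X,\theta})$, and taking the max over $Q_{X^n||Z^{n-1}}$ and the limit in $n$ yields $C \leq \inf_\theta \max_{Q_X}\mathcal I(Q_X;P_{Y|X,\theta})$.

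Next I would prove the ``$\geq$'' direction, which is the substantive half. It suffices to exhibit, for every $\epsilon>0$, an $n$ and an input distribution $Q_{X^n||Z^{n-1}}$ achieving $\inf_\theta \frac1n I(X^n\to Y^n|\theta) \geq \inf_\theta\max_{Q_X}\mathcal I(Q_X;P_{Y|X,\theta}) - \epsilon$. The idea, following the informal description in the introduction, is a two-phase ``estimate-then-code'' scheme implemented in the block-$n$ directed-information quantity: use the first fraction of the block, together with the feedback, to let the encoder form an estimate $\hat\theta$ of the channel, then use the remaining symbols with an input distribution that is (near-)capacity-achieving for $\hat\theta$. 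The key analytic point is that even for an infinite family one can, for fixed target accuracy $\epsilon$, restrict attention to a finite $\epsilon$-net of the (compact) closure $\overline\Theta$ of the family — the capacity-achieving mutual information $\max_{Q_X}\mathcal I(Q_X;P_{Y|X,\theta})$ and the channel laws are continuous in $\theta$, so a finite subset suffices to approximate $\inf_\theta \max_{Q_X}\mathcal I(Q_X;P_{Y|X,\theta})$ to within $\epsilon$. With a finite net, a bounded-length training phase drives the probability that $\hat\theta$ lies in the wrong net-cell to zero, and on the dominant event the per-letter directed information over the coding phase is at least $\min_{\theta\in\text{net}}\max_{Q_X}\mathcal I(Q_X;P_{Y|X,\theta}) - \epsilon \geq \inf_\theta\max_{Q_X}\mathcal I(Q_X;P_{Y|X,\theta}) - 2\epsilon$; the training phase contributes a vanishing $O(1/n)$ overhead as $n\to\infty$. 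Letting $\epsilon\to 0$ gives $C \geq \inf_\theta\max_{Q_X}\mathcal I(Q_X;P_{Y|X,\theta})$.

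The main obstacle is making the ``$\geq$'' argument fully rigorous at the level of the directed-information expression rather than at the level of codes: one must carefully choose $Q_{X^n||Z^{n-1}}$ so that the training symbols $X^i$ depend only on $W$ (or are deterministic, carrying no information) while the later symbols' conditional distribution depends on $Z^{i-1}$ through the estimate $\hat\theta(z^{i-1})$, and then lower-bound $I(X^n\to Y^n|\theta) = \sum_i I(X^i;Y_i|Y^{i-1},\theta)$ by conditioning on the correct-estimate event and bounding the contribution of the error event below by zero. Continuity/compactness of $\overline\Theta$ is what licenses the reduction to a finite net and hence the existence of a good finite training time uniformly over $\theta$; this is precisely the ingredient that fails in the naive fixed-blocklength ``estimate-then-code'' argument described at the start of the section, and it is the reason the detour through Theorem \ref{thrm:FBCapCompoundFSC} — whose capacity is defined via a limit in $n$ — is needed.
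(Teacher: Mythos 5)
Your proposal follows essentially the same route as the paper: Theorem \ref{thrm:CapacityMemorylessCompound} is deduced from Theorem \ref{thrm:FBCapCompoundFSC} via a single-letterization lemma (Lemma \ref{l_memoryless_eq}) whose ``$\leq$'' half is exactly your $\max\inf\leq\inf\max$ plus memorylessness argument, and whose ``$\geq$'' half is exactly your train-then-code input distribution, lower-bounded through the directed information by conditioning on the event that the channel estimate is accurate. The only minor difference is in how uniformity of the training phase over $\Theta$ is obtained: you quantize the compact closure $\overline{\Theta}$ by a finite $\epsilon$-net, whereas the paper estimates the channel law directly from empirical types with a channel-independent Sanov/Pinsker bound (Lemma \ref{l_estimate_channel}) and controls the resulting loss via a continuity lemma (Lemma \ref{l_DoNotLooseMuch}); both yield the same conclusion.
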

Theorem \ref{thrm:CapacityMemorylessCompound} is a direct result of
Theorem \ref{thrm:FBCapCompoundFSC} and the following lemma.
\begin{lemma}\label{l_memoryless_eq}
For a family $\Theta$ of memoryless channels we have
\begin{equation}
\lim_{n\to \infty} \frac{1}{n}\max_{Q_{X^n||Y^{n-1}}} \inf_{\theta}
\mathcal I(Q_{X^n||Y^{n-1}}; P_{Y^n||X^n,\theta})=\inf_{\theta}
\max_{Q_X} \mathcal I(Q_X;P_{Y|X,\theta})
\end{equation}
\end{lemma}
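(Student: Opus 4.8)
The plan is to prove the two inequalities separately, since one direction is elementary and the other requires work. First I would establish the ``$\geq$'' direction. Fix any channel $\theta$ and any single-letter input distribution $Q_X$. Using $Q_X$ i.i.d.\ (ignoring the feedback, i.e.\ taking $Q(x^n\|y^{n-1}) = \prod_i Q_X(x_i)$) and the fact that for a memoryless channel $P_{Y^n\|X^n,\theta} = \prod_i P_{Y|X,\theta}$, property~\ref{HaimDirectedInfoEquivI}) together with standard additivity of mutual information over memoryless channels gives $\frac1n \mathcal I(Q_{X^n\|Y^{n-1}};P_{Y^n\|X^n,\theta}) = \mathcal I(Q_X;P_{Y|X,\theta})$. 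Hence for every $n$, $\frac1n \max_{Q_{X^n\|Y^{n-1}}} \inf_\theta \mathcal I \geq \inf_\theta \mathcal I(Q_X;P_{Y|X,\theta})$, and taking the supremum over $Q_X$ on the right and then the limit on the left yields the ``$\geq$'' inequality. (The limit on the left exists by Proposition~\ref{prop:existenceC} applied to the one-state FSC.)

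The ``$\leq$'' direction is the main obstacle. Here I would bound the directed information from above. Since each channel is memoryless, by the chain rule and the Markov structure $Y_i - (X_i,\theta) - (X^{i-1},Y^{i-1})$ one gets $I(X^n\to Y^n|\theta) = \sum_i I(Y_i;X^i|Y^{i-1},\theta) \leq \sum_i I(Y_i;X_i|Y^{i-1},\theta) \leq \sum_i \max_{Q_{X_i}} \mathcal I(Q_{X_i};P_{Y|X,\theta}) = n\max_{Q_X}\mathcal I(Q_X;P_{Y|X,\theta})$; the subtle point is that conditioning on $Y^{i-1}$ cannot help beyond the per-letter capacity, which follows because the conditional distribution of $X_i$ given $Y^{i-1}=y^{i-1}$ is some input distribution and the channel law $P_{Y|X,\theta}$ does not depend on $Y^{i-1}$. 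Therefore $\frac1n \mathcal I(Q_{X^n\|Y^{n-1}};P_{Y^n\|X^n,\theta}) \leq \max_{Q_X}\mathcal I(Q_X;P_{Y|X,\theta})$ for every input distribution and every $\theta$. Taking $\inf_\theta$ and then $\max$ over $Q_{X^n\|Y^{n-1}}$ preserves the inequality, giving $\frac1n \max_{Q_{X^n\|Y^{n-1}}}\inf_\theta \mathcal I \leq \inf_\theta \max_{Q_X}\mathcal I(Q_X;P_{Y|X,\theta})$ for all $n$, and the limit inherits the bound.

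Combining the two inequalities gives the claimed equality. I expect the only genuinely delicate step to be the per-letter upper bound on $I(Y_i;X_i|Y^{i-1},\theta)$: one must verify carefully that averaging over $y^{i-1}$ of the quantity $\mathcal I(Q_{X_i|Y^{i-1}=y^{i-1}};P_{Y|X,\theta})$ is bounded by $\max_{Q_X}\mathcal I(Q_X;P_{Y|X,\theta})$, which is immediate once one notes the channel is the same for every $y^{i-1}$ and $\mathcal I(\cdot;P_{Y|X,\theta})$ is a fixed concave functional of the input marginal, so its average over any distribution on inputs is at most its maximum. No interchange of limits or minimax argument is needed in either direction, so the proof should be short; the bulk of the writing is just unwinding the definitions of causal conditioning and directed information for the memoryless special case.
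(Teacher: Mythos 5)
Your ``$\leq$'' direction is correct and is essentially the paper's proof of inequality (\ref{e_memoryless1}): $\max\inf\leq\inf\max$ combined with the fact that an i.i.d.\ input maximizes directed information over a memoryless channel. The genuine gap is in your ``$\geq$'' direction. By fixing a single input distribution $Q_X$ and using it i.i.d.\ while ignoring the feedback, what you actually obtain after ``taking the supremum over $Q_X$ on the right'' is
\begin{equation*}
\lim_{n\to\infty}\frac1n\max_{Q_{X^n||Y^{n-1}}}\inf_\theta \mathcal I(Q_{X^n||Y^{n-1}};P_{Y^n||X^n,\theta}) \;\geq\; \max_{Q_X}\inf_\theta \mathcal I(Q_X;P_{Y|X,\theta}),
\end{equation*}
i.e.\ the no-feedback compound capacity (\ref{eqn:CapacityMemorylessCompoundNoFB}), not $\inf_\theta\max_{Q_X}$ as in (\ref{eqn:CFBDMC}). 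These two quantities differ in general: when different channels in $\Theta$ have different capacity-achieving input distributions, no single non-adaptive $Q_X$ can be simultaneously near-optimal for every $\theta$, and $\max_{Q_X}\inf_\theta$ is strictly smaller than $\inf_\theta\max_{Q_X}$. Your closing remark that ``no minimax argument is needed in either direction'' is exactly where the argument breaks---closing the minimax gap is the entire content of the lemma, and an input distribution that never looks at the feedback cannot do it.

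The missing idea is that the lower bound must exploit the feedback through an adaptive causal-conditioning distribution. The paper's proof of (\ref{e_memoryless2}) devotes the first $M$ channel uses to a training sequence, estimates the channel $\hat\theta$ from the fed-back outputs (Lemma \ref{l_estimate_channel}, via Sanov's theorem and Pinsker's inequality), and then transmits i.i.d.\ according to the capacity-achieving distribution of the estimated channel $\hat\theta$. Lemma \ref{l_DoNotLooseMuch} (continuity of the mutual information at the optimal input in the channel law) shows the loss due to estimation error is $\eta(\epsilon)\to 0$, and since $M$ is fixed for given $(\epsilon,\delta)$ the rate penalty $M/n$ vanishes as $n\to\infty$. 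This two-phase scheme defines a legitimate element of ${\cal P}({\cal X}^n||{\cal Y}^{n-1})$ and is what lets the left-hand side reach $\inf_\theta\max_{Q_X}\mathcal I(Q_X;P_{Y|X,\theta})$.
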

The proof of Lemma \ref{l_memoryless_eq} requires two lemmas, which
we state below. The proofs of Lemmas \ref{l_DoNotLooseMuch} and
\ref{l_estimate_channel} are found in Appendix
\ref{app:LemmasForMemorylessCompound}.

\begin{lemma}\label{l_DoNotLooseMuch}
Let $Q_X^{1}=\arg\max_{Q_X}\mathcal I(Q_X,P_{Y|X,\theta_1})$ and
$Q_X^{2}=\arg\max_{Q_X}\mathcal I(Q_X,P_{Y|X,\theta_2})$. For two
conditional distributions $P_{Y|X,\theta_1}$ and $P_{Y|X,\theta_2}$
with
\begin{equation}
\Delta=||P_{Y|X,\theta_1}-P_{Y|X,\theta_2}||_1=\sum_{x\in \mathcal
X, y\in \mathcal Y}
|P_{Y|X,\theta_1}(y|x,\theta_1)-P_{Y|X,\theta_2}(y|x,\theta_2)|
\end{equation}
there exists an upper bound
\begin{equation}\label{e_l_diff}
|\mathcal I(Q_X^{2},P_{Y|X,\theta_1})-\mathcal
I(Q_X^{1},P_{Y|X,\theta_1})|\leq \eta(\Delta)
\end{equation}
where  $\eta(\Delta)\to 0$ as $\Delta\to 0$.
\end{lemma}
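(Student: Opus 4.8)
The plan is to bound the quantity $|\mathcal I(Q_X^2,P_{Y|X,\theta_1})-\mathcal I(Q_X^1,P_{Y|X,\theta_1})|$ by chaining through the channel $\theta_2$, using the fact that $Q_X^j$ is capacity-achieving for $P_{Y|X,\theta_j}$. First I would insert the two intermediate terms $\mathcal I(Q_X^2,P_{Y|X,\theta_2})$ and $\mathcal I(Q_X^1,P_{Y|X,\theta_2})$ and apply the triangle inequality, so that it suffices to control three differences: (i) $|\mathcal I(Q_X^2,P_{Y|X,\theta_1})-\mathcal I(Q_X^2,P_{Y|X,\theta_2})|$, which compares the same input distribution across two close channels; (ii) $|\mathcal I(Q_X^1,P_{Y|X,\theta_2})-\mathcal I(Q_X^1,P_{Y|X,\theta_1})|$, likewise; and (iii) $|\mathcal I(Q_X^2,P_{Y|X,\theta_2})-\mathcal I(Q_X^1,P_{Y|X,\theta_2})| = |C_{\theta_2} - \mathcal I(Q_X^1,P_{Y|X,\theta_2})|$, where I used that $Q_X^2$ achieves capacity $C_{\theta_2}$ of channel $\theta_2$. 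For (iii) I would observe that $\mathcal I(Q_X^1,P_{Y|X,\theta_2}) \leq C_{\theta_2}$ by definition of capacity, and also $C_{\theta_2} = \mathcal I(Q_X^2, P_{Y|X,\theta_2}) \leq \mathcal I(Q_X^2, P_{Y|X,\theta_1}) + (\text{term of type (i)})$, while $\mathcal I(Q_X^2, P_{Y|X,\theta_1}) \leq C_{\theta_1} = \mathcal I(Q_X^1, P_{Y|X,\theta_1}) \leq \mathcal I(Q_X^1, P_{Y|X,\theta_2}) + (\text{term of type (ii)})$; combining these sandwich inequalities shows (iii) is itself bounded by a sum of terms of types (i) and (ii). Hence the whole lemma reduces to establishing a uniform continuity statement: for any fixed input distribution $Q_X$, $|\mathcal I(Q_X,P_{Y|X,\theta_1})-\mathcal I(Q_X,P_{Y|X,\theta_2})|$ is small when $\|P_{Y|X,\theta_1}-P_{Y|X,\theta_2}\|_1=\Delta$ is small, with a modulus independent of $Q_X$.

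To prove that uniform continuity claim I would write the mutual information as $\mathcal I(Q_X,P_{Y|X}) = H(Y) - H(Y|X)$ where $Y$ has the induced output distribution $Q_Y(y)=\sum_x Q_X(x)P_{Y|X}(y|x)$, and bound each of the two entropy terms separately. For the conditional-entropy term, $\|P_{Y|X,\theta_1}(\cdot|x)-P_{Y|X,\theta_2}(\cdot|x)\|_1 \leq \Delta$ for every $x$ (in fact the sum over $x$ is $\Delta$), and averaging over $Q_X$ together with the standard continuity-of-entropy estimate (e.g.\ the bound $|H(p)-H(q)| \leq \|p-q\|_1 \log\frac{|\mathcal Y|}{\|p-q\|_1}$ valid for $\|p-q\|_1 \leq 1/2$, a consequence of Pinsker-type arguments) gives a bound of the form $\Delta \log(|\mathcal Y|/\Delta)$, using concavity of $t\mapsto t\log(|\mathcal Y|/t)$ to pull the average inside. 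For the output-entropy term, I would note that $\|Q_{Y,\theta_1}-Q_{Y,\theta_2}\|_1 = \sum_y |\sum_x Q_X(x)(P_{Y|X,\theta_1}(y|x)-P_{Y|X,\theta_2}(y|x))| \leq \sum_{x,y} Q_X(x)|P_{Y|X,\theta_1}(y|x)-P_{Y|X,\theta_2}(y|x)| \leq \Delta$, so the same continuity-of-entropy estimate applies. Adding the two contributions yields $\eta(\Delta) = c\,\Delta\log(|\mathcal Y|/\Delta)$ for an absolute constant $c$ (times a bounded factor from the reduction in the first paragraph), and this tends to $0$ as $\Delta\to 0$, as required.

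The main obstacle is simply to make sure the modulus $\eta(\Delta)$ is genuinely uniform in $Q_X$ and in the particular channels $\theta_1,\theta_2$ — i.e.\ it depends only on $\Delta$ and on the fixed finite alphabet sizes. This is handled by the concavity argument (Jensen applied to $t\mapsto t\log(|\mathcal Y|/t)$) in the conditional-entropy step and by the data-processing-type $\ell_1$ contraction in the output-entropy step, both of which are insensitive to which input distribution or which pair of channels is used. The reduction in the first paragraph is otherwise a purely formal manipulation with capacity-achieving distributions, so no further subtlety arises there.
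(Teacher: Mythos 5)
Your proposal is correct and follows essentially the same route as the paper: a triangle-inequality decomposition through the channel $\theta_2$, the optimality of the capacity-achieving inputs $Q_X^1,Q_X^2$ to control the cross term, and uniform continuity of $\mathcal I(Q_X,P_{Y|X})$ in the channel (which the paper simply cites via continuity of entropy, and which you prove in more detail). The only difference is cosmetic: the paper inserts a single intermediate term and gets $\eta=2\tau$, while your two-term sandwich yields $\eta=4\tau$; both vanish as $\Delta\to 0$.
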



\begin{lemma}\label{l_estimate_channel}
For any $\delta>0$, any $\epsilon>0$ and any channel $P_{Y|X}$,
there exists an $M$ such that we can choose a channel
$P_{Y|X,\hat\theta}$ as a function of $M$ inputs and outputs such
that
\begin{equation}\label{e_Pr_Delta}
\Pr\{\Delta>\epsilon\}\leq \delta,
\end{equation}
where $\Delta$ denotes the $L_1$ distance between the estimated
channel $ P_{Y|X,\hat \theta}$ and the actual channel $P_{Y|X}$,
i.e.,
\begin{equation}
\Delta=\sum_{x\in \mathcal X, y\in \mathcal Y} |
P_{Y|X,\hat\theta}(y|x,\hat \theta)-P_{Y|X}(y|x)|.
\end{equation}
\end{lemma}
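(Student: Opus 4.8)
The plan is to use a straightforward empirical-estimation argument combined with a union bound over the (finite) input-output alphabet. First I would have the encoder feed each input symbol $x \in \mathcal X$ through the channel a large number of times, say $M = |\mathcal X| \cdot L$ times in total, using $x$ exactly $L$ times. For each fixed $x$, observe the $L$ output symbols and define $\hat P_{Y|X}(y|x)$ as the empirical frequency of $y$ among those $L$ outputs. This defines an estimated channel $P_{Y|X,\hat\theta}$ whose index $\hat\theta$ is simply the realized empirical conditional law; note that the set of possible estimated channels is finite for fixed $L$ (it is a subset of the rational points of the simplex), so $\hat\theta$ can be taken to lie in any family that contains all such empirical laws, or the statement can be read as allowing $\hat\theta$ to range over an auxiliary index set of channels.

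Next I would control the deviation. For a fixed $x$ and fixed $y$, the empirical frequency $\hat P_{Y|X}(y|x)$ is an average of $L$ i.i.d.\ Bernoulli$(P_{Y|X}(y|x))$ random variables, so by Hoeffding's inequality (or the weak law of large numbers) $\Pr\{|\hat P_{Y|X}(y|x) - P_{Y|X}(y|x)| > \epsilon/(2|\mathcal X||\mathcal Y|)\} \leq 2\exp(-L\epsilon^2/(2|\mathcal X|^2|\mathcal Y|^2))$. Taking a union bound over all $|\mathcal X||\mathcal Y|$ pairs $(x,y)$ gives that, with probability at least $1 - 2|\mathcal X||\mathcal Y|\exp(-L\epsilon^2/(2|\mathcal X|^2|\mathcal Y|^2))$, every entry is within $\epsilon/(2|\mathcal X||\mathcal Y|)$ of the truth, whence the total $L_1$ distance $\Delta = \sum_{x,y}|\hat P_{Y|X}(y|x) - P_{Y|X}(y|x)| \leq |\mathcal X||\mathcal Y| \cdot \epsilon/(2|\mathcal X||\mathcal Y|) = \epsilon/2 < \epsilon$. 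Finally, choosing $L$ (and hence $M = |\mathcal X| L$) large enough that $2|\mathcal X||\mathcal Y|\exp(-L\epsilon^2/(2|\mathcal X|^2|\mathcal Y|^2)) \leq \delta$ yields $\Pr\{\Delta > \epsilon\} \leq \delta$, as required.

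I do not expect a serious obstacle here; the only subtlety is bookkeeping about what "$\hat\theta$" means, i.e.\ making sure the estimated channel is a legitimate object of the same type as the channels in $\Theta$ (or in a compatible auxiliary family) rather than insisting $\hat\theta \in \Theta$. Since the lemma as stated only requires $P_{Y|X,\hat\theta}$ to be \emph{some} channel determined by the $M$ observations, this is handled by simply letting the estimate be the empirical law itself. The deterministic-feedback structure plays no role in this lemma — it is a pure concentration statement — so the memoryless i.i.d.\ repetition of each input symbol is exactly what makes the standard large-deviations bound apply.
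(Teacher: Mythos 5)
Your proposal is correct and follows essentially the same route as the paper: repeat each input symbol a fixed number of times, take the empirical conditional law as the estimate, apply an exponential concentration bound, and union-bound over the input alphabet. The only difference is the concentration tool — you apply Hoeffding's inequality entrywise over all $(x,y)$ pairs, whereas the paper bounds the $L_1$ deviation of the output type for each input symbol via Sanov's theorem together with Pinsker's inequality; both yield the required exponential decay in $M$.
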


{\it Proof of Lemma \ref{l_memoryless_eq}:} We prove the equality by
showing the following two inequalities hold:
\begin{eqnarray}
\frac{1}{n}\max_{Q_{X^n||Y^{n-1}}} \inf_{\theta} \mathcal
I(Q_{X^n||Y^{n-1}};P_{Y^n||X^n,\theta}) &\leq& \inf_{\theta}
\max_{Q_X} \mathcal I(Q_X;P_{Y|X,\theta}),\label{e_memoryless1} \\
\frac{1}{n}\max_{Q_{X^n||Y^{n-1}}} \inf_{\theta} \mathcal
I(Q_{X^n||Y^{n-1}};P_{Y^n||X^n,\theta}) &\geq &\inf_{\theta}
\max_{Q_X} \mathcal I(Q_X;P_{Y|X,\theta})-\epsilon_n,
\label{e_memoryless2}
\end{eqnarray}
where $\epsilon_n\to 0$ as $n\to \infty$. Inequality
(\ref{e_memoryless1}) is proved by the fact that $\max \inf$ is less
than or equal to  $\inf \max$ and by the fact that for a memoryless
channel an i.i.d input maximizes the directed information.
\begin{eqnarray}
\lefteqn{\frac{1}{n}\max_{Q_{X^n||Y^{n-1}}} \inf_{\theta} \mathcal
I(Q_{X^n||Y^{n-1}};P_{Y^n||X^n,\theta})}\nonumber \\ &\leq&
\frac{1}{n}\inf_{\theta} \max_{Q_{X^n||Y^{n-1}}}  \mathcal
I(Q_{X^n||Y^{n-1}};P_{Y^n||X^n,\theta})\nonumber \\
&=& \inf_{\theta} \max_{Q_X}  \mathcal I(Q_X;P_{Y|X,\theta})
\end{eqnarray}

In order to prove inequality (\ref{e_memoryless2}) we consider the
following input distribution. The first $M$ inputs are used to
estimate the channel and we denote the estimated channel as  $\hat
\theta$. After the first $M$ inputs, the input distribution is the
i.i.d distribution that maximizes the mutual information between the
input and the output for the channel $\hat \theta$. According to
Lemma \ref{l_estimate_channel}, we can estimate the channel to
within an $L_1$ distance smaller than $\epsilon>0$ with probability
greater than $1-\delta$, where $\delta>0$. According to Lemma
\ref{l_DoNotLooseMuch}, by adjusting the input distribution to a
channel that is at $L_1$ distance less than $\epsilon$ from the
actual channel in use, we lose an amount that goes to zero as
$\epsilon\to 0$. Under the input distribution described above we
have the following sequence of inequalities.
\begin{eqnarray}
\lefteqn{\frac{1}{n}\max_{Q_{X^n||Y^{n-1}}} \inf_{\theta} \mathcal
I(Q_{X^n||Y^{n-1}};P_{Y^n||X^n,\theta})}\nonumber
\\
&\stackrel{(a)}{=}&{\frac{1}{n}\max_{Q_{X^n||Y^{n-1}}} \inf_{\theta}
I(X^n\to Y^n|\theta )}\nonumber \\
&\stackrel{(b)}{\geq}&\frac{1}{n}\max_{Q_{X^n||Y^{n-1}}}
\inf_{\theta}
\sum_{i=M(\delta,\epsilon)+1}^n I(X^i;Y_i|Y^{i-1})\nonumber \\
&\stackrel{(c)}{\geq}&\frac{1}{n}\max_{Q_{X^n||Y^{n-1}}}
\inf_{\theta}
\sum_{i=M+1}^n I(X_{M+1}^i;Y_i|Y^{i-1}, X^M)\nonumber \\
&\stackrel{(d)}{=}&\frac{1}{n}\max_{Q_{X^n||Y^{n-1}}} \inf_{\theta}
\sum_{i=M+1}^n I(X_{M+1}^i;Y_i|Y_{M+1}^{i-1}, X^M,Y^M,\hat
\Theta(X^M,Y^M))\nonumber \\
&\stackrel{(e)}{\geq}&\frac{1}{n}\max_{Q_{X|\hat \theta}}
\inf_{\theta} (n-M) I(X;Y|\theta, \hat
\Theta )\nonumber \\
&\stackrel{(f)}{=}&\frac{1}{n}\max_{Q_{X|\hat \theta}}
\inf_{\theta} (n-M) \sum_{\hat \theta_{\epsilon}}P(\hat \theta)\mathcal I(Q_{X|\hat \theta};P_{Y|X,\theta} )\nonumber \\
&\stackrel{(g)}{\geq}&\frac{1}{n}\max_{Q_{X|\theta}} \inf_{\theta}
(n-M)
(1-\delta)\mathcal I(Q_{X|\theta};P_{Y|X,\theta})-\eta(\epsilon))\nonumber \\
&\stackrel{(h)}{=}&\frac{1}{n} \inf_{\theta} \max_{Q_X}(n-M)
(1-\delta)\mathcal I(Q_X;P_{Y|X,\theta})-\eta(\epsilon))
\end{eqnarray}

\begin{itemize}
\item [(a)] and (f) follow from a change of notation.
\item [(b)] follows the fact that we sum fewer elements. The
parameter $M$ is a function of $\epsilon>0$ and $\delta>0$ and is
determined according to Lemma \ref{l_estimate_channel}. For brevity
of notation we denote $M(\epsilon,\delta)$ simply as $M$.
\item [(c)] follows from the fact that $H(Y_i|Y^{i-1}) \geq
H(Y_i|Y^{i-1},X^M)$.
\item[(d)] follows from the fact that the estimated channel is a random variable denoted as $\hat
\Theta$ and it is a deterministic function of $X^M,Y^M$ as described
in Lemma \ref{l_estimate_channel}.
\item [(e)] follows by restricting the input distribution $Q_{X^n||Y^{n-1}}$ to one that uses
first $M$ uses of the channel to estimate as described in Lemma
\ref{l_estimate_channel}, and then uses an i.i.d distribution, i.e.,
for $i>M$, $Q(x_i|x^{i-1},y^{i-1})=Q(x_i|x^{i-1},y^{i-1},\hat
\theta(x^M,y^M)))= Q(x_i|\hat \theta)$.
\item[(g)] follows from the fact that with probability $1-\delta$
we have that the $L_1$ distance $||P_{Y|X,\theta}-P_{Y|X,\hat
\theta}||_1\leq \epsilon$ and by applying Lemma
\ref{l_DoNotLooseMuch}, which states that for this case we lose
$\eta(\epsilon)$ where $\eta(\epsilon)\to 0$ as $\epsilon\to 0$ .
\item[(h)] follows from the fact that $\inf_{\theta}
\max_{Q_X}$ is identical to $\max_{Q_{X|\theta}} \inf_{\theta}$.
\end{itemize}
Finally, since $M$ is fixed for any $\epsilon>0$, $\delta>0$ then we
can achieve any value below $\inf_{\theta} \max_{Q_X}\mathcal
I(Q_X;P_{Y|X,\theta})$ for large $n$. Therefore inequality
(\ref{e_memoryless2}) holds. \hfill \QED

\section{Conclusion}

The compound channel is a simple model for communication under
channel uncertainty. The original work on the memoryless compound
channel without feedback characterizes the capacity
\cite{Blackwell59Compound,Wolfowitz59}, which is less than the
capacity of each channel in the family, but the reliability function
remains unknown. An adaptive approach to using feedback on an
unknown memoryless channel is proposed in
\cite{TchamkertenTelatar06}, where coding schemes that universally
achieve the reliability function (the Burnashev error exponent) for
certain families of channels (e.g., for a family of binary symmetric
channels) are provided. By using the variable-length coding approach
in \cite{TchamkertenTelatar06}, the capacity of the channel in use
can be achieved. In our work, we consider the use of fixed length
block codes and aim to ensure reliability for every channel in the
family; as a result, our capacity is limited by the infimum of the
capacities of the channels in the family. For the compound channel
with memory that we consider, we have characterized an achievable
random coding exponent, but the reliability function remains
unknown.

The encoding and decoding schemes used in proving our results have a
number of practical limitations, including the memory requirements
for storing codebooks consisting of concatenated code-trees at both
the transmitter and receiver as well as the complexity involved in
merging the maximum-likelihood decoders tuned to a number of
channels that is polynomial in the blocklength. As such, our work
motivates a search for more practical schemes for feedback
communication over the compound channel with memory.

\appendices

\section{Proof of Proposition \ref{prop:existenceC}} \label{app:ExistenceProof}

The proposition is nearly identical to \cite[Proposition
1]{LapidothTelatar98} except that we replace $I(X^n;Y^n|s_0,\theta)$
by $I(X^n \to Y^n|s_0,\theta)$ and $Q(x^n)$ by $Q(x^n||z^{n-1})$
using results from \cite{Permuter06_feedback_submit} on directed
mutual information and causal conditioning. We first prove the
following lemma, which is needed in the proof of Proposition
\ref{prop:existenceC}. The lemma shows that directed information is
uniformly continuous in $Q_{X^n||Y^{n-1}}$. For our time-invariant
deterministic feedback model, $Q(x^n||y^{n-1})=Q(x^n||z^{n-1})$, and
the lemma holds for any such feedback.

\begin{lemma} \label{lemma:continuityDirectedInfo}
{\it (Uniform continuity of directed information)} If
$Q^1_{X^n||Y^{n-1}}$ and $Q^2_{X^n||Y^{n-1}}$ are two causal
conditioning distributions such that
\begin{equation}
\sum_{x^n\in {\cal X}^n,y^n \in {\cal
Y}^n}|Q^1(x^n||y^{n-1})-Q^2(x^n||y^{n-1})|\leq \Delta \leq
\frac{1}{2}
\end{equation}
then for a fixed $P_{Y^n||X^n}$
\begin{equation}
|\mathcal I(Q^1_{X^n||Y^{n-1}}; P_{Y^n||X^n}) - \mathcal
I(Q^2_{X^n||Y^{n-1}}; P_{Y^n||X^n})|\leq-\Delta \log
\frac{\Delta}{|{\cal Y}^n|^2}.
\end{equation}
\end{lemma}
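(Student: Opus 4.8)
The plan is to adapt the classical argument for uniform continuity of mutual information in the input distribution, $I(X;Y)=H(Y)-H(Y|X)$, to the causal-conditioning setting; the analytic content is entirely standard and the only real work is the bookkeeping forced by the feedback structure. Fix the channel $P_{Y^n||X^n}$ and, for $j=1,2$, let $\mu^j(x^n,y^n)=Q^j(x^n||y^{n-1})P(y^n||x^n)$ be the induced joint law (this is the causal-conditioning chain rule, property~\ref{HaimChainRuleCausalCond}), and write $H_j(\cdot)$ for entropies taken under $\mu^j$. Using the single-letter expansion $\mathcal I(Q^j_{X^n||Y^{n-1}};P_{Y^n||X^n})=H_j(Y^n)-\sum_{i=1}^{n}H_j(Y_i|X^i,Y^{i-1})$ (the same identity used in the converse of Section~\ref{section:converse}), the triangle inequality gives
\[
\bigl|\mathcal I(Q^1;P)-\mathcal I(Q^2;P)\bigr|\le\bigl|H_1(Y^n)-H_2(Y^n)\bigr|+\sum_{i=1}^{n}\bigl|H_1(Y_i|X^i,Y^{i-1})-H_2(Y_i|X^i,Y^{i-1})\bigr|,
\]
and I would bound the two groups of terms separately.

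For the conditional terms the crucial observation is that $\mu^j(y_i|x^i,y^{i-1})=P(y_i|x^i,y^{i-1})$ is independent of $j$, so $H_j(Y_i|X^i,Y^{i-1})=\sum_{x^i,y^{i-1}}\mu^j(x^i,y^{i-1})\,H\bigl(P(\cdot|x^i,y^{i-1})\bigr)$ and only the mixing weights depend on $j$. Since each inner entropy is at most $\log|\mathcal Y|$, the $i$-th term is bounded by $\log|\mathcal Y|$ times $\sum_{x^i,y^{i-1}}|\mu^1(x^i,y^{i-1})-\mu^2(x^i,y^{i-1})|$. To control this $L_1$ discrepancy I would use the truncated chain rule $\mu^j(x^i,y^{i-1})=Q^j(x^i||y^{i-1})P(y^{i-1}||x^{i-1})$ (obtained by summing out the tail coordinates of the two product forms, each summation contributing a factor $1$) together with $\sum_{x_{i+1}^{n}}Q^j(x^n||y^{n-1})=Q^j(x^i||y^{i-1})$ for every $y^{n-1}$ extending $y^{i-1}$; this shows the discrepancy is at most $\sum_{x^n,y^{n-1}}|Q^1(x^n||y^{n-1})-Q^2(x^n||y^{n-1})|=\Delta/|\mathcal Y|\le\Delta$. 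Summing over $i=1,\dots,n$ bounds the second group by $n\Delta\log|\mathcal Y|=\Delta\log|\mathcal Y^n|$.

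For the output-entropy term the same kind of reasoning gives $\sum_{y^n}|P^1_{Y^n}(y^n)-P^2_{Y^n}(y^n)|\le\Delta\le\tfrac{1}{2}$, where $P^j_{Y^n}(y^n)=\sum_{x^n}\mu^j(x^n,y^n)$. Applying the standard continuity estimate for entropy (if $\|p-q\|_1\le\tfrac{1}{2}$ on an alphabet of size $K$ then $|H(p)-H(q)|\le-\|p-q\|_1\log(\|p-q\|_1/K)$) with $K=|\mathcal Y^n|$, and using that $t\mapsto-t\log(t/K)$ is nondecreasing on $(0,K/e)$ together with $\Delta\le\tfrac{1}{2}<|\mathcal Y^n|/e$, yields $|H_1(Y^n)-H_2(Y^n)|\le-\Delta\log(\Delta/|\mathcal Y^n|)$. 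Adding the two bounds gives $\Delta\log|\mathcal Y^n|-\Delta\log(\Delta/|\mathcal Y^n|)=-\Delta\log(\Delta/|\mathcal Y^n|^2)$, which is the claim. I expect the main obstacle to be precisely the marginalization step in the middle paragraph: in the feedback setting $Q^j(x^n||y^{n-1})$ cannot be marginalized the way a product distribution can, so one must verify carefully that the $L_1$ gap of each marginal of $\mu^j$ that enters the estimate is still controlled by $\Delta$; once that is in place, the argument collapses to the familiar scalar entropy estimates.
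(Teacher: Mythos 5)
Your proof is correct and shares the paper's overall skeleton: both split the directed information as $H(Y^n)-H(Y^n\|X^n)$, both bound $\sum_{y^n}|P^1(y^n)-P^2(y^n)|$ by $\Delta$ via the chain rule $P^j(x^n,y^n)=Q^j(x^n||y^{n-1})P(y^n||x^n)$ and then invoke the Csisz\'ar--K\"orner entropy-continuity lemma, and both combine a $-\Delta\log(\Delta/|\mathcal Y^n|)$ term with a $\Delta\log|\mathcal Y^n|$ term. Where you genuinely diverge is the causally-conditioned entropy. The paper handles it in one shot, writing $H^1(Y^n||X^n)-H^2(Y^n||X^n)=\sum_{x^n,y^n}(-P\log P)(Q^1-Q^2)$ and bounding this by the product $\bigl(\sum_{x^n,y^n}-P\log P\bigr)\bigl(\sum_{x^n,y^n}|Q^1-Q^2|\bigr)\le\Delta\log|\mathcal Y^n|$; the first factor there actually equals $\sum_{x^n}H(P(\cdot||x^n))$, so bounding it by $\log|\mathcal Y^n|$ rather than $|\mathcal X|^n\log|\mathcal Y^n|$ is at best loosely justified as written. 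Your per-letter route --- representing $H_j(Y_i|X^i,Y^{i-1})$ as a mixture of the $j$-independent kernels $P(\cdot|x^i,y^{i-1})$ with weights $\mu^j(x^i,y^{i-1})=Q^j(x^i||y^{i-1})P(y^{i-1}||x^{i-1})$, bounding each inner entropy by $\log|\mathcal Y|$, and controlling the $L_1$ gap of the weights by $\Delta$ via the tail-marginalization identity $\sum_{x_{i+1}^n}Q^j(x^n||y^{n-1})=Q^j(x^i||y^{i-1})$ --- reaches the same $\Delta\log|\mathcal Y^n|$ bound but with each step airtight, at the cost of the marginalization bookkeeping you correctly flag as the delicate point. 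In short: same architecture, but your treatment of the conditional-entropy term is a cleaner and more defensible derivation of the bound the paper asserts.
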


\begin{proof}
Directed information can be expressed as a difference between two
terms $I(X^n\to Y^n)=H(Y^n)-H(Y^n||X^n)$. Let us consider the total
variation of $P^1_{Y^n}(\cdot)-P^2_{Y^n}(\cdot)$,
\begin{eqnarray}
\sum_{y^n}|P^1(y^n)-P^2(y^n)|&=&\sum_{y^n}\left|\sum_{x^n} P^1(x^n,y^n) -P^2(x^n,y^n)\right|\nonumber \\
&=&\sum_{y^n}\left|\sum_{x^n} Q^1(x^n||y^{n-1})P(y^n||x^n) -Q^2(x^n||y^{n-1})P(y^n||x^n)\right|\nonumber \\
&\leq&\sum_{y^n}\sum_{x^n} P(y^n||x^n)\left|Q^1(x^n||y^{n-1}) -Q^2(x^n||y^{n-1})\right|\nonumber \\
&\leq&\sum_{y^n}\sum_{x^n} \left|Q^1(x^n||y^{n-1}) -Q^2(x^n||y^{n-1})\right|\nonumber \\
&\leq& \Delta
\end{eqnarray}
By invoking the continuity lemma of entropy  \cite[Theorem 2.7,
p33]{Csiszar81} we get,
\begin{equation}\label{eq_Hdiff1}
|H^1(Y^n)-H^2(Y^n)|\leq -\Delta \log \frac{\Delta}{|{\cal Y}^n|}
\end{equation}
where $H^1(Y^n)$ and $H^2(Y^n)$ are the entropies induced by
$P^1_{Y^n}(\cdot)$ and $P^2_{Y^n}(\cdot)$, respectively. Now let us
consider the difference $H^1(Y^n||X^n)-H^2(Y^n||X^n)$.
\begin{eqnarray}\label{eq_Hdiff2}
\lefteqn{| H^1(Y^n||X^n)-H^2(Y^n||X^n)|}\nonumber \\& = &\left
|\sum_{x^n,y^n}-P^1(x^n,y^n)\log P(y^n||x^n)+P^2(x^n,y^n)\log
P(y^n||x^n)\right| \nonumber \\
& = &\left|\sum_{x^n,y^n} -P(y^n||x^n)Q^1(x^n||y^{n-1})\log
P(y^n||x^n)+P(y^n||x^n)Q^2(x^n||y^{n-1})\log
P(y^n||x^n)\right|\nonumber \\
& = &\left|\sum_{x^n,y^n} -P(y^n||x^n)\log
P(y^n||x^n)\left(Q^1(x^n||y^{n-1})-Q^2(x^n||y^{n-1})\right)\right|\nonumber \\
& \leq &\left|\sum_{x^n,y^n} -P(y^n||x^n)\log
P(y^n||x^n)\left|Q^1(x^n||y^{n-1})-Q^2(x^n||y^{n-1})\right| \right| \nonumber \\
& \leq &\left(\sum_{x^n,y^n} -P(y^n||x^n)\log
P(y^n||x^n)\right) \left( \sum_{x^n,y^n}|Q^1(x^n||y^{n-1})-Q^2(x^n||y^{n-1})|\right)\nonumber \\
&\leq&\log {|{\cal Y}^n|} \Delta
\end{eqnarray}
By combining inequalities (\ref{eq_Hdiff1}) and (\ref{eq_Hdiff2}) we
conclude the proof of the lemma.
\end{proof}

By Lemma \ref{lemma:continuityDirectedInfo}, $I(X^n \to Y^n |
s_0,\theta)$ is uniformly continuous in $Q_{X^n||Z^{n-1}}$. Since
$Q_{X^n||Z^{n-1}}$ is a member of a compact set, the maximum over
$Q_{X^n||Z^{n-1}}$ is attained and $C_n$ is well-defined.

Next, we invoke a result similar to \cite[Lemma
5]{LapidothTelatar98}. Given integers $k$ and $m$ such that $k+m=n$,
input sequences $x_1^k = (x_1,\hdots,x_k)$ and
$x_{k+1}^{n}=(x_{k+1},\hdots,x_n)$ with corresponding output
sequences $y_1^k$ and $y_{k+1}^{n}$, let $Q_{X^n||Z^{n-1}}$ be
defined as
\begin{equation*}
Q(x^n||z^{n-1}) = Q(x_1^k||z_1^{k-1})Q(x_{k+1}^{n}||z_{k+1}^{n-1}).
\end{equation*}
Then
\begin{equation*}
\inf_{s_0,\theta} I(X^n \to Y^n | s_0,\theta) \geq \inf_{s_0,\theta}
I(X_1^k \to Y_1^k | s_0,\theta) + \inf_{s_0,\theta} I(X_{k+1}^{n}
\to Y_{k+1}^{n} | s_k,\theta) - \log|{\cal S}|.
\end{equation*}
This result follows from \cite[Lemma 5]{LapidothTelatar98} and
\cite[Lemma 5]{Permuter06_feedback_submit}.

Finally, if we let $Q(x_1^k||z_1^{k-1})$ and
$Q(x_{k+1}^{n}||z_{k+1}^{n-1})$ achieve the maximizations in $C_k$
and $C_m$, respectively, then we have
\begin{eqnarray*}
n C_n & \geq & \inf_{s_0,\theta} I(X^n \to Y^n | s_0,\theta)
\\
 & \geq & \inf_{s_0,\theta}
I(X_1^k \to Y_1^k | s_0,\theta) + \inf_{s_0,\theta} I(X_{k+1}^{n}
\to
Y_{k+1}^{n} | s_k,\theta) - \log|{\cal S}| \\
 & = & kC_k + mC_m - \log|{\cal S}|,
\end{eqnarray*}
or equivalently,
\begin{equation*}
n \hat{C}_n \geq k \hat{C}_k + m \hat{C}_m.
\end{equation*}
Clearly $\lim_{n \rightarrow \infty} C_n = \lim_{n \rightarrow
\infty} \hat{C}_n$, and by the convergence of a super-additive
sequence, $\lim_{n \rightarrow \infty} \hat{C}_n = \sup_{n}
\hat{C}_n$.

\section{Proof of Theorem \ref{thrm:AchievabilityThetaKnown}}
\label{app:AchievabilityThetaKnown}

The theorem is proved through a collection of results in
\cite{LapidothTelatar98} and \cite{Permuter06_feedback_submit}. Let
$P_{e,w}^{n}(\theta)$ denote the error probability of the ML decoder
when a random code-tree of blocklength $n$ is used at the encoder.
\begin{equation}
P_{e,w}^{n}(\theta) = \sum_{y^n \in {\cal Y}^n : \hat{w} \neq
w}P(y^n||x^n(w,z^{n-1}),\theta)
\end{equation}
The following corollary to \cite[Theorem
8]{Permuter06_feedback_submit} bounds the expected value
$E[P_{e,w}^{n}(\theta)]$, where the expectation is with respect to
the randomness in the code. The result holds for any initial state
$s_0$.
\begin{corollary} \label{cor:boundEProbError}
Suppose that an arbitrary message $w$, $1 \leq w \leq e^{nR}$,
enters the encoder with feedback and that ML decoding tuned to
$\theta$ is employed. Then the average probability of decoding error
over the ensemble of codes is bounded, for any choice of $\rho$, $0
< \rho \leq 1$, by
\begin{equation}
E[P_{e,w}^{n}(\theta)] \leq (e^{nR}-1)^{\rho} \sum_{y^n} \left[
\sum_{x^n} Q(x^n||z^{n-1}) P(y^n||x^n,\theta)^{\frac{1}{1+\rho}}
\right]^{1+\rho}.
\end{equation}
\end{corollary}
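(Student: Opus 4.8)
The plan is to run Gallager's classical random-coding argument, but with codewords replaced by feedback-selected paths through random code-trees and ordinary conditional probabilities replaced by causal-conditioning distributions; this is in essence the content of \cite[Theorem 8]{Permuter06_feedback_submit}, and the corollary follows by specializing that theorem to the finite-state channel with augmented state $(s,\theta)$ and rewriting its bound in the notation of Section~\ref{section:achievability}. Fix the initial state $s_0$, the channel $\theta$, and the transmitted message $w$. Under ML decoding tuned to $\theta$, a decoding error for $w$ can occur only if some competing message $w'\neq w$ satisfies $P(y^n||x^n(w',z^{n-1}),\theta)\geq P(y^n||x^n(w,z^{n-1}),\theta)$, where $z^{n-1}$ is the deterministic feedback path induced by the received $y^{n-1}$. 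Hence the conditional error probability given a received $y^n$ and the realized transmitted path $x^n=x^n(w,z^{n-1})$ is at most the probability that at least one of the $e^{nR}-1$ independently generated competing code-trees places, on the path $z^{n-1}$, a word $x'^n$ of larger causal likelihood.

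Next I would bound this competing probability. For a single $w'$, the code-tree for $w'$ is drawn independently of the code-tree for $w$ and of the channel noise, so the word it places on the realized path $z^{n-1}$ is distributed according to $Q(x'^n||z^{n-1})$; using $\mathbf{1}\{P(y^n||x'^n,\theta)\geq P(y^n||x^n,\theta)\}\leq \bigl(P(y^n||x'^n,\theta)/P(y^n||x^n,\theta)\bigr)^{1/(1+\rho)}$ gives, for $0<\rho\leq1$,
\[
\Pr\bigl(w'\text{ competes}\mid y^n,x^n\bigr)\leq \sum_{x'^n} Q(x'^n||z^{n-1})\Bigl(\tfrac{P(y^n||x'^n,\theta)}{P(y^n||x^n,\theta)}\Bigr)^{1/(1+\rho)}.
\]
I would then apply the union bound together with $\min(1,t)\leq t^{\rho}$ (valid for $t\geq 0$, $\rho\in(0,1]$) to bound the conditional error probability by $(e^{nR}-1)^{\rho}$ times the $\rho$-th power of the right-hand side. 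Finally I would average over $y^n$ and $x^n$; the only structural ingredient needed is the chain rule for causal conditioning, property~\ref{HaimChainRuleCausalCond}), $P(x^n,y^n|s_0,\theta)=Q(x^n||y^{n-1})P(y^n||x^n,s_0,\theta)$ (with $Q(x^n||y^{n-1})=Q(x^n||z^{n-1})$ here), which supplies the averaging weight $Q(x^n||z^{n-1})P(y^n||x^n,\theta)$. Collecting terms, the exponent $1-\rho/(1+\rho)=1/(1+\rho)$ lets the weight on $x^n$ merge with the Chernoff factor, the $x^n$- and $x'^n$-sums combine into $\bigl[\sum_{x^n}Q(x^n||z^{n-1})P(y^n||x^n,\theta)^{1/(1+\rho)}\bigr]^{1+\rho}$, and the claimed bound results. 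Since every step is conditional on $s_0$, the bound holds for any initial state.

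The main obstacle is the feedback-induced coupling: the path $z^{n-1}$ that determines which symbols of each code-tree are actually used is itself a function of the channel output, hence of the transmitted codeword. Making rigorous that, conditioned on $y^n$, (i) the realized path of the transmitted code-tree carries the correct weighting via the causal chain rule and (ii) the realized paths of the $e^{nR}-1$ competing code-trees remain mutually independent, each with law $Q(\cdot||z^{n-1})$, is precisely what \cite[Theorem 8]{Permuter06_feedback_submit} establishes. Thus in practice the proof of the corollary reduces to invoking that theorem for the present ensemble of concatenated code-trees and restating its Gallager-type bound in causal-conditioning notation; the bookkeeping is routine once the independence-conditioning fact is borrowed.
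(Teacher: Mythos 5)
Your proposal is correct and matches the paper's approach: the paper's proof of this corollary is literally the one-line remark that it is identical to the proof of Theorem 8 in the Permuter--Weissman--Goldsmith reference with $P(y^n||x^n)$ replaced by $P(y^n||x^n,\theta)$, which is exactly the reduction you arrive at. Your additional sketch of the Gallager-type argument adapted to code-trees and causal conditioning accurately reflects what that cited theorem establishes.
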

\begin{proof} Identical to \cite[Proof of Theorem
8]{Permuter06_feedback_submit} except that $P(y^n||x^n)$ is replaced
by $P(y^n||x^n,\theta)$. \end{proof}

Next, we let $P_{e}^{n}(s_0,\theta)$ denote the average (over
messages) error probability incurred when a code-tree of blocklength
$n$ is used over channel $\theta$ with initial state $s_0$. Using
Corollary \ref{cor:boundEProbError}, we can bound
$P_{e}^{n}(s_0,\theta)$ as in the following Corollary to
\cite[Theorem 9]{Permuter06_feedback_submit}

\begin{corollary} \label{cor:boundProbErrorFE}
For a compound FSC with $|{\cal S}|$ states where the codewords are
drawn independently according to a given distribution $Q_n \in {\cal
P}({\cal X}^n || {\cal Z}^{n-1})$ and ML decoding tuned to $\theta$
is employed, the average probability of error
$P_{e}^{n}(s_0,\theta)$ for any initial state $s_0 \in {\cal S}$,
channel $\theta \in \Theta$, and $\rho$, $0 \leq \rho \leq 1$ is
bounded as
\begin{equation}
P_{e}^{n}(s_0,\theta) \leq |{\cal S}|
\exp\left(-n(F^n(\rho,Q_n,\theta) -\rho R ) \right)
\end{equation}
where
\begin{multline}
F^{n}(\rho,Q_n, \theta) = \frac{-\rho \log|{\cal S}|}{n}
+ \min_{s_0} E_0(\rho,Q_n, s_0,\theta) \\
E_0(\rho,Q_n, s_0,\theta) = - \frac{1}{n} \log \sum_{y^n} \left[
\sum_{x^n} Q_n P(y^n||x^n,s_0,\theta)^{\frac{1}{1+\rho}}
\right]^{1+\rho}
\end{multline}
\end{corollary}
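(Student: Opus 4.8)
The plan is to obtain Corollary~\ref{cor:boundProbErrorFE} as the finite-state, uniform-in-$(s_0,\theta)$ packaging of the single-message ensemble bound of Corollary~\ref{cor:boundEProbError}, following Gallager's finite-state analysis \cite[ch.~5]{Gallager68} and \cite[Theorem 9]{Permuter06_feedback_submit} step for step, with the channel index $\theta$ carried along as a fixed, passive label. Two observations make this routine. First, the right-hand side of Corollary~\ref{cor:boundEProbError}, stripped of the factor $(e^{nR}-1)^{\rho}$, is by definition $\exp(-n\,E_0(\rho,Q_n,s_0,\theta))$ with $E_0$ as given in the statement. Second, the per-message bound does not depend on $w$, so passing to the message-averaged error probability costs nothing.

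First I would fix a message $w$, an initial state $s_0$, a channel $\theta$, and $\rho\in(0,1]$, and apply Corollary~\ref{cor:boundEProbError} together with $(e^{nR}-1)^{\rho}\le e^{n\rho R}$ to get
\begin{align*}
E[P_{e,w}^{n}(s_0,\theta)] &\le (e^{nR}-1)^{\rho}\sum_{y^n}\Big[\sum_{x^n}Q_n(x^n||z^{n-1})\,P(y^n||x^n,s_0,\theta)^{\frac{1}{1+\rho}}\Big]^{1+\rho} \\
&\le e^{n\rho R}\,e^{-nE_0(\rho,Q_n,s_0,\theta)}.
\end{align*}
Averaging this over the $e^{nR}$ messages yields the same bound for $P_{e}^{n}(s_0,\theta)$, understood as the error probability averaged over messages and, since the codewords are random, over the code ensemble. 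Finally, to make the bound independent of the actual initial state I would pass from $E_0(\rho,Q_n,s_0,\theta)$ to $\min_{s_0}E_0(\rho,Q_n,s_0,\theta)$; the $|{\cal S}|$ prefactor and the $-\rho\log|{\cal S}|/n$ term inside $F^{n}(\rho,Q_n,\theta)$ are exactly those of \cite[Theorem 9]{Permuter06_feedback_submit}, which there arise from the finite-state handling of the unknown initial and final states --- the decomposition $P(y^n||x^n,s_0,\theta)=\sum_{s_n}P(y^n,s_n||x^n,s_0,\theta)$, the inequality $(\sum_i a_i)^{1+\rho}\le |{\cal S}|^{\rho}\sum_i a_i^{1+\rho}$, and a union bound over the $|{\cal S}|$ states --- and they are unaffected by holding $\theta$ fixed. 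This produces precisely $|{\cal S}|\exp(-n(F^{n}(\rho,Q_n,\theta)-\rho R))=|{\cal S}|^{1+\rho}\exp(-n(\min_{s_0}E_0(\rho,Q_n,s_0,\theta)-\rho R))$, and the endpoint $\rho=0$ is trivial, the bound reading $P_{e}^{n}(s_0,\theta)\le|{\cal S}|$.

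I do not anticipate a genuine obstacle: the claim that the proof is ``identical to \cite[Theorem 9]{Permuter06_feedback_submit} with $P(y^n||x^n)$ replaced by $P(y^n||x^n,\theta)$'' holds precisely because $\theta$ never enters a sum, a maximization, or a decoding rule in that argument. The one place deserving care is the passage from the fixed-$s_0$ bound of Corollary~\ref{cor:boundEProbError} to the $\min_{s_0}$ form: one must verify that the $|{\cal S}|$-bookkeeping giving the prefactor and the $\log|{\cal S}|/n$ correction carries over verbatim, and note that the bound is deliberately kept in this loose form --- the raw estimate is already $\exp(-n(\min_{s_0}E_0(\rho,Q_n,s_0,\theta)-\rho R))$ --- so that it composes cleanly with the $\log|{\cal S}|$ corrections accumulated at the sub-block junctions in the concatenation argument used to prove Theorem~\ref{thrm:AchievabilityThetaKnown}.
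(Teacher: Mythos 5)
Your proposal is correct and follows essentially the same route as the paper, whose proof of this corollary is literally a citation of \cite[Theorem 9]{Permuter06_feedback_submit} with $\theta$ carried along as a passive label; you reconstruct the same skeleton (Corollary~\ref{cor:boundEProbError} $\to$ the $(e^{nR}-1)^{\rho}\le e^{n\rho R}$ step $\to$ message averaging $\to$ finite-state bookkeeping) and defer to the same source for the $|{\cal S}|$ factors. The one loose point in your narrative --- that the passage to $\min_{s_0}E_0$ is not free precisely because the decoder's metric does not depend on $s_0$ (it is ML with respect to the state-averaged causal conditioning law, which is where the $|{\cal S}|^{1+\rho}$ arises) --- is one you flag yourself and resolve exactly as the paper does, by citation.
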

\begin{proof}
Identical to \cite[Proof of Theorem 9]{Permuter06_feedback_submit}
except for: (i) we replace $P(y^n||x^n,s_0)$ by
$P(y^n||x^n,s_0,\theta)$, (ii) we consider the error averaged over
all messages (rather than the error for an arbitrary message $w$),
and (iii) we assume a fixed input distribution $Q_{X^n||Z^{n-1}}$
rather than minimizing the error probability over all
$Q_{X^n||Z^{n-1}}$.
\end{proof}

The two results stated above provide us with a bound on the error
probability, however, the bound depends on the channel $\theta$ in
use. Instead, we would like to bound the error probability uniformly
over the class $\Theta$. To do so we cite the following two lemmas
from previous work.

\begin{lemma} \label{lemma:Fsuperadditive}
Given $Q_k \in {\cal P}({\cal X}^k||{\cal Z}^{k-1})$ and $Q_m \in
{\cal P}({\cal X}^m||{\cal Z}^{m-1})$, let $m=n-k$ and define
\begin{equation}
Q_n(x_1^n||z_1^{n-1}) = Q_k(x_1^k||z_1^{k-1})
Q_m(x_{k+1}^{n}||z_{k+1}^{n-1}).
\end{equation}
Then $F^{n}(\rho,Q_n,\theta)$ as defined in Corollary
\ref{cor:boundProbErrorFE} satisfies
\begin{equation}
F^{n}(\rho,Q_n, \theta) \geq \frac{k}{n}F^{k}(\rho,Q_k, \theta) +
\frac{m}{n} F^{m}(\rho,Q_m, \theta).
\end{equation}
\end{lemma}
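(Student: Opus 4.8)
The plan is to prove the stronger \emph{pointwise} bound that for every initial state $s_0$ and every $\rho\in[0,1]$,
\[
\exp\!\big(-nE_0(\rho,Q_n,s_0,\theta)\big)\ \le\ \exp\!\big(-kE_0(\rho,Q_k,s_0,\theta)-mF^{m}(\rho,Q_m,\theta)\big),
\]
after which the lemma follows by taking $-\frac{1}{n}\log$, minimizing over $s_0$, and using $\min_{s_0}E_0(\rho,Q_\ell,s_0,\theta)=F^{\ell}(\rho,Q_\ell,\theta)+\rho\log|{\cal S}|/\ell$ for $\ell=n,k$, which makes the additive $\rho\log|{\cal S}|$ terms cancel. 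This follows the finite-state-channel decomposition of Gallager \cite[ch. 5]{Gallager68} (see also \cite{LapidothTelatar98,Permuter06_feedback_submit}), the only change being that a codeword is replaced by a path through a depth-$n$ code-tree and regular conditioning by causal conditioning. First I would split the block at time $k$ through the intermediate state: iterating $P(y_i,s_i|x_i,s_{i-1},\theta)$ and summing out the intermediate states $s_1^{k-1},s_{k+1}^{n-1}$ gives, for every $(x^n,y^n)$,
\[
P(y^n||x^n,s_0,\theta)=\sum_{s_k}P(y_1^k,s_k||x_1^k,s_0,\theta)\,P(y_{k+1}^n||x_{k+1}^n,s_k,\theta).
\]
Raising this to the power $\frac{1}{1+\rho}\le 1$ and using $(\sum_j c_j)^{1/(1+\rho)}\le\sum_j c_j^{1/(1+\rho)}$, then summing over $x^n$ and factoring $Q_n=Q_k\cdot Q_m$ — which is legitimate because, for a fixed $y^n$, the feedback strings $z_1^{k-1}$ and $z_{k+1}^{n-1}$ are determined deterministic functions of $y^n$, so $Q_k(x_1^k||z_1^{k-1})$ and $Q_m(x_{k+1}^n||z_{k+1}^{n-1})$ are genuine probability vectors in $x_1^k$ and in $x_{k+1}^n$ — yields
\[
\sum_{x^n}Q_nP(y^n||x^n,s_0,\theta)^{\frac{1}{1+\rho}}\ \le\ \sum_{s_k}\tilde A_k(y_1^k,s_k)\,B_m(y_{k+1}^n,s_k),
\]
where $\tilde A_k(y_1^k,s_k)=\sum_{x_1^k}Q_kP(y_1^k,s_k||x_1^k,s_0,\theta)^{\frac{1}{1+\rho}}$ and $B_m(y_{k+1}^n,s_k)=\sum_{x_{k+1}^n}Q_mP(y_{k+1}^n||x_{k+1}^n,s_k,\theta)^{\frac{1}{1+\rho}}$.

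Next I would raise this inequality to the power $1+\rho\ge 1$, pull the sum over $s_k$ out of the power with the power-mean bound $(\sum_{j=1}^{|{\cal S}|}a_j)^{1+\rho}\le|{\cal S}|^{\rho}\sum_j a_j^{1+\rho}$, and sum over $y^n=(y_1^k,y_{k+1}^n)$. The $y_{k+1}^n$-sum of $B_m(\cdot,s_k)^{1+\rho}$ is, by definition of $E_0$ applied to a length-$m$ block with initial state $s_k$, equal to $\exp(-mE_0(\rho,Q_m,s_k,\theta))$, and here is the one quantitative point: since $F^{m}(\rho,Q_m,\theta)=-\rho\log|{\cal S}|/m+\min_sE_0(\rho,Q_m,s,\theta)$, we have $\exp(-mE_0(\rho,Q_m,s_k,\theta))\le|{\cal S}|^{-\rho}\exp(-mF^{m}(\rho,Q_m,\theta))$, and this $|{\cal S}|^{-\rho}$ exactly cancels the $|{\cal S}|^{\rho}$ produced by the power-mean step. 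It remains to bound $\sum_{y_1^k}\sum_{s_k}\tilde A_k(y_1^k,s_k)^{1+\rho}$ by $\exp(-kE_0(\rho,Q_k,s_0,\theta))$; for this I would apply Minkowski's inequality in the variable $s_k$ (the $\ell^{1+\rho}$ triangle inequality, valid since $1+\rho\ge 1$): for fixed $y_1^k$,
\[
\Big(\sum_{s_k}\tilde A_k(y_1^k,s_k)^{1+\rho}\Big)^{\frac{1}{1+\rho}}\ \le\ \sum_{x_1^k}Q_k\Big(\sum_{s_k}P(y_1^k,s_k||x_1^k,s_0,\theta)\Big)^{\frac{1}{1+\rho}}\ =\ \sum_{x_1^k}Q_k\,P(y_1^k||x_1^k,s_0,\theta)^{\frac{1}{1+\rho}},
\]
so raising to $1+\rho$ and summing over $y_1^k$ gives exactly $\exp(-kE_0(\rho,Q_k,s_0,\theta))$. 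Chaining these inequalities gives the displayed pointwise bound; taking $-\frac{1}{n}\log$, then $\min_{s_0}$, and substituting $\min_{s_0}E_0(\rho,Q_k,s_0,\theta)=F^{k}(\rho,Q_k,\theta)+\rho\log|{\cal S}|/k$ and $\min_{s_0}E_0(\rho,Q_n,s_0,\theta)=F^{n}(\rho,Q_n,\theta)+\rho\log|{\cal S}|/n$ makes the $\rho\log|{\cal S}|$ terms cancel and leaves precisely $nF^{n}(\rho,Q_n,\theta)\ge kF^{k}(\rho,Q_k,\theta)+mF^{m}(\rho,Q_m,\theta)$.

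The step I expect to be the main obstacle is the $|{\cal S}|$-bookkeeping. A naive version of the argument — bounding the first block crudely by $\tilde A_k(y_1^k,s_k)\le\sum_{x_1^k}Q_kP(y_1^k||x_1^k,s_0,\theta)^{1/(1+\rho)}$ and bounding the second block's state sum by $\sum_{s_k}(\cdots)\le|{\cal S}|\max_{s_k}(\cdots)$ — loses an extra power of $|{\cal S}|$ and only yields $nF^{n}\ge kF^{k}+mF^{m}-\log|{\cal S}|$. The point is that the $-\rho\log|{\cal S}|/n$ offsets built into the definition of $F^{n}$ in Corollary \ref{cor:boundProbErrorFE} are exactly what is needed to absorb the power-mean loss, provided one uses the sharper $E_0\ge F^{m}+\rho\log|{\cal S}|/m$ estimate on the second block and Minkowski's inequality (rather than a union-type bound) to recombine the intermediate-state sum on the first block back into $P(y_1^k||x_1^k,s_0,\theta)$.
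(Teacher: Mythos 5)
Your proof is correct and follows essentially the same route as the paper, which simply defers to the proof of Lemma 11 in \cite{Permuter06_feedback_submit} (itself Gallager's finite-state parsing argument adapted to causal conditioning): split the block through the intermediate state $s_k$, apply $(\sum_j c_j)^{1/(1+\rho)}\le\sum_j c_j^{1/(1+\rho)}$, absorb the $|{\cal S}|^{\rho}$ power-mean loss into the $-\rho\log|{\cal S}|/n$ offset in the definition of $F^n$, and recombine the first block's state sum via Minkowski. Your accounting of the $|{\cal S}|^{\rho}$ factors is exactly the point of the $F^n$ normalization, and your handling of the factorization $Q_n=Q_kQ_m$ under deterministic feedback is sound.
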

\begin{proof}
Identical to \cite[Proof of Lemma 11]{Permuter06_feedback_submit}
except that we replace $P(y^n||x^n,s_0)$ by
$P(y^n||x^n,s_0,\theta)$.
\end{proof}

\begin{lemma} \label{lemma:boundE}
\begin{equation}
E_0(\rho,Q_n, s_0,\theta) \geq \frac{1}{n} \rho \mathcal
I(Q_n;P_{Y^n||X^n,s_0,\theta}) - \frac{1}{2n} \rho^2 \left(
\log(e|{\cal Y}|) \right)^2
\end{equation}
\end{lemma}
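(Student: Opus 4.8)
\emph{Proof plan for Lemma~\ref{lemma:boundE}.} The idea is to regard $E_0(\rho):=E_0(\rho,Q_n,s_0,\theta)$ as a function of $\rho\in[0,1]$ and to lower bound it by a second-order Taylor expansion about $\rho=0$: the value and the slope at the origin are computed exactly, and the curvature is controlled uniformly. Write $h(y^n,\rho)=\sum_{x^n}Q_n(x^n||y^{n-1})\,P(y^n||x^n,s_0,\theta)^{1/(1+\rho)}$ and $g(\rho)=\sum_{y^n}h(y^n,\rho)^{1+\rho}$, so that $E_0(\rho)=-\frac{1}{n}\log g(\rho)$.

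First I would pin down the value and slope at the origin. At $\rho=0$ the inner bracket occurs to the first power, so by property~\ref{HaimChainRuleCausalCond}, $h(y^n,0)=\sum_{x^n}Q_n(x^n||y^{n-1})P(y^n||x^n,s_0,\theta)=P(y^n|s_0,\theta)$; summing over $y^n$ gives $g(0)=1$ and hence $E_0(0)=0$. Differentiating $\log g$ once and using $g(0)=1$ together with the previous identity, the standard Gallager manipulation gives
\begin{equation*}
\frac{\partial E_0}{\partial\rho}\Big|_{\rho=0}=\frac{1}{n}\sum_{x^n,y^n}Q_n(x^n||y^{n-1})P(y^n||x^n,s_0,\theta)\,\log\frac{P(y^n||x^n,s_0,\theta)}{\sum_{x'^n}Q_n(x'^n||y^{n-1})P(y^n||x'^n,s_0,\theta)},
\end{equation*}
which by property~\ref{HaimDirectedInfoEquivI} equals $\frac{1}{n}I(X^n\to Y^n|s_0,\theta)=\frac{1}{n}\mathcal I(Q_n;P_{Y^n||X^n,s_0,\theta})$.

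The substantive step is the uniform curvature estimate $\partial_\rho^2 E_0(\rho)\ge-\frac{1}{n}\big(\log(e|{\cal Y}|)\big)^2$ for $\rho\in(0,1)$; granting it, the lemma follows at once by writing $E_0(\rho)=E_0(0)+\rho\,\partial_\rho E_0(0)+\int_0^\rho\!\int_0^t\partial_\rho^2 E_0(u)\,du\,dt$ and inserting the three facts above. To obtain the curvature estimate I would follow Gallager's analysis of $E_0$ \cite[ch.~5]{Gallager68}: differentiate $\log g(\rho)$ a second time, and — introducing the $\rho$-tilted laws $\mu_\rho(x^n|y^n)\propto Q_n(x^n||y^{n-1})P(y^n||x^n,s_0,\theta)^{1/(1+\rho)}$ over ${\cal X}^n$ and $\pi_\rho(y^n)\propto h(y^n,\rho)^{1+\rho}$ over ${\cal Y}^n$ — rewrite $-n\,\partial_\rho^2 E_0(\rho)$ as a nonnegative, variance-type functional of the log-likelihood statistic $\log P(y^n||x^n,s_0,\theta)$ under these tilted laws, then bound that functional from above by the square of the effective range of the relevant centered statistic. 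This is precisely the computation carried out in \cite{Permuter06_feedback_submit}, with the ordinary conditional distributions replaced throughout by causal-conditioning distributions, so only the bookkeeping changes. I expect this last step to be the main obstacle: one must collect the terms that survive the second differentiation, recognize them as a sign-definite variance (which fixes the direction of the inequality), and bound that variance by $\big(\log(e|{\cal Y}|)\big)^2$ — the point requiring care being that exploiting the factorization $P(y^n||x^n,s_0,\theta)=\prod_{i=1}^n P(y_i|x^i,y^{i-1},s_0,\theta)$ of the causal-conditioning law is what keeps the bound governed by the single-letter cardinality $|{\cal Y}|$ and the $1/n$ normalization in force.
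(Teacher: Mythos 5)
Your overall architecture --- a second-order Taylor expansion of $E_0(\rho)$ about $\rho=0$, with $E_0(0)=0$, $\partial_\rho E_0(0)=\frac{1}{n}I(X^n\to Y^n|s_0,\theta)$, and a uniform lower bound on the curvature --- is exactly the mechanism behind the result the paper actually invokes, namely \cite[Lemma 2]{LapidothTelatar98}: the paper's entire proof is the observation that that lemma applies verbatim to the causal-conditioning setting because its argument uses only the normalizations $\sum_{x^n}Q(x^n||z^{n-1})=1$ and $\sum_{x^n,y^n}Q(x^n||z^{n-1})P(y^n||x^n)=1$, i.e., one treats the whole block as a single super-letter. Your computations of the value and the slope at the origin are correct.

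The gap is in the curvature constant. Applying the Gallager/Lapidoth--Telatar variance computation to the $n$-block super-channel, whose output alphabet is ${\cal Y}^n$, gives $\partial_\rho^2(nE_0)\ge-(\log(e|{\cal Y}^n|))^2$, hence $\partial_\rho^2E_0\ge-\frac{1}{n}\left(\log(e|{\cal Y}|^n)\right)^2$, which is of order $n(\log|{\cal Y}|)^2$ --- not the $\frac{1}{n}\left(\log(e|{\cal Y}|)\right)^2$ you assert. Your route to the stronger single-letter constant, namely exploiting the factorization $P(y^n||x^n,s_0,\theta)=\prod_{i=1}^nP(y_i|x^i,y^{i-1},s_0,\theta)$ so that the variance of the log-likelihood becomes a sum of $n$ per-letter variances, fails here: under the $\rho$-tilted laws the increments $\log P(y_i|x^i,y^{i-1},s_0,\theta)$ are not independent (the channel has memory through its state, and $Q_n$ is not a per-letter product), so the variance of their sum can grow quadratically rather than linearly in $n$. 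Be aware that the paper itself is inconsistent on precisely this point: the lemma is stated with $\log(e|{\cal Y}|)$ but is applied in Appendix B (step $(b)$ and the resulting expression for $\beta(\epsilon,m,|{\cal Y}|)$) with $\log(e|{\cal Y}^m|)$, and the latter is the version that \cite[Lemma 2]{LapidothTelatar98} delivers and that the downstream error exponent requires. To repair your argument, replace $\log(e|{\cal Y}|)$ by $\log(e|{\cal Y}^n|)$ and drop the independence/factorization claim; the super-letter curvature bound needs no structure beyond the two normalization identities.
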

\begin{proof}
The lemma follows from \cite[Lemma 2]{LapidothTelatar98}, which
holds for a channel $P$ and input distribution $Q$ satisfying
$\sum_{x^n}Q(x^n||z^{n-1})=1$ and
$\sum_{x^n,y^n}Q(x^n||z^{n-1})P(y^n||x^n)=1$.
\end{proof}

We now follow the technique in \cite{LapidothTelatar98} by using
Lemmas \ref{lemma:Fsuperadditive} and \ref{lemma:boundE} to bound
the error probability independent of both $s_0$ and $\theta$. For a
given rate $R < C$, let $\epsilon=(C-R)/2$ and pick $m$ in such a
way that $\hat{C}_m \geq R + \epsilon$. Then
\begin{equation}
\max_{Q_{X^m||Z^{m-1}}} \inf_{s_0,\theta} \frac{1}{m} \mathcal
I(Q_{X^m||Z^{m-1}};P_{Y^m||X^m,s_0,\theta}) - \frac{\log|{\cal
S}|}{m} \geq R + \epsilon .
\end{equation}
Let $Q_m^* \in {\cal P}({\cal X}^m||{\cal Z}^{m-1})$ be the input
distribution that achieves the supremum in $\hat{C}_m$, i.e.,
\begin{equation}
\inf_{s_0,\theta} \frac{1}{m} \mathcal I(Q_m^*
;P_{Y^m||X^m,s_0,\theta}) - \frac{\log|{\cal S}|}{m} \geq R +
\epsilon \label{eqn:Qm*}
\end{equation}
Next, we use $Q_m^*$ to define a distribution $Q_{Nm} \in {\cal
P}({\cal X}^{Nm}||{\cal Z}^{Nm-1})$ for a sequence of length $Nm$,
$N \geq 1$, as follows.
\begin{eqnarray}
Q(x^{Nm}||z^{Nm{-}1}) & \!\!\!\!\!\triangleq \!\!\!&
Q_m^*(x_1^{m}||z_1^{m{-}1}) \times
Q_m^*(x_{m{+}1}^{2m}||z_{m{+}1}^{2m{-}1}) \times \hdots \times
Q_m^*(x_{(N{-}1)m{+}1}^{Nm}||z_{(N{-}1)m{+}1}^{Nm{-}1}) \\
 & \!\!\!\!\!=\!\!\! & \prod_{i=1}^{N} Q_m^*(x_{(i-1)m+1}^{im}||z_{(i-1)m+1}^{im-1})
\end{eqnarray}

For this new input distribution and sequence of length $Nm$, we can
bound the error exponent
\begin{equation}
F^{Nm}(\rho,Q_{Nm},\theta) -\rho R
\end{equation}
as shown below.
\begin{eqnarray}
  & \stackrel{(a)}
\geq &  F^{m}(\rho,Q_m^*,\theta) - \rho R \\
  & = & \min_{s_0} E_0(\rho,Q_m^*, s_0,\theta) - \rho \left(R + \frac{\log|{\cal S}|}{m}
 \right) \\
  & \stackrel{(b)} \geq &  \min_{s_0} \frac{1}{m} \rho
 I(Q_m^*;P_{Y^m||X^m,s_0,\theta}) - \frac{1}{2m} \rho^2
\left( \log(e|{\cal Y}^m|) \right)^2 - \rho \left(R +
\frac{\log|{\cal S}|}{m}
 \right) \\
  & \geq & \rho \left( \inf_{s_0,\theta} \frac{1}{m}
 I(Q_m^*;P_{Y^m||X^m,s_0,\theta}) - R -  \frac{\log|{\cal S}|}{m}
 \right) - \frac{1}{2m} \rho^2
\left( \log(e|{\cal Y}^m|) \right)^2 \\
  & \stackrel{(c)} \geq &  \rho \epsilon - \frac{1}{2m} \rho^2
\left( \log(e|{\cal Y}^m|) \right)^2
\end{eqnarray}
where $(a)$ is due to Lemma \ref{lemma:Fsuperadditive}, $(b)$
follows from Lemma \ref{lemma:boundE}, and $(c)$ follows from
(\ref{eqn:Qm*}). As in \cite{LapidothTelatar98}, we can maximize the
lower bound on the error exponent by setting
$\rho=\min(1,m\epsilon/\left( \log(e|{\cal Y}^m|) \right)^2)$. With
this choice of $\rho$ we have
\begin{equation}
F^{Nm}(\rho,Q_{Nm},\theta) -\rho R \geq \begin{cases} m
\epsilon^2/(2 \log(e|{\cal Y}|^m)^2) & \epsilon < \frac{1}{m}
(\log(e|{\cal Y}|^m) )^2
\\ \epsilon - \frac{1}{2m} \left( \log(e|{\cal Y}|^m) \right)^2 &
\mbox{otherwise}. \end{cases} \label{eqn:LowerBoundFNm}
\end{equation}
Theorem \ref{thrm:AchievabilityThetaKnown} follows by combining
(\ref{eqn:LowerBoundFNm}) with the result in Corollary
\ref{cor:boundProbErrorFE} (for blocklength $Nm$).

\section{Proof of Lemma \ref{lemma:CompoundFSCIsStronglySep}}
\label{app:CompoundFSCISStronglySep}

To prove the lemma, we must first establish two equalities relating
the channel causal conditioning distribution
$P(y^n||x^n,s_0,\theta)$ to the channel probability law
$P(y_i,s_i|x_i,s_{i-1},\theta)$. The following set of equalities
hold.
\begin{eqnarray}
P(y^n,x^n|s_0,\theta) & = & \sum_{s^n \in {\cal S}^n}
P(y^n,x^n,s^n|s_0,\theta) \\
 & \stackrel{(a)} = & \sum_{s^n \in {\cal S}^n} P(x^n||y^{n-1},s^{n-1},s_0,\theta)
 P(y^n,s^n||x^n,s_0,\theta) \\
 & \stackrel{(b)} = & \sum_{s^n \in {\cal S}^n} P(x^n||y^{n-1},s_0,\theta)
 P(y^n,s^n||x^n,s_0,\theta) \\
 & = & P(x^n||y^{n-1},s_0,\theta) \sum_{s^n \in {\cal S}^n}
 P(y^n,s^n||x^n,s_0,\theta) \label{eqn:ChainRuleCC}
\end{eqnarray}
where $(a)$ is due to \cite[Lemma 2]{Permuter06_feedback_submit} and
$(b)$ follows from our assumption that the input distribution $x^n$
does not depend on the state sequence $s^{n-1}$. By the chain rule
for causal conditioning \cite[Lemma 1]{Permuter06_feedback_submit},
(\ref{eqn:ChainRuleCC}) implies that
\begin{equation}
P(y^n||x^n,s_0,\theta) = \sum_{s^n \in {\cal S}^n}
P(y^n,s^n||x^n,s_0,\theta). \label{eqn:EquivFederLapidoth63}
\end{equation}
Also,
\begin{eqnarray}
P(y^n,s^n||x^n,s_0,\theta) & = & \prod_{i=1}^n
P(y_i,s_i|x^{i-1},y^{i-1},s^{i-1},\theta) \\
 & \stackrel{(c)} = & \prod_{i=1}^n P(y_i,s_i|x_i,s_{i-1},\theta)
 \label{eqn:EquivFederLapidoth64}
\end{eqnarray}
where $(c)$ follows from the definition of the compound finite-state
channel. Having established equations
(\ref{eqn:EquivFederLapidoth63}) and
(\ref{eqn:EquivFederLapidoth64}), Lemma
\ref{lemma:CompoundFSCIsStronglySep} follows immediately from
\cite[Lemma 12]{FederLapidoth98}, where the conditional probability
$P(y_i,s_i|x_i,s_{i-1},\theta)$ is quantized and the quantization
cells are represented by channels $\{\theta_1^{(n)}, \hdots,
\theta_{K(n)}^{(n)} \}$. The proof of our result differs only in
that the upper bound on the error exponents in the family is given
by $\mu = 1 + \log|\mathcal Y|$.

\section{Proof of Lemmas \ref{lemma:QkQm_inequality},
\ref{lemma:capacity_stationary_ergodic_Markovian} and
\ref{lemma:directed0iff}}  \label{app:LemmasForIFF}

The proof of Lemma \ref{lemma:QkQm_inequality} is based on an
identity that is given by Kim in \cite[eq. (9)]{Kim07_feedback}:
\begin{equation}
I(X^n\to Y^n)=\sum_{i=1}^n I(X_i;Y_i^n|X^{i-1},Y^{i-1})
\end{equation}

{\it Proof of Lemma \ref{lemma:QkQm_inequality}:} Using Kim's
identity we have
\begin{eqnarray}
I(X^n\to Y^n)&=&\sum_{i=1}^n I(X_i;Y_i^n|X^{i-1},Y^{i-1})\nonumber \\
&=&\sum_{i=1}^k I(X_i;Y_i^n|X^{i-1},Y^{i-1})+\sum_{i=k+1}^n I(X_i;Y_i^n|X^{i-1},Y^{i-1})\nonumber \\
&\geq&\sum_{i=1}^k I(X_i;Y_i^k|X^{i-1},Y^{i-1})+\sum_{i=k+1}^n I(X_i;Y_i^n|X^{i-1},Y^{i-1})\nonumber \\
&=&I(X^k\to Y^k)+\sum_{i=k+1}^n I(X_i;Y_i^n|X^{i-1},Y^{i-1}).
\end{eqnarray}
Now we bound the sum in the last equality,
\begin{eqnarray}
\sum_{i=k+1}^n I(X_i;Y_i^n|X^{i-1},Y^{i-1})&=&\sum_{i=k+1}^n H(X_i|X^{i-1},Y^{i-1})-H(X_i|X^{i-1},Y^{i-1},Y_i^n) \nonumber \\
&\stackrel{(a)} =&\sum_{i=k+1}^n H(X_i|X_{k+1}^{i-1},Y_{k+1}^{i-1})-H(X_i|X^{i-1},Y^{i-1},Y_i^n)\nonumber \\
&\geq& \sum_{i=k+1}^n H(X_i|X_{k+1}^{i-1},Y_{k+1}^{i-1})-H(X_i|X_{k+1}^{i-1},Y_{k+1}^{i-1},Y_i^n)\nonumber \\
&=&I(X_{k+1}^n\to Y_{k+1}^n)
\end{eqnarray}
where $(a)$ follows from the assumption that $Q(x^n||z^{n-1}) =
Q(x_1^k||z_1^{k-1})Q(x_{k+1}^{n}||z_{k+1}^{n-1})$. \hfill \QED

{\it Proof of Lemma
\ref{lemma:capacity_stationary_ergodic_Markovian}}: The proof
consists of two parts. In the first part we show that
$nC_n^{Markovian}$ is sup-additive and therefore $\lim_{n\to \infty}
C_n^{Markovian}=\sup_n C_n^{Markovian}$.  In the second part we
prove the capacity of the family of stationary and uniformly ergodic
Markovian channels by showing that
\begin{equation}
\lim_{n \rightarrow \infty} C_n=\lim_{n \rightarrow \infty}
C_n^{Markovian}.
\end{equation}
where $C_n$ is defined in (\ref{eqn:C_nDefinition}).

{\it First part:} We show that the sequence $C_n^{Markovian}$ is
sup-additive and therefore the limit exists. Let integers $k$ and
$m$ be such that $k+m=n$ and denote input distributions
$Q(x^n||z^{n-1}), Q(x_1^k||z_1^{k-1})$, and
$Q(x_{k+1}^{n}||z_{k+1}^{n-1})$ in shortened forms as $Q_n,Q_k$, and
$Q_m$. We have,
\begin{eqnarray}
nC_n^{Markovian}&=&\max_{Q_n} \inf_{\theta} I(X^n\to Y^n |\theta)\nonumber \\
&\stackrel{(a)}{\geq}&\max_{Q_kQ_m} \inf_{\theta} I(X^n\to Y^n |\theta)\nonumber \\
&\stackrel{(b)}{\geq}&\max_{Q_kQ_m} \inf_{\theta} \left[I(X^k\to Y^k |\theta)+I(X_{k+1}^n\to Y_{k+1}^n |\theta)\right]\nonumber \\
&\stackrel{}{\geq}&\max_{Q_kQ_m} \left[\inf_{\theta} I(X^k\to Y^k |\theta)+\inf_{\theta} I(X_{k+1}^n\to Y_{k+1}^n |\theta)\right]\nonumber \\
&\stackrel{=}{}&\max_{Q_k} \inf_{\theta} I(X^k\to Y^k |\theta)+\max_{Q_m}\inf_{\theta} I(X_{k+1}^n\to Y_{k+1}^n |\theta)\nonumber \\
&\stackrel{(c)}=&\max_{Q_k} \inf_{\theta} I(X^k\to Y^k |\theta)+\max_{Q(x^m||z^{m-1})}\inf_{\theta} I(X^m\to Y^m |\theta)\nonumber \\
&\stackrel{=}{}&kC_k^{Markovian}+mC_m^{Markovian},
\end{eqnarray}
where $(a)$ follows by restricting the maximization to causal
conditioning probabilities of the product form
$Q(x^n||z^{n-1})=Q(x_1^k||z_1^{k-1})Q(x_{k+1}^{n}||z_{k+1}^{n-1})$,
$(b)$ follows from Lemma \ref{lemma:QkQm_inequality}, and $(c)$
follows from stationarity of the channel.

{\it Second part:} We show that $\lim_{n \rightarrow \infty}
C_n=\lim_{n \rightarrow \infty} C_n^{Markovian}$. Due to Lemma 5 in
\cite{Permuter06_feedback_submit}, $| I(X^n\to Y^n|\theta) -
I(X^n\to Y^n|S_0,\theta)| \leq \log |\mathcal S|$, therefore it is
enough to prove that
\begin{equation}\label{eqn:diff_S0_s0}
\lim_{n \rightarrow \infty}\frac{1}{n} \left [
\max_{Q_{X^n||Z^{n-1}}} \inf_{ \theta}  I(X^n\to Y^n|S_0,\theta) -
\max_{Q_{X^n||Z^{n-1}}} \inf_{ \theta,s_0} I(X^n\to Y^n,
|s_0,\theta) \right ]=0.
\end{equation}
The difference in (\ref{eqn:diff_S0_s0}) is always positive, hence
it is enough to upper bound it by an expression that goes to zero as
$n\to \infty$. Again by Lemma 5 in \cite{Permuter06_feedback_submit}
we can bound the second term in (\ref{eqn:diff_S0_s0}),
\begin{eqnarray}\label{e_bound_directed}
\lefteqn{\max_{Q_{X^n||Z^{n-1}}} \inf_{ \theta,s_0} I(X^n\to Y^n,
|s_0,\theta)}\nonumber \\
&\geq& \max_{Q_{X^n||Z^{n-1}}} \inf_{ \theta,s_0} I(X^n\to Y^n,
|S_k,s_0,\theta) -\log |\mathcal S|\nonumber \\
&\stackrel{(a)}{\geq}& \max_{Q_{X_k^{n}||Z_k^{n-1}}} \inf_{
\theta,s_0} I(X_k^n\to
Y_k^n, |S_k,s_0,\theta) -\log |\mathcal S|,\nonumber \\
 &\stackrel{(b)}{=}&
\max_{Q_{X^{n-k}||Z^{n-k-1}}} \inf_{ \theta, s_{-k}} I(X^{n-k}\to
Y^{n-k}, |S_0,s_{-k},\theta) -\log |\mathcal S|,
\end{eqnarray}
where (a) holds for every $k>1$ and is due to Lemma
\ref{lemma:QkQm_inequality} and (b) holds by the stationarity of the
channel.
Hence, (\ref{e_bound_directed}) implies that we can bound the
difference,
\begin{eqnarray}\label{e_bound_diff3}
\lefteqn{\max_{Q_{X^n||Z^{n-1}}} \inf_{ \theta}  I(X^n\to
Y^n|S_0,\theta) - \max_{Q_{X^n||Z^{n-1}}} \inf_{ \theta,s_0}
I(X^n\to Y^n, |s_0,\theta)}\nonumber \\
&\stackrel{(a)}{\leq}& \left(k\log|\mathcal
Y|+\max_{Q_{X^{n-k}||Z^{n-k-1}}} \inf_{ \theta}  I(X^{n-k}\to
Y^{n-k}|S_0,\theta)\right)\nonumber
\\&&-\left(\max_{Q_{X^{n-k}||Z^{n-k-1}}} \inf_{ \theta,s_{-k}}
I(X_1^{n-k}\to Y^{n-k}, |S_0,s_{-k},\theta) -\log |\mathcal
S|\right), \nonumber \\
&\stackrel{(b)}{\leq}& k\log|\mathcal Y|+\epsilon (n-k)\log|\mathcal
Y| +\log |\mathcal S|.
\end{eqnarray}
Inequality (a) is due to the fact that $I(X^n\to Y^n)\leq
k\log|\mathcal Y|+I(X^{n-k}\to Y^{n-k})$ and due to
(\ref{e_bound_directed}). Inequality (b) holds since for a uniformly
ergodic family of channels,
$|P(s_0|s_{-k},\theta)-P(s_0|\theta)|\leq \epsilon$ for all $s_0\in
\mathcal S$ implies that for any input distribution
$Q_{X^{n-k}||Z^{n-k-1}}$ and any channel $\theta$,
$$ |I(X^{n-k}\to Y^{n-k}|\theta,S_0)-I(X_1^{n-k}\to Y^{n-k},
|S_0,s_{-k},\theta)|\leq \epsilon (n-k) \log |\mathcal Y|$$ After
dividing (\ref{e_bound_diff3}) by $n$, and since $\epsilon$ can be
arbitrarily small and $k$ is fixed for a given $\epsilon$, then
(\ref{eqn:diff_S0_s0}) holds.

\hfill \QED

{\it Proof of Lemma \ref{lemma:directed0iff}}: From the assumption
of the lemma we have %
\begin{equation}\label{eqn:directed_as_divergence}
\sum_{x^n,y^n}Q(x^n)P(y^n||x^n)\log
\frac{Q(x^n)P(y^n||x^n)}{P(y^n)Q(x^n)}=0.
\end{equation}
By assuming a uniform input distribution, $Q(x^n)=\frac{1}{|\mathcal
X|^n}$ and by using the fact that if the Kullback Leibler divergence
$D(p||q)\triangleq \sum_{x\in \mathcal X} p(x)\log
\frac{p(x)}{q(x)}$ is zero, then $p(x)=q(x)$ for all $x\in \mathcal
X$, we get that (\ref{eqn:directed_as_divergence}) implies that
$P(y^n||x^n)=P(y^n)$ for all $x^n\in \mathcal X^n,y^n\in \mathcal
Y^n$. It follows that
\begin{eqnarray}
\max_{Q_{X^n||Y^{n-1}}} I(X^n\to Y^n)
 & = & \max_{Q_{X^n||Y^{n-1}}} E\left[ \log
\frac{P(Y^n||X^n)}{P(Y^n)}\right] \\
 & = & \max_{Q_{X^n||Y^{n-1}}} E[0]=0.
\end{eqnarray}
\hfill \QED

\section{Proof of Lemmas \ref{l_DoNotLooseMuch} and
\ref{l_estimate_channel}} \label{app:LemmasForMemorylessCompound}

{\it Proof of Lemma \ref{l_DoNotLooseMuch}:} The proof is based on
the fact that $\mathcal I(Q_X,P_{Y|X})$ is uniformly continuous in
$P_{Y|X}$, namely for any $Q_X$,
\begin{equation}\label{e_uniform_cont_mutual}
|\mathcal I(Q_X,P_{Y|X,\theta_1})-\mathcal
I(Q_X,P_{Y|X,\theta_2})|\leq \tau(\Delta) ,
\end{equation}
where $\tau(\Delta)\to 0$ as $\Delta\to 0$ (The uniform continuity
of mutual information is a straightforward result of the uniform
continuity of entropy \cite[Theorem 2.7]{Csiszar81}). We have,
\begin{eqnarray}\label{e_bounding_diff}
\lefteqn{|\mathcal I(Q_X^{2},P_{Y|X,\theta_1})-\mathcal
I(Q_X^{1},P_{Y|X,\theta_1})|}\nonumber \\
&=& |\mathcal I(Q_X^{2},P_{Y|X,\theta_1})-\mathcal
I(Q_X^{2},P_{Y|X,\theta_2})+\mathcal
I(Q_X^{2},P_{Y|X,\theta_2})-\mathcal
I(Q_X^{1},P_{Y|X,\theta_1})|\nonumber\\
&\leq& \tau(\Delta)+ |\mathcal I(Q_X^{2},P_{Y|X,\theta_2})-\mathcal
I(Q_X^{1},P_{Y|X,\theta_1})|,
\end{eqnarray}
where the last inequality is due to (\ref{e_uniform_cont_mutual}).
We conclude the proof by bounding the last term in
(\ref{e_bounding_diff}) by $\tau(\Delta)$, which implies that if we
let $\eta(\Delta)=2\tau(\Delta)$ then (\ref{e_l_diff}) holds.
\begin{eqnarray}
\lefteqn{\mathcal I(Q_X^{2},P_{Y|X,\theta_2})-\mathcal
I(Q_X^{1},P_{Y|X,\theta_1})}\nonumber \\
&\leq&\mathcal I(Q_X^{2},P_{Y|X,\theta_2})-\mathcal
I(Q_X^{2},P_{Y|X,\theta_1}) \nonumber \\
&\leq& \tau(\Delta).
\end{eqnarray}
Similarly, we have ${\mathcal I(Q_X^{1},P_{Y|X,\theta_1})-\mathcal
I(Q_X^{2},P_{Y|X,\theta_2})} \leq \tau(\Delta)$, and therefore
\begin{eqnarray}\label{e_bounding_diff2}
|\mathcal I(Q_X^{2},P_{Y|X,\theta_2})-\mathcal
I(Q_X^{1},P_{Y|X,\theta_1})| &\leq& \tau(\Delta).
\end{eqnarray}
\hfill \QED

{\it Proof of Lemma \ref{l_estimate_channel}:} The channel $
P_{Y|X,\hat \theta}$ is chosen by finding the conditional empirical
distribution induced by an input sequence consisting of
$\frac{M}{|\mathcal X|}$ copies of each symbol of the alphabet
$\mathcal X$. We estimate the conditional distribution $P_{Y|a}$
separately for each $a\in \mathcal X$. We insert $x=a$ for
$m=\frac{M}{|\mathcal X|}$ uses of the channel and we estimate the
channel distribution when the input is $x=a$ as the type of the
output which is denoted as $P_{Y^m|a}$. From Sanov's theorem (cf.
\cite[Theorem 12.4.1]{CovThom}) we have that the probability that
type $P_{Y^m|a}$ will be at $L_1$-distance larger than
$\epsilon_1=\frac{\epsilon}{|\mathcal X|}$ from $P_{Y|a}$ is upper
bounded by
\begin{equation}
\Pr\{||P_{Y^m|a}-P_{Y|a}||_1\geq \epsilon_1\}\leq (m+1)^{|\mathcal
Y|}\text {exp}(-m\min_{P_{Y}: ||P_{Y}-P_{Y|a})||_1\geq \epsilon_1}
D(P_{Y}||P_{Y|a}),
\end{equation}
where $D(P_{Y}||P_{Y|a})=\sum_{y\in \mathcal Y} P_{Y}(y)\log
\frac{P_{Y}(y)}{P_{Y|a}(y|a)}$ denotes the divergence between the
two distributions. Using Pinsker's inequality \cite[Lemma
12.6.1]{CovThom} we have that
\begin{equation}
\min_{P_{Y}:||P_{Y}-P_{Y|a})||_1\geq \epsilon_1}
D(P_{Y}||P_{Y|a})\geq \frac{\epsilon_1^2}{2}
\end{equation}
and therefore,
\begin{equation}
\Pr\{||P_{Y^m}-P_{Y|a}||_1\geq \epsilon_1\}\leq (m+1)^{|\mathcal
Y|}\exp\left(-m\frac{\epsilon_1^2}{2}\right)
\end{equation}
The term $(m+1)^{|\mathcal Y|}\text {exp}(-m\frac{\epsilon_1^2}{2})$
goes to zero as $m$ goes to infinity for $\epsilon_1>0$ and
therefore, for any $\frac{\delta}{|\mathcal X|}>0$ we can find an
$m$ such that $(m+1)^{|\mathcal Y|}\text
{exp}(-m\frac{\epsilon_1^2}{2})\leq \frac{\delta}{|\mathcal X|}$.
Finally we have,
\begin{equation}
\Pr\{\Delta > \epsilon \}\leq \Pr\left \{\bigcup_{a\in \mathcal
X}||P_{Y|a,\hat \theta}-P_{Y|a}||_1> \frac{\epsilon}{|\mathcal
X|}\right\} \leq {|\mathcal X|} \frac{\delta}{|\mathcal X|}
\end{equation}
where the inequality on the right is due to the union bound. \hfill
\QED

\section*{Acknowledgments}
The authors would like to thank their advisors - Anthony Ephremides
and Tsachy Weissman - as well as Prakash Narayan for useful
discussions on this topic and Andrea Goldsmith for organizing the
Roundtable Research Discussion at ISIT06 which led to the conception
of this work.


\bibliographystyle{IEEEtran}

\begin{thebibliography}{10}
\providecommand{\url}[1]{#1} \csname url@rmstyle\endcsname
\providecommand{\newblock}{\relax} \providecommand{\bibinfo}[2]{#2}
\providecommand\BIBentrySTDinterwordspacing{\spaceskip=0pt\relax}
\providecommand\BIBentryALTinterwordstretchfactor{4}
\providecommand\BIBentryALTinterwordspacing{\spaceskip=\fontdimen2\font
plus \BIBentryALTinterwordstretchfactor\fontdimen3\font minus
  \fontdimen4\font\relax}
\providecommand\BIBforeignlanguage[2]{{%
\expandafter\ifx\csname l@#1\endcsname\relax
\typeout{** WARNING: IEEEtran.bst: No hyphenation pattern has been}%
\typeout{** loaded for the language `#1'. Using the pattern for}%
\typeout{** the default language instead.}%
\else \language=\csname l@#1\endcsname \fi #2}}

\bibitem{Blackwell59Compound}
D.~Blackwell, L.~Breiman, and A.~Thomasian, ``The capacity of a
class of
  channels,'' \emph{Ann. Math. Statist}, vol.~30, p. 1229, 1959.

\bibitem{Wolfowitz59}
J.~Wolfowitz, ``Simultaneous channels,'' \emph{Archive for Rational
Mechanics
  and Analysis}, vol.~4, pp. 371--386, 1959.

\bibitem{Wolfowitz64}
------, \emph{Coding Theorems of Information Theory}, 2nd~ed.\hskip 1em plus
  0.5em minus 0.4em\relax Springer, 1964.

\bibitem{LapidothTelatar98}
A.~Lapidoth and I.~E. Telatar, ``The compound channel capacity of a
class of
  finite-state channels,'' \emph{IEEE Trans. Inform. Theory}, vol.~44, pp.
  973--983, May 1998.

\bibitem{Lapidoth98Narayan}
A.~Lapidoth and P.~Narayan, ``Reliable communication under channel
  uncertainty,'' \emph{IEEE Trans. Inform. Theory}, vol.~44, 1998.

\bibitem{Gallager68}
R.~G. Gallager, \emph{Information Theory and Reliable
Communication}.\hskip 1em
  plus 0.5em minus 0.4em\relax New York: Wiley, 1968.

\bibitem{goldsmith96capacity}
A.~J. Goldsmith and P.~P. Varaiya, ``Capacity, mutual information,
and coding
  for finite-state {M}arkov channels,'' \emph{IEEE Trans. on Info. Theory},
  vol.~42, pp. 868--886, 1996.

\bibitem{Mushkin89}
M.~Mushkin and I.~Bar-David, ``Capacity and coding for the {G}ilbert
{E}lliot
  channel,'' \emph{IEEE Trans. Inform. Theory}, vol.~35, pp. 1277--1290, 1989.

\bibitem{FederLapidoth98}
M.~Feder and A.~Lapidoth, ``Universal decoding for channels with
memory,''
  \emph{IEEE Trans. Inform. Theory}, vol.~44, no.~5, September 1998.

\bibitem{Massey90}
J.~Massey, ``Causality, feedback and directed information,''
\emph{Proc. Int.
  Symp. Information Theory Application (ISITA-90)}, pp. 303--305, 1990.

\bibitem{Kramer03}
G.~Kramer, ``Capacity results for the discrete memoryless network,''
\emph{IEEE
  Trans. Inform. Theory}, vol.~49, pp. 4--21, 2003.

\bibitem{Tatikonda00}
S.~Tatikonda, ``Control under communication constraints,''
\emph{Ph.D.
  disertation, MIT, Cambridge, MA}, 2000.

\bibitem{Tatikonda06}
\BIBentryALTinterwordspacing S.~Tatikonda and S.~Mitter, ``The
capacity of channels with feedback,''
  September 2006. [Online]. Available:
  \url{http://arxiv.org/PS_cache/cs/pdf/0609/0609139.pdf}
\BIBentrySTDinterwordspacing

\bibitem{Permuter06_feedback_submit}
H.~H. Permuter, T.~Weissman, and A.~J. Goldsmith, ``Finite state
channels with
  time-invariant deterministic feedback,'' Sep 2006, submitted to IEEE Trans.
  Inform. Theory. Availble at http://arxiv.org/abs/cs/0608070v1.

\bibitem{Kim07_feedback}
Y.~Kim, ``A coding theorem for a class of stationary channels with
feedback,''
  Jan 2007, submitted to IEEE Trans. Inform. Theory. Availble at
  arxiv.org/cs.IT/0701041.

\bibitem{TchamkertenTelatar06}
A.~Tchamkerten and I.~Telatar, ``Variable length coding over an
unknown
  channel,'' \emph{IEEE Trans. Inform. Theory}, vol.~52, no.~5, pp. 2126--2145,
  2006.

\bibitem{Alajaji95}
F.~Alajaji, ``Feedback does not increase the capacity of discrete
channels with
  additive noise,'' \emph{IEEE Trans. Inform. Theory}, vol.~41, pp. 546--549,
  March 1995.

\bibitem{Pinsker_feedback_double}
M.~Pinsker, talk delivered at the {S}oviet {I}nformation {T}heory
{M}eeting (no
  abstract published), 1969.

\bibitem{Ebert}
P.~Ebert, ``{The capacity of the Gaussian channel with feedback },''
\emph{Bell
  Syst. Tech. J.}, pp. 1705--1712, 1970.

\bibitem{Cover89}
T.~M. Cover and S.~Pombra, ``Gaussian feedback capacity,''
\emph{IEEE Trans.
  Inform. Theory}, vol.~35, no.~1, pp. 37--43, 1989.

\bibitem{Tchamkerten07}
A.~Tchamkerten, private communication, 2007.

\bibitem{Csiszar81}
I.~Csisz{\'a}r and J.~K{\"o}rner, \emph{Information Theory: Coding
Theorems for
  Discrete Memoryless Systems}.\hskip 1em plus 0.5em minus 0.4em\relax New
  York: Academic, 1981.

\bibitem{CovThom}
T.~Cover and J.~A. Thomas, \emph{Elements of Information
Theory}.\hskip 1em
  plus 0.5em minus 0.4em\relax Wiley, 1991.

\end{thebibliography}

\end{document}